\newtheorem{theorem}{Theorem}
\newtheorem{proposition}[theorem]{Proposition}
\newtheorem{definition}{Definition}[section]
\newtheoremstyle{example}{}{}{}{}{\bfseries}{\smallskip}{\newline}{}
\theoremstyle{example}
\newtheorem{example}{Example}
\DeclareMathOperator*{\argmax}{arg\,max}
\newcommand{\blind}{1}
\begin{document}

\def\spacingset#1{\renewcommand{\baselinestretch}%
{#1}\small\normalsize} \spacingset{1}


\if1\blind
{
  \title{\bf Improving Bridge estimators via $f$-GAN}
  \author{Hanwen Xing\thanks{
    The author would like to thank Prof. Geoff Nicholls and Prof. Kate Lee
for helpful and constructive discussions.}\hspace{.2cm}\\
    Department of Statistics, University of Oxford, UK\\}
  \maketitle
} \fi

\if0\blind
{
  \bigskip
  \bigskip
  \bigskip
  \begin{center}
    {\LARGE\bf Improving Bridge estimators via $f$-GAN}
\end{center}
  \medskip
} \fi

\bigskip
\begin{abstract}
Bridge sampling is a powerful Monte Carlo method for estimating ratios of normalizing constants. Various methods have been introduced to improve its efficiency. These methods aim to increase the overlap between the densities by applying appropriate transformations to them without changing their normalizing constants. In this paper, we first give a new estimator of the asymptotic relative mean square error (RMSE) of the optimal Bridge estimator by equivalently estimating an $f$-divergence between the two densities. We then utilize this framework and propose $f$-GAN-Bridge estimator ($f$-GB) based on a bijective transformation that maps one density to the other and minimizes the asymptotic RMSE of the optimal Bridge estimator with respect to the densities. This transformation is chosen by minimizing a specific $f$-divergence between the densities. We show $f$-GB is optimal in the sense that within any given set of candidate transformations, the $f$-GB estimator can  asymptotically achieve an RMSE lower than or equal to that achieved by Bridge estimators based on any other transformed densities. Numerical experiments show that $f$-GB outperforms existing methods in simulated and real-world examples. In addition, we discuss how Bridge estimators naturally arise from the problem of $f$-divergence estimation.

\end{abstract}


\noindent%
{\it Keywords:}  Monte Carlo estimation, Normalizing constants, Bayes factor, $f$-divergence, Generative adversarial network, Normalizing flow
\vfill

\newpage
\spacingset{1.5} 

\section{Introduction} \label{sec:intro}

Estimating the normalizing constant of an unnormalized probability density, or the ratio of normalizing constants between two unnormalized densities is a challenging and important task. In Bayesian inference, such problems are closely related to estimating the marginal likelihood of a model or the Bayes factor between two competing models, and can arise from fields such as econometrics \citep{geweke1999using}, astronomy \citep{bridges2009bayesian}, phylogenetics \citep{fourment202019}, etc. Monte Carlo methods such as Bridge sampling \citep{bennett1976efficient,meng1996simulating}, path sampling \citep{gelman1998simulating}, reverse logistic regression \citep{geyer1994estimating}, nested sampling \cite{skilling2006nested} and reverse Annealed Importance Sampling \citep{burda2015accurate} have been proposed to address this problem. 
See \cite{friel2012estimating} for an overview of some popular algorithms. \cite{fourment202019} also compare the empirical performance of 19 algorithms for estimating normalizing constants in the context of phylogenetics.

Bridge sampling \citep{bennett1976efficient,meng1996simulating} is a powerful, easy-to-implement Monte Carlo method for estimating the ratio of normalizing constants. Let $\tilde q_i(\omega), \omega \in \Omega_i, \ i=1,2$ be two unnormalized probability densities with respect to a common measure $\mu$. Let $q_i(\omega) = \tilde q_i(\omega)/Z_i$ be the corresponding normalized density, where $Z_i$ is the normalizing constant. Bridge sampling estimates $r=Z_1/Z_2$ using samples from $q_1,q_2$ and the unnormalized density functions $\tilde q_1, \tilde q_2$. \cite{meng2002warp} point out that Bridge sampling is equally useful for estimating a single normalizing constant. The relative mean square error (RMSE) of a Bridge estimator depends on the overlap or ``distance" between $q_1,q_2$. The overlap can be quantified by some divergence between them.  When $q_1,q_2$ share little overlap, the corresponding Bridge estimator has large RMSE and therefore is unreliable. In order to improve the efficiency of Bridge estimators, various methods such as Warp Bridge sampling \citep{meng2002warp}, Warp-U Bridge sampling \citep{wang2020warp} and Gaussianized Bridge sampling \citep{jia2020normalizing} have been introduced. These methods first apply transformations $T_i$ to $q_i$ in a tractable way without changing the normalizing constant $Z_i$ for $i=1,2$, then compute Bridge estimators based on the transformed densities $q_i^{(T)}$ and the corresponding samples for $i=1,2$. If $q_1^{(T)}, q_2^{(T)}$ have greater overlap than the original ones, then the resulting Bridge estimator of $r$ based on $q_1^{(T)}, q_2^{(T)}$ would have a lower RMSE.

In this paper, we first demonstrate the connection between Bridge estimators and $f$-divergence \citep{ali1966general}. We show that one can estimate the asymptotic RMSE of the optimal Bridge estimator by equivalently estimating a specific $f$-divergence  between $q_1,q_2$. \cite{nguyen2010estimating} propose a general variational framework for $f$-divergence estimation. We apply this framework to our problem and obtain a new estimator of the asymptotic RMSE of the optimal Bridge estimator using the unnormalized densities $\tilde q_1, \tilde q_2$ and the corresponding samples. We also find a connection between Bridge estimators and the variational lower bound of $f$-divergence given by \cite{nguyen2010estimating}. In particular, we show that the problem of estimating an $f$-divergence between $q_1,q_2$ using this variational framework naturally leads to a Bridge estimator of $r=Z_1/Z_2$.  \cite{kong2003theory} observe that the optimal Bridge estimator is a maximum likelihood estimator under a semi-parametric formulation. We use this $f$-divergence estimation framework to extend this observation and show that many special cases of Bridge estimators such as the geometric Bridge estimator can also be interpreted as maximizers of some objective functions that are related to the $f$-divergence between $q_1,q_2$. This formulation also connects Bridge estimators and density ratio estimation problems: Since we can evaluate the unnormalized densities $\tilde q_1, \tilde q_2$, we know the true density ratio up to a multiplicative constant $r=Z_1/Z_2$. Hence estimating $r$ can be viewed as a problem of estimating the density ratio between $q_1,q_2$. A similar idea has been explored in e.g. Noise Contrastive Estimation \citep{gutmann2010noise}, where the authors treat the unknown normalizing constant as a model parameter, and cast the estimation problem as a classification problem. Similar ideas have also been discussed in e.g. \cite{geyer1994estimating} and \cite{uehara2016generative}.

We then utilize the connection between the asymptotic RMSE of the optimal Bridge estimator and a specific $f$-divergence between $q_1,q_2$, and propose $f$-GAN-Bridge estimator ($f$-GB), which improves the efficiency of the optimal Bridge estimator of $r$ by directly minimizing the first order approximation of its asymptotic RMSE with respect to the \emph{densities} using an $f$-GAN. $f$-GAN \citep{nowozin2016f} is a class of generative model that aims to approximate the target distribution by minimizing an $f$-divergence between the generative model and the target. Let $\mathcal{T}$ be a collection of transformations $T$ such that $\tilde q_1^{(T)}$, the transformed unnormalized density of $q_1$ is computationally tractable and have the same normalizing constant $Z_1$ as the original $\tilde q_1$. The $f$-GAN-Bridge estimator is obtained using a two-step procedure: We first use the $f$-GAN framework to find the transformation $T^*$ that minimizes a specific $f$-divergence between $q_1^{(T)}$ and $q_2$ with respect to $T \in \mathcal{T}$. Once $T^*$ and $q_1^{(T^*)}$ are chosen, we then compute the optimal Bridge estimator of $r$ based on $q_1^{(T^*)}$ and $q_2$ as the $f$-GAN-Bridge estimator. We show $T^*$ asymptotically minimizes the first order approximation of the asymptotic RMSE of the optimal Bridge estimator based on $q_1^{(T)}$ and $q_2$ with respect to $T$. In contrast, existing methods such as Warp Bridge sampling \citep{meng2002warp,wang2020warp} and Gaussianized Bridge sampling \citep{jia2020normalizing} do not offer such theoretical guarantee.
The transformed $q_1^{(T)}$ can be parameterized in any way as long as it is computationally tractable and preserves the normalizing constant $Z_1$. In this paper, we parameterize $q_1^{(T)}$ as a Normalizing flow \citep{rezende2015variational, dinh2016density} with base density $q_1$ because of its flexibility.

\subsection{Summary of our contributions}

The main contribution of our paper is that we give a computational framework to improve the optimal Bridge estimator by minimizing the first order approximation of its asymptotic RMSE with respect to the densities. We also give a new estimator of the asymptotic RMSE of the optimal Bridge estimator using the variational framework proposed by \cite{nguyen2010estimating}. This formulation allows us to cast the estimation problem as a 1-d optimization problem. We find the $f$-GAN-Bridge estimator outperforms existing methods significantly in both simulated and real-world examples. Numerical experiments show that the proposed method provides not only a reliable estimate of $r$, but also an accurate estimate of its RMSE. In addition, we also find a connection between Bridge estimators and the problem of $f$-divergence estimation, which allows us to view Bridge estimators from a different perspective.

This paper is structured as follows: In Section \ref{sec: bridge}, we briefly review Bridge sampling and existing improvement strategies. In Section \ref{sec: fdivandbridge}, we give a new estimator of the asymptotic RMSE of the optimal Bridge estimator using the variational framework for $f$-divergence estimation \citep{nguyen2010estimating}. We also demonstrate the connection between Bridge estimators and the problem of $f$-divergence estimation. In Section \ref{sec: fgan}, we introduce the $f$-GAN-Bridge estimator and give implementation details. We give both simulated and real-world examples in Section \ref{sec: mixture of rings}, \ref{sec: glmm}. Section \ref{sec: conclusion} concludes the paper with a discussion. A Python implementation of the proposed method alongside with examples can be found on Github. A Python implementation of the proposed method alongside with examples can be found in \url{https://github.com/hwxing3259/Bridge_sampling_and_fGAN}.

\section{Bridge sampling and related works} \label{sec: bridge}

Let $Q_1,Q_2$ be two probability distributions of interest. Let $q_i(\omega), \omega \in \Omega_i, i=1,2$ be the densities of $Q_1,Q_2$ with respect to a common measure $\mu$ defined on $\Omega_1 \cup \Omega_2$, where $\Omega_1$ and $\Omega_2$ are the corresponding supports. We use $\tilde q_i(\omega), i=1,2$ to denote the unnormalized densities and $Z_i, i=1,2$ to denote the corresponding normalizing constants, i.e. $q_i(\omega) = \tilde q_i(\omega) / Z_i$ for $i=1,2$. Suppose we have samples from $q_1,q_2$, but we are only able to evaluate the \emph{unnormalized} densities $\tilde q_i(\omega), i=1,2$. Our goal is to estimate the ratio of normalizing constants $r = Z_1/Z_2$ using only $\tilde q_i(\omega), i=1,2$ and samples from the two distributions. Bridge sampling \citep{bennett1976efficient,meng1996simulating} is a powerful method for this task. 
\begin{definition}[Bridge estimator]
Suppose $\mu(\Omega_1 \cap \Omega_2)>0$ and $\alpha : \Omega_1 \cap \Omega_2 \to \mathbb{R}$ satisfies
$0 < \left\vert\int_{\Omega_1 \cap \Omega_2} \alpha(\omega) q_1(\omega)q_2(\omega)d\mu(\omega)\right\vert < \infty$. Given samples $\{\omega_{ij}\}_{j=1}^{n_i} \sim q_i$ for $i=1,2$, the Bridge estimator $\hat r_\alpha$ of $r=Z_1/Z_2$ is defined as
\begin{equation}
    \hat r_\alpha = \frac{{n_2}^{-1}\sum_{j=1}^{n_2}\alpha(\omega_{2j}) \tilde q_1(\omega_{2j}) }{{n_1}^{-1}\sum_{j=1}^{n_1} \alpha(\omega_{1j})\tilde q_2(\omega_{1j}) } \label{bridgeestimate}
\end{equation}
\end{definition}

The choice of free function $\alpha$ directly affects the quality of $\hat r_\alpha$, which is quantified by the relative mean square error (RMSE) $E(\hat r_{\alpha} - r)^2/r^2$. Let $n = n_1+n_2$ and $s_i = n_i/n$ for $i=1,2$. Let $RE^2(\hat r_\alpha)$ denote the asymptotic RMSE of $\hat r_\alpha$ as $n_1,n_2 \rightarrow \infty$. Under the assumption that the samples from $q_1,q_2$ are i.i.d., \cite{meng1996simulating} show the optimal $\alpha$ which minimizes the first order approximation of $RE^2(\hat r_\alpha)$ takes the form 
\begin{equation}
    \alpha_{opt}(\omega)  \propto \frac{1}{s_1 \tilde q_1(\omega) + s_2 \tilde q_2(\omega) r}, \quad \omega \in \Omega_1 \cap \Omega_2 \label{alphaopt}
\end{equation}
The resulting $RE^2(\hat r_{\alpha_{opt}})$ with the optimal free function $\alpha_{opt}$ is 
\begin{equation}
    RE^2(\hat r_{\alpha_{opt}}) = \frac{1}{ns_1s_2} \left[ \left(\int_{\Omega_1 \cap \Omega_2} \frac{q_1(\omega)q_2(\omega)}{s_1 q_1(\omega) + s_2 q_2(\omega)} d\mu(\omega) \right)^{-1} -1\right] + o\left(\frac{1}{n}\right). \label{minimalRMSE}
\end{equation}

Note that the optimal $\alpha_{opt}$ is not directly usable as it depends on the unknown quantity $r$ we would like to estimate in the first place. To resolve this problem, \cite{meng1996simulating} give an iterative procedure
\begin{equation}
    \hat r^{(t+1)} = \frac{n_2^{-1}\sum_{j=1}^{n_2} \tilde q_1(\omega_{2j})/(s_1\tilde q_1(\omega_{2j}) + s_2 \tilde q_2(\omega_{2j}) \hat r^{(t)})}{n_1^{-1}\sum_{j=1}^{n_1}\tilde q_2(\omega_{1j})/(s_1\tilde q_1(\omega_{1j}) + s_2 \tilde q_2(\omega_{1j}) \hat r^{(t)})}, \quad t=0,1,2,... \label{iterativeoptimal}
\end{equation}
The authors show that for any initial value $\hat r^{(0)}$, $\hat r^{(t)}$ is a consistent estimator of $r$ for all $t \geq 1$, and the sequence $\{\hat r^{(t)}\}, \ t=0,1,2,...$ converges to the unique limit $\hat r_{opt}$. Let $MSE(\log \hat r_{opt})$ denote the asymptotic mean square error of $\log \hat r_{opt}$. 

Under the i.i.d. assumption, the authors also show
$RE^2(\hat r_{opt})$ and $MSE(\log \hat r_{opt})$ are asymptotically equivalent to $RE^2(\hat r_{\alpha_{opt}})$ in \eqref{minimalRMSE} up to the first order (i.e. they have the same leading term). Note that $\hat r_{opt}$ can be found numerically while $\hat r_{\alpha_{opt}}$ is not directly computable. We will focus on the asymptotically optimal Bridge estimator $\hat r_{opt}$ for the rest of the paper.

\subsection{Improving Bridge estimators via transformations}  \label{sec: transform}

From \eqref{minimalRMSE} and the fact that $RE^2(\hat r_{opt})$ and $RE^2(\hat r_{\alpha_{opt}})$ are asymptotically equivalent, we see $RE^2(\hat r_{opt})$ depends on the overlap between $q_1$ and $q_2$. Even when $\Omega_1 = \Omega_2$, if $q_1$ and $q_2$ put their probability mass on very different regions, the integral in \eqref{minimalRMSE} would be close to 0, leading to large RMSE and unreliable estimators. In order to improve the performance of $\hat r_{opt}$, one may apply transformations to $q_1,q_2$ (and to the corresponding samples) to increase their overlap while keeping the \emph{transformed} unnormalized densities computationally tractable and the normalizing constants unchanged. We assume that we are dealing with unconstrained, continuous random variables with a common support, i.e. $\Omega_1 = \Omega_2 = \mathbb{R}^d$. When the supports $\Omega_1, \Omega_2$ are constrained or different from each other, we can usually match them by applying simple invertible transformations to $q_1$, $q_2$. When $\Omega_1$,$\Omega_2$ have different dimensions, \cite{chen1997estimating} suggest matching the dimensions of $q_1,q_2$ by augmenting the lower dimensional distribution using some completely known random variables (See Appendix \ref{appendix:dimensionmatching} for details).

\cite{voter1985monte} gives a method to increase the overlap in the context of free energy estimation by shifting the samples from one distribution to the other and matching their modes. \cite{meng2002warp} extends this idea and consider more general mappings. Let $T_i: \mathbb{R}^d \to \mathbb{R}^d$, $i=1,2$ be two smooth and invertible transformations that aim to bring $q_1,q_2$ ``closer". For $\omega_i \sim q_i$, define $\omega_i^{(T)} = T_i(\omega_i)$, $\ i=1,2$. Then for $i=1,2$, the distribution of the transformed sample $\omega_i^{(T)}$ has density 
\begin{align}
    q_i^{(T)}(\omega_i^{(T)}) &= \tilde q_i(T_i^{-1}(\omega_i^{(T)}))\left\vert \det J_i(\omega_i^{(T)})\right\vert/Z_i \\
     &\equiv \tilde{q}_i^{(T)}(\omega_i^{(T)})/Z_i, \quad i=1,2\label{transformeddensity}
\end{align}
where $\tilde{q}_i^{(T)}$ is the unnormalized version of $ q_i^{(T)}$, $T_i^{-1}$ is the inverse transformation of $T_i$ and $J_i(\omega)$ is its Jacobian. One can then apply \eqref{bridgeestimate} to the \emph{transformed} samples and the corresponding unnormalized densities $\tilde{q}_1^{(T)},\tilde{q}_2^{(T)}$, and obtain a Bridge estimator
\begin{equation}
        \hat r^{(T)}_\alpha = \frac{n_2^{-1} \sum_{j=1}^{n_2} \tilde{q}_1^{(T)}(T_2(\omega_{2j})) \alpha(T_2(\omega_{2j}))}{n_1^{-1} \sum_{j=1}^{n_1} \tilde{q}_2^{(T)}(T_1(\omega_{1j})) \alpha(T_1(\omega_{1j}))} \label{transformedbridge}
\end{equation} without the need to sample from $\tilde{q}_1^{(T)}$ or $\tilde{q}_2^{(T)}$ separately. Let $\hat r_{opt}^{(T)}$ denote the asymptotically optimal Bridge estimator based on the transformed densities. We stress that the superscript of $\hat r^{(t)}$ in \eqref{iterativeoptimal} indicates the number of iterations, while the superscript in $\hat r_{opt}^{(T)}$ means it is based on the transformed densities. If the transformed $q_1^{(T)}, q_2^{(T)}$ have a greater overlap than the original $q_1,q_2$, then $\hat r^{(T)}_{opt}$ should be a more reliable estimator of $r$ with a lower RMSE.
\cite{meng2002warp} further extend this idea and propose the Warp transformation, which aims to increase the overlap by centering, scaling and symmetrizing the two densities $q_1,q_2$. One limitation of the Warp transformation is that it does not work well for multimodal distributions. \cite{wang2020warp} propose the Warp-U transformation to address this problem. The key idea of the Warp-U transformation is to first approximate $q_i$ by a mixture of Normal or $t$ distributions, then construct a coupling between them which allows
us to map $ q_i$ into a unimodal density in the same way as mapping the mixture back to a single standard Normal or $t$ distribution.

An alternative to the Warp transformation \citep{meng2002warp} is a Normalizing flow. A Normalizing flow (NF) \citep{rezende2015variational, dinh2016density,papamakarios2017masked} parameterizes a continuous probability distribution by mapping a simple base distribution (e.g. standard Normal) to the more complex target using a bijective transformation $T$, which is parameterized as a composition of a series of smooth and invertible mappings $f_1,...,f_K$ with easy-to-compute Jacobians. 
This $T$ is applied to the ``base" random variable $z_0 \sim p_0$, where $z_0 \in \mathbb{R}^d$ and $p_0$ is the known base density. Let
\begin{equation}
     z_k = f_k \circ f_{k-1} \circ ... \circ f_1(z_0), \quad k=1,...,K
\end{equation}
Since the transformation $T$ is smooth and invertible, by applying change of variable repeatedly, the final output $z_K$ has density
\begin{align}
    p_K(z_K) = p_0(z_0)\prod_{k=1}^K \left\vert \det J_k(z_{k-1}) \right\vert^{-1} \label{nfdensity}
\end{align}
where $J_k$ is the Jacobian of the mapping $f_k$. The final density $p_K$ can be used to approximate target distributions with complex structure, and one can sample from $p_K$ easily by applying $T = f_K \circ f_{K-1} \circ ... \circ f_1$ to $z_0 \sim p_0$. In order to evaluate $p_K$ efficiently, we are restricted to transformations $f_k$ whose $\det J_k(z)$ is easy to compute. For example, Real-NVP \citep{dinh2016density} uses the following transformation: For $m \in \mathbb{N}$ such that $1<m<d$, let  $z_{1:m}$ be the first $m$ entries of $z \in \mathbb{R}^d$, let $\times$ be element-wise multiplication and let $\mu_k, \sigma_k: \mathbb{R}^{m} \to \mathbb{R}^{d-m}$ be two mappings (usually parameterized by neural nets). The smooth and invertible transformation $y = f_k(z)$ for each step $k$ in Real-NVP is defined as
\begin{equation}
    y_{1:m} = z_{1:m}, \quad y_{m+1:d} = \mu_k(z_{1:m}) + \sigma_k(z_{1:m}) \times z_{m+1:d} \label{realNVP}
\end{equation}
This means $f_k$ keeps the first $m$ entries of input $z$, while shifting and scaling the remaining ones. The Jacobian $J_k$ of $f_k$ is lower triangular, hence $\det J_k(z) = \prod_{i=1}^{d-m} \sigma_{ik}(z_{1:m})$, where $\sigma_{ik}(z_{1:m})$ is the $i$th entry of $\sigma_k(z_{1:m})$. Each transformation $f_k$ is also called a coupling layer.  When composing a series of coupling layers $f_1,...,f_K$, the authors also swap the ordering of indices in \eqref{realNVP} so that the dimensions that are kept unchanged in one step $k$ are to be scaled and shifted in the next step. 
\cite{jia2020normalizing} utilize the idea of transforming $q_i$ using a Normalizing flow, and propose Gaussianzed Bridge sampling (GBS) for estimating a single normalizing constant. The authors set $q_1$ to be a completely known density, e.g. standard multivariate Normal, and aim to approximate the target $q_2$ using a Normalizing flow with base density $q_1$. The transformed density $q_1^{(T)}$ is estimated by matching the marginal distributions between $q_1^{(T)}$ and $q_2$. Once $q_1^{(T)}$ is chosen, the authors use \eqref{transformedbridge} and the iterative procedure \eqref{iterativeoptimal} to form the asymptotically optimal Bridge estimator of $Z_2$ based on the transformed $q_1^{(T)}$ and the original $q_2$. 

The idea of increasing overlap via transformations is also applicable to discrete random variables. For example, \cite{meng2002warp} suggest using swapping and permutation to increase the overlap between two discrete distributions. \cite{tran2019discrete} also give Normalizing flow models applicable to discrete random variables based on modulo operations. We give a toy example of increasing the overlap between two discrete distributions using Normalizing flows in Appendix \ref{appendix: discrete}. In the later sections, we will extend the idea of increasing overlap via transformations, and propose a new strategy to improve $\hat r^{(T)}_{opt}$ by directly minimizing the first order approximation of $RE^2(\hat r^{(T)}_{opt})$ with respect to the transformed densities.

\section{Bridge estimators and $f$-divergence estimation}\label{sec: fdivandbridge}

\cite{fruhwirth2004estimating} gives an MC estimator of $RE^2(\hat r_{opt})$. In this section, we introduce an alternative estimator of $RE^2(\hat r_{opt})$ and $MSE(\log \hat r_{opt})$ by equivalently estimating an $f$-divergence between $q_1,q_2$. This formulation allows us to utilize the variational lower bound of  $f$-divergence given by \cite{nguyen2010estimating}, and cast the problem of estimating $RE^2(\hat r_{opt})$ as a 1-d optimization problem. In the later section, we will also show how to use this new estimator to improve the efficiency of $\hat r^{(T)}_{opt}$. In addition, we find that estimating different choices of $f$ divergences under the variational framework proposed by \cite{nguyen2010estimating} naturally leads to Bridge estimators of $r$ with different choices of free function $\alpha(\omega)$.

\subsection{Estimating $RE^2(\hat r_{opt})$ via $f$-divergence estimation}
$f$-divergence \citep{ali1966general} is a broad class of divergences between two probability distributions. By choosing $f$ accordingly, one can recover common divergences between probability distributions such as KL divergence $KL(q_1,q_2)$, Squared Hellinger distance $H^2(q_1,q_2)$ and total variation distance $d_{TV}(q_1,q_2)$. 
\begin{definition}[$f$-divergence]
Suppose the two probability distributions $Q_1,Q_2$ have absolutely continuous density functions $q_1$ and $q_2$ with respect to a base measure $\mu$ on a common support $\Omega$. Let the generator function $f: \mathbb{R}^+ \rightarrow \mathbb{R}$ be a convex and lower semi-continuous function satisfying $f(1)=0$.  The $f$-divergence $D_f(q_1,q_2)$ defined by $f$ takes the form 
\begin{equation}
    D_f(q_1,q_2) = \int_{\Omega} f\left( \frac{q_1(\omega)}{q_2(\omega)} \right) q_2(\omega) d\mu(\omega)
\end{equation}
\end{definition}
Unless otherwise stated, we assume $\Omega = \mathbb{R}^d$ where $d \in \mathbb{N}$ i.e. both $q_1$ and $q_2$ are defined on $\mathbb{R}^d$. If the densities $q_1,q_2$ have different or disjoint supports $\Omega_1$, $\Omega_2$, then we apply appropriate transformations and augmentations discussed in the previous sections to ensure that the transformed and augmented densities (if necessary) are defined on the common support $\Omega=\mathbb{R}^d$. 
In this paper, we focus on a particular choice of $f$-divergence that is closely related to $RE^2(\hat r_{opt})$ in \eqref{minimalRMSE}.
\begin{definition}(Weighted harmonic divergence)
Let $q_1,q_2$ be continuous densities with respect to a base measure $\mu$ on the common support $\Omega$. The weighted harmonic divergence is defined as
\begin{equation}
    H_{\pi}(q_1,q_2) = 1 - \int_{\Omega} \left(\pi q_1^{-1}(\omega) + (1-\pi) q_2^{-1}(\omega)\right)^{-1}  d\mu(\omega) \label{harmonic_def}
\end{equation}
where $\pi \in (0,1)$ is the weight parameter.
\end{definition}
\cite{wang2020warp} observe that the weighted harmonic divergence $H_{\pi}(q_1,q_2)$ is an $f$-divergence with generator $f(u) = 1-\frac{u}{\pi+(1-\pi)u}$, and $RE^2(\hat r_{opt})$ can be rearranged as
\begin{equation}
    RE^2(\hat r_{opt}) = (s_1s_2n)^{-1} \left( (1 - H_{s_2}(q_1,q_2))^{-1} -1\right)+ o\left(\frac{1}{n}\right).\label{rmseinH}
\end{equation}
The same statement also holds for $MSE(\log \hat r_{opt})$ since $MSE(\log \hat r_{opt})$ is asymptotically equivalent to $RE^2(\hat r_{opt})$ \citep{meng1996simulating}. This means if we have an estimator of $H_{s_2}(q_1,q_2)$, then we can plug it into the leading term of the right hand side of \eqref{rmseinH} and obtain an estimator of the first order approximation of $RE^2(\hat r_{opt})$ and $MSE(\log \hat r_{opt})$. Before we give the estimator of $H_{s_2}(q_1,q_2)$, we first introduce the variational framework for $f$-divergence estimation proposed by \cite{nguyen2010estimating}.
Every convex, lower semi-continuous function $f: \mathbb{R}^+ \rightarrow \mathbb{R}$ has a convex conjugate $f^*$ which is defined as follows,
\begin{definition}(Convex conjugate)
Let $f: \mathbb{R}^+ \rightarrow \mathbb{R}$ be a convex and lower semi-continuous function. The convex conjugate of $f$ is defined as 
 \begin{equation}
     f^*(t) = \sup_{u \in \mathbb{R}^+} \{ut-f(u)\}
 \end{equation}
\end{definition}

\cite{nguyen2010estimating} show that any $f$-divergence $D_f(q_1,q_2)$ satisfies
\begin{align}
    D_f(q_1,q_2) &\geq \sup_{V \in \mathcal{V}} \Big( E_{q_1}[V(\omega)] - E_{q_2}[f^*(V(\omega))]\Big), \label{fdivlowerbound}
\end{align}
where $\mathcal{V}$ is an arbitrary class of functions $V: \Omega \to \mathbb{R}$, and $f^*(t)$ is the convex conjugate of the generator $f$ which characterizes the $f$-divergence $D_f(q_1,q_2)$. A table of common $f$-divergences with their generator $f$ and the corresponding convex conjugate $f^*$ can be found in \cite{nowozin2016f}. \cite{nguyen2010estimating} show that if $f$ is differentiable and strictly convex, then $D_f(q_1,q_2)$ is equal to $E_{q_1}[V(\omega)] - E_{q_2}[f^*(V(\omega))]$ in \eqref{fdivlowerbound} if and only if $V(\omega) = f'\left(\frac{q_1(\omega)}{q_2(\omega)}\right)$, the first order derivative of $f$ evaluated at $q_1(\omega)/q_2(\omega)$. The authors then give a new strategy of estimating the $f$-divergence $D_f(q_1,q_2)$ by finding the maximum of an empirical estimate of $E_{q_1}[V(\omega)] - E_{q_2}[f^*(V(\omega))]$ in \eqref{fdivlowerbound} with respect to the variational function $V \in \mathcal{V}$.
We now use this framework to give an estimator of $H_{\pi}(q_1,q_2)$. 
\begin{proposition}[Estimating $H_{\pi}(q_1,q_2)$]  \label{prop: estimating_harmonic_divergence}
Let $q_1,q_2$ be continuous densities with respect to a base measure $\mu$ on the common support $\Omega$. Let $\{\omega_{ij}\}_{j=1}^{n_i}$ be samples from $q_i$ for $i=1,2$. Let $\pi \in (0,1)$ be the weight parameter. Let $r$ be the true ratio of normalizing constants between $q_1,q_2$, and $C_2 > C_1 > 0$ be constants such that $r \in [C_1, C_2]$.
For $\tilde r \in [C_1,C_2]$, define 
\begin{multline} \label{harmoniclowerbound1}
        \resizebox{0.92\hsize}{!}{$G(\tilde r; \pi) = 1 - \frac{1}{\pi} E_{q_1}\left(\frac{\pi\tilde q_2(\omega) \tilde r}{(1-\pi) \tilde q_1(\omega) + \pi\tilde q_2(\omega) \tilde r} \right)^2 - \frac{1}{1-\pi} E_{q_2} \left( \frac{(1-\pi) \tilde q_1(\omega)}{(1-\pi)\tilde q_1(\omega) + \pi\tilde q_2(\omega) \tilde r }\right)^2$}
\end{multline} 
Then $H_{\pi}(q_1,q_2)$ satisfies 
\begin{equation} 
    H_{\pi}(q_1,q_2) \geq  G(\tilde r; \pi) \quad  \forall \tilde r \in [C_1,C_2] \label{harmoniclowerbound2},
\end{equation}
and equality holds if and only if $\tilde r=r$. In addition, let 

\begin{multline} \label{harmoniclowerbound_empirical}
        \hat G(\tilde r ; \pi, \{\omega_{ij}\}_{j=1}^{n_i}) = 1 - \frac{1}{\pi n_1} \sum_{j=1}^{n_1}\left(\frac{\pi\tilde q_2(\omega_{1j}) \tilde r}{(1-\pi) \tilde q_1(\omega_{1j}) + \pi\tilde q_2(\omega_{1j}) \tilde r} \right)^2 - \\ \frac{1}{(1-\pi)n_2} \sum_{j=1}^{n_2} \left( \frac{(1-\pi) \tilde q_1(\omega_{2j})}{(1-\pi)\tilde q_1(\omega_{2j}) + \pi\tilde q_2(\omega_{2j}) \tilde r }\right)^2
\end{multline} 
be the empirical estimate of $G(\tilde r; \pi)$ based on $\{\omega_{ij}\}_{j=1}^{n_i} \sim q_i$ for $i=1,2$.

If $\hat r_{\pi} = \argmax_{\tilde r \in [C_1,C_2]} \hat G(\tilde r ; \pi, \{\omega_{ij}\}_{j=1}^{n_i})$, then $\hat r_{\pi}$ is a consistent estimator of $r$, and $\hat G(\hat r_{\pi};\pi, \{\omega_{ij}\}_{j=1}^{n_i})$ is a consistent estimator of $H_{\pi}(q_1,q_2)$ as $n_1,n_2 \rightarrow \infty$.
\end{proposition}
\begin{proof}
 See Appendix \ref{appendix:proofs}.
\end{proof}
Note that \eqref{harmoniclowerbound2} is a special case of the variational lower bound \eqref{fdivlowerbound} with the $f$-divergence $D_f(q_1,q_2) = H_{\pi}(q_1,q_2)$, the corresponding generator $f(u) = 1-\frac{u}{\pi+(1-\pi)u}$ and variational function $V_{\tilde r}(\omega) = f'\left(\frac{\tilde q_1(\omega)}{\tilde q_2(\omega)\tilde r}\right)$ with $\mathcal{V} = \{V_{\tilde r}(\omega) \vert \tilde r \in [C_1,C_2]\}$, i.e. $\tilde r \in [C_1,C_2]$ is the is the sole parameter of $V_{\tilde r}(\omega)$. Note that $V_r(\omega)=f'\left(\frac{q_1(\omega)}{q_2(\omega)}\right)$ since $r$ is the ratio of normalizing constants between $q_1,q_2$. We parameterize the variational function in this specific form because we would like to take the advantage of knowing the unnormalized densities $\tilde q_1, \tilde q_2$ in our setup. Here we assume that $\tilde r \in [C_1,C_2]$ instead of $\tilde r \in \mathbb{R}^+$. This is not a strong assumption, since we can set $C_1$ $(C_2)$ to be arbitrarily small (large).  We take $\hat G(\hat r_{s_2} ; s_2, \{\omega_{ij}\}_{j=1}^{n_i})$ as an estimator of $H_{s_2}(q_1,q_2)$, and define our estimator of the first order approximation of $RE^2(\hat r_{opt})$ as follows:

\begin{definition}[Estimator of $RE^2(\hat r_{opt})$]
Let $\{\omega_{ij}\}_{j=1}^{n_i}$ be samples from $q_i$ for $i=1,2$. Define
\begin{equation}
    \widehat{RE}^2(\hat r_{opt}) = (s_1s_2n)^{-1} \left( (1 - \hat G(\hat r_{s_2} ; s_2, \{\omega_{ij}\}_{j=1}^{n_i}))^{-1} -1\right) \label{minimalrmseestimate}
\end{equation}
as an estimator of the first order approximation of both $RE^2(\hat r_{opt})$ and $MSE(\log \hat r_{opt})$ in \eqref{rmseinH}.
\end{definition}

Even though $\hat G(\hat r_{s_2} ; s_2, \{\omega_{ij}\}_{j=1}^{n_i})$ is a consistent estimator of $H_{s_2}(q_1,q_2)$, it suffers from a positive bias (See Appendix \ref{appendix:positive_bias} for details). We have not found a practical strategy to correct it so far. On the other hand, we believe this bias does not prevent our proposed error estimator $\widehat{RE}^2(\hat r_{opt})$ from being useful in practice. Since our estimator of $RE^2(\hat r_{opt})$ in \eqref{minimalrmseestimate} is a monotonically increasing function of $\hat G(\hat r_\pi; \pi, \{\omega_{ij}\}_{j=1}^{n_i})$ in Prop \ref{prop: estimating_harmonic_divergence}, the positive bias in $\hat G(\hat r_\pi; \pi, \{\omega_{ij}\}_{j=1}^{n_i})$ leads to a positive bias in $\widehat{RE}^2(\hat r_{opt})$. Therefore $\widehat{RE}^2(\hat r_{opt})$ will systemically overestimate the true error $RE^2(\hat r_{opt})$, which will lead to more conservative conclusions (e.g. wider error bars). This is certainly not ideal, but we believe in practice, it is less harmful than underestimating the variability in $\hat r_{opt}$. In addition, we see the proposed error estimator provides accurate estimates of $RE^2(\hat r_{opt})$ in both examples in Sec \ref{sec: mixture of rings} and \ref{sec: glmm}, indicating the effectiveness of it.

\subsection{$f$-divergence estimation and Bridge estimators}
In the last section, we focus on estimating $H_{\pi}(q_1,q_2)$. We now extend the estimation framework to other choices of $f$-divergence, and show how Bridge estimators naturally arise from this estimation problem. Let an $f$-divergence $D_f(q_1,q_2)$ with the corresponding generator $f(u)$ be given. Similar to Proposition \ref{prop: estimating_harmonic_divergence}, under our parameterization of the variational function $V_{\tilde r}$, the empirical estimate of $E_{q_1}[V(\omega)] - E_{q_2}[f^*(V(\omega))]$ in \eqref{fdivlowerbound} becomes 
\begin{align} \label{fdiv_empirical_general}
    \hat G_f(\tilde r; \{\omega_{ij}\}_{j=1}^{n_i}) &=  \frac{1}{n_1} \sum_{j=1}^{n_1} V_{\tilde r}(\omega_{1j}) - \frac{1}{n_2} \sum_{j=1}^{n_2} f^*(V_{\tilde r}(\omega_{2j}))  \\
    &= \frac{1}{n_1} \sum_{j=1}^{n_1} f'\left(\frac{\tilde q_1(\omega_{1j})}{\tilde q_2(\omega_{1j})\tilde r}\right) - \frac{1}{n_2} \sum_{j=1}^{n_2} f^* \circ f'\left(\frac{\tilde q_1(\omega_{2j})}{\tilde q_2(\omega_{2j})\tilde r}\right), \label{scalarobj}
\end{align}
where $\{\omega_{ij}\}_{j=1}^{n_i} \sim q_i$ for $i=1,2$. Let $\hat r^{(f)} = \argmax_{\tilde r \in \mathbb{R}^+} \hat G_f(\tilde r; \{\omega_{ij}\}_{j=1}^{n_i})$.
By \cite{nguyen2010estimating}, $V_{\hat r^{(f)}} = f'\left(\frac{\tilde q_1(\omega)}{\tilde q_2(\omega) \hat r^{(f)}}\right)$ is an estimator of $V_{r}(\omega) = f'\left(\frac{\tilde q_1(\omega)}{\tilde q_2(\omega) r}\right)$, and $\hat G_f(\hat r^{(f)}; \{\omega_{ij}\}_{j=1}^{n_i})$ is an estimator of $D_f(q_1,q_2)$. In Proposition \ref{prop: estimating_harmonic_divergence} we have shown that $\hat r^{(f)}$ and $\hat G_f(\hat r^{(f)}; \{\omega_{ij}\}_{j=1}^{n_i})$ are consistent estimators of $r$ and $D_f(q_1,q_2)$ when $D_f(q_1,q_2)$ is the weighted Harmonic divergence $H_\pi(q_1,q_2)$ \footnote{It is of interest to see if $\hat r^{(f)}$ and $\hat G_f(\hat r^{(f)}; \{\omega_{ij}\}_{j=1}^{n_i})$ are consistent for all generator functions $f$ and the corresponding $f$-divergences. We have not considered this general problem here.}. 
Here we show the connection between $\hat r^{(f)}$ and the Bridge estimators of $r$ with different choices of free function $\alpha(\omega)$.

\begin{proposition}[Connection between $\hat r^{(f)}$ and Bridge estimators] \label{prop: bridge_and_fdiv}
Suppose $f(u):\mathbb{R}^+ \to \mathbb{R}$ is strictly convex, twice differentiable and satisfies $f(1)=0$. Let $\{\omega_{ij}\}_{j=1}^{n_i}$ be samples from $q_i$ for $i=1,2$. If $\hat r^{(f)} = \argmax_{\tilde r \in \mathbb{R}^+} \hat G_f(\tilde r; \{\omega_{ij}\}_{j=1}^{n_i})$ is a stationary point of $\hat G_f(\tilde r; \{\omega_{ij}\}_{j=1}^{n_i})$ in \eqref{scalarobj}, then $\hat r^{(f)}$ satisfies the following equation
\begin{equation} \label{generalfixedpoint}
    \hat r^{(f)} = \frac{\frac{1}{n_2} \sum_{j=1}^{n_2} f''\left(\frac{\tilde q_1(\omega_{2j})}{\tilde q_2(\omega_{2j}) \hat r^{(f)}}\right)\frac{\tilde q_1(\omega_{2j})}{\tilde q_2(\omega_{2j})^2}\tilde q_1(\omega_{2j})}{\frac{1}{n_1} \sum_{j=1}^{n_1}  f''\left(\frac{\tilde q_1(\omega_{1j})}{\tilde q_2(\omega_{1j}) \hat r^{(f)}}\right) \frac{\tilde q_1(\omega_{1j})}{\tilde q_2(\omega_{1j})^2}\tilde q_2(\omega_{1j}) }
\end{equation}
where $f''$ is the second order derivative of $f$.
\end{proposition}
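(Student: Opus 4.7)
The plan is a direct first-order optimality calculation: differentiate $\hat G_f(\tilde r; \{\omega_{ij}\})$ in $\tilde r$, use the standard identity $(f^*)'(f'(u)) = u$ available for strictly convex differentiable $f$, set the derivative to zero, and rearrange.

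First I would introduce the shorthand $u_{ij}(\tilde r) = \tilde q_1(\omega_{ij})/(\tilde q_2(\omega_{ij})\tilde r)$, so that
\begin{equation*}
\hat G_f(\tilde r; \{\omega_{ij}\}) = \frac{1}{n_1}\sum_{j=1}^{n_1} f'(u_{1j}(\tilde r)) - \frac{1}{n_2}\sum_{j=1}^{n_2} f^*\!\circ f'(u_{2j}(\tilde r)),
\end{equation*}
and note that $\partial u_{ij}/\partial \tilde r = -\tilde q_1(\omega_{ij})/(\tilde q_2(\omega_{ij})\tilde r^2) = -u_{ij}/\tilde r$. Differentiating the first summand gives $f''(u_{1j})\,\partial u_{1j}/\partial \tilde r$. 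For the second summand, since $f$ is strictly convex and differentiable, $f'$ is a strictly increasing bijection onto its range, so $(f^*)' = (f')^{-1}$ and in particular $(f^*)'(f'(u)) = u$. Thus by the chain rule
\begin{equation*}
\frac{d}{d\tilde r}\, f^*(f'(u_{2j}(\tilde r))) = u_{2j}\, f''(u_{2j})\, \partial u_{2j}/\partial \tilde r.
\end{equation*}

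Next I would set $d\hat G_f/d\tilde r = 0$ at $\tilde r = \hat r^{(f)}$, substitute $\partial u_{ij}/\partial \tilde r = -u_{ij}/\tilde r$, and cancel the common $-1/\tilde r$ factor. This yields
\begin{equation*}
\frac{1}{n_1}\sum_{j=1}^{n_1} f''(u_{1j})\, u_{1j} = \frac{1}{n_2}\sum_{j=1}^{n_2} f''(u_{2j})\, u_{2j}^2,
\end{equation*}
with all $u_{ij}$ evaluated at $\hat r^{(f)}$. Expanding $u_{1j} = \tilde q_1(\omega_{1j})/(\tilde q_2(\omega_{1j})\hat r^{(f)})$ on the left and $u_{2j}^2 = \tilde q_1(\omega_{2j})^2/(\tilde q_2(\omega_{2j})^2 (\hat r^{(f)})^2)$ on the right, I would multiply both sides by $\hat r^{(f)}$ and solve for $\hat r^{(f)}$, which produces exactly the stated fixed-point equation \eqref{generalfixedpoint}.

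The only slightly non-routine step is invoking $(f^*)'(f'(u))=u$; once that identity is in hand, the rest is bookkeeping with the chain rule and cancellation of the factor $1/\tilde r$. The strict convexity and twice-differentiability of $f$ are used precisely to guarantee that $(f^*)'$ exists and equals $(f')^{-1}$ on the relevant range, and to make sense of $f''$ appearing in the final expression; $f(1)=0$ is not needed for this calculation but is assumed as a standing convention on $f$-divergence generators.
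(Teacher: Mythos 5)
Your proposal is correct and follows essentially the same route as the paper: differentiate $\hat G_f$ in $\tilde r$, set the score to zero at $\hat r^{(f)}$, and rearrange. The only cosmetic difference is that the paper first expands $f^*\circ f'(u)=uf'(u)-f(u)$ and then differentiates, whereas you differentiate the composition directly via $(f^*)'(f'(u))=u$; both give the same score equation $\tfrac{1}{n_1}\sum_j f''(u_{1j})u_{1j}=\tfrac{1}{n_2}\sum_j f''(u_{2j})u_{2j}^2$ and hence the stated fixed-point identity.
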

\begin{proof}
See Appendix \ref{appendix:proofs}.
\end{proof}

In Equation \eqref{generalfixedpoint}, $f''\left(\frac{\tilde q_1(\omega)}{\tilde q_2(\omega) \hat r^{(f)}}\right)\frac{\tilde q_1(\omega)}{\tilde q_2(\omega)^2}$ plays the role of the free function $\alpha(\omega)$ in a Bridge estimator \eqref{bridgeestimate}. Common Bridge estimators such as the asymptotically optimal Bridge estimator $\hat r_{opt}$ and the geometric Bridge estimator can be recovered by choosing $f$ accordingly (See Appendix \ref{appendix:fdiv_and_Bridge}). \cite{kong2003theory} observe that $\hat r_{opt}$ 
can be viewed as a semi-parametric maximum likelihood estimator. Proposition \ref{prop: bridge_and_fdiv} extends this observation and show that in addition to $\hat r_{opt}$, a large class of Bridge estimators can also be viewed as maximizers of some objective functions that are related to the variational lower bound of some $f$-divergences. In the next section, we will show how to use this variational framework to minimize the first order approximation of $RE^2(\hat r^{(T)}_{opt})$ with respect to the transformed  densities. 

\section{Improving $\hat r_{opt}$ via an $f$-GAN} \label{sec: fgan}
From Sec \ref{sec: transform}, we see that one can improve $\hat r_{opt}$ and reduce its RMSE by first transforming $q_1,q_2$ appropriately, then computing $\hat r^{(T)}_{opt}$ using the transformed densities and samples. From Sec \ref{sec: fdivandbridge} we also see the first order approximation of $RE^2(\hat r_{opt})$ is a monotonic function of $H_{s_2}(q_1,q_2)$. In this section, we utilize this observation and introduce the $f$-GAN-Bridge estimator ($f$-GB) that aims to improve $\hat r^{(T)}_{opt}$ by minimizing the first order approximation of $RE^2(\hat r^{(T)}_{opt})$ with respect to the transformed densities. We show it is equivalent to minimizing $H_{s_2}(q_1^{(T)},q_2)$ with respect to $q_1^{(T)}$ using the variational lower bound of $H_{\pi}(q_1,q_2)$ \eqref{harmoniclowerbound2} and $f$-GAN \citep{nowozin2016f}.

\subsection{The $f$-GAN framework}

We start by introducing the GAN and $f$-GAN models.
A Generative Adversarial Network (GAN) \citep{goodfellow2014generative} is an expressive class of generative models. Let $p_{tar}$ be the target distribution of interest. In the original GAN, \cite{goodfellow2014generative} estimate a generative model $p_\phi$ parameterized by a real vector $\phi$ by approximately minimizing the Jensen-Shannon divergence between $p_\phi$ and $p_{tar}$. The key idea of the original GAN is to introduce a separate discriminator which tries to distinguish between ``true samples" from $p_{tar}$ and artificially generated samples from $p_\phi$. This discriminator is then optimized alongside with the generative model $p_\phi$ in the training process. See \cite{creswell2018generative} for an overview of GAN models.

$f$-GAN \citep{nowozin2016f} extends the original GAN model using the variational lower bound of $f$-divergence \eqref{fdivlowerbound}, and introduces a GAN-type framework that generalizes to minimizing any $f$-divergence between $p_{tar}$ and $p_\phi$.  Let an $f$-divergence with the generator $f$ be given. \cite{nowozin2016f} parameterize the variational function $V_\xi$ and the generative model $p_\phi$ as two neural nets with parameters $\xi$ and $\phi$ respectively, and propose 
\begin{equation}
    G(\phi, \xi) = E_{p_{tar}}(V_\xi(\omega)) - E_{p_\phi}(f^*(V_\xi(\omega))) \label{fganobj}
\end{equation}
as the objective function of the $f$-GAN model, where $f^*$ is the convex conjugate of the generator $f$ of the chosen $f$-divergence. Recall that $G(\phi, \xi)$ is in the form of the variational lower bound \eqref{fdivlowerbound} of $D_f(p_\phi,p_{tar})$. \cite{nowozin2016f} show that $D_f(p_\phi,p_{tar})$ can be minimized by solving $\min_\phi \max_\xi G(\phi, \xi)$. Intuitively, we can view $\max_{\xi}G(\phi, \xi)$ as an estimate of $D_f(p_\phi,p_{tar})$ \citep{nguyen2010estimating}. This means minimizing $\max_\xi G(\phi, \xi)$ with respect to $\phi$ can be interpreted as minimizing an estimate of $D_f(p_\phi,p_{tar})$.

Now we show how to use the $f$-GAN framework to construct the $f$-GAN-Bridge estimator ($f$-GB). Suppose $q_1,q_2$ are defined on a common support $\Omega = \mathbb{R}^d$. Let $T_\phi: \Omega \rightarrow \Omega$ be a transformation parameterized by a real vector $\phi \in \mathbb{R}^l$ that aims to map $q_1$ to $q_2$. Let $q_1^{(\phi)}$ be the transformed density obtained by applying $T_\phi$ to $q_1$, and $\tilde q_1^{(\phi)}$ be the corresponding unnormalized density. We also require $\tilde q_1^{(\phi)}$ to be computationally tractable, and $\tilde q_1^{(\phi)} = q_1^{(\phi)}Z_1$, i.e. $\tilde q_1^{(\phi)}$ and $\tilde q_1$ have the same normalizing constant $Z_1$. Let $\mathcal{T} = \{T_\phi :  \phi \in \mathbb{R}^l\}$ be a collection of such transformations. Define $\hat r^{(\phi)}_{opt}$ to be the asymptotically optimal Bridge estimator of $r$ based on the unnormalized densities $\tilde q_1^{(\phi)}, \tilde q_2$ and corresponding samples $\{T_{\phi}(\omega_{1j})\}_{j=1}^{n_1}, \{\omega_{2j}\}_{j=1}^{n_2}$. Let $\pi \in (0,1)$. Define
\begin{multline} \label{harmoniclowerbound_transformed}
        \resizebox{0.98\hsize}{!}{$G(\phi, \tilde r; \pi) = 1 - \frac{1}{\pi} E_{q_1^{(\phi)}}\left(\frac{\pi\tilde q_2(\omega) \tilde r}{(1-\pi) \tilde q_1^{(\phi)}(\omega) + \pi\tilde q_2(\omega) \tilde r} \right)^2 - \frac{1}{1-\pi} E_{q_2} \left( \frac{(1-\pi) \tilde q_1^{(\phi)}(\omega)}{(1-\pi)\tilde q_1^{(\phi)}(\omega) + \pi\tilde q_2(\omega) \tilde r }\right)^2$}.
\end{multline} 
By Proposition \ref{prop: estimating_harmonic_divergence}, $G(\phi, \tilde r; \pi)$ is the variational lower bound of $H_{\pi}(q_1^{(\phi)}, q_2)$. In order to illustrate our idea, we first give an idealized Algorithm \ref{algo_idealized} to find the $f$-GAN-Bridge estimator. A practical version will be given in the next section.

\begin{algorithm}[H]
\caption{$f$-GAN-Bridge estimator (Idealized version)}
\label{algo_idealized}
\begin{algorithmic} 
\Require Samples $\{\omega_{ij}\}_{j=1}^{n_i} \sim q_i$ for $i=1,2$; Candidate transformations $ T_\phi \in \mathcal{T}$ parameterized by $\phi \in \mathbb{R}^l$.

\State Set $n=n_1+n_2$, $s_i = n_i/n$ for $i=1,2$.
\State Find $(\phi^*, \tilde r^*)$, a solution of $\min_{\phi \in \mathbb{R}^l} \max_{\tilde r\in \mathbb{R}^+} G(\phi, \tilde r; s_2)$ defined in \eqref{harmoniclowerbound_transformed}.
\State Use the iterative procedure in \eqref{iterativeoptimal} to compute the asymptotically optimal Bridge estimator $\hat r^{(\phi^*)}_{opt}$ based on $\tilde q_1^{(\phi^*)}$, $\tilde q_2$ and the samples $\{T_{\phi^*}(\omega_{1j})\}_{j=1}^{n_1}$,$\{\omega_{2j}\}_{j=1}^{n_2}$.
\State Compute $\widehat{RE}^2(\hat r^{(\phi^*)}_{opt})= (s_1s_2n)^{-1} \left((1- G(\phi^*, \tilde r^*; s_2))^{-1}-1\right)$. \\

\Return $\hat r^{(\phi^*)}_{opt}$ as the $f$-GAN-Bridge estimate of $r$, $\widehat{RE}^2(\hat r^{(\phi^*)}_{opt})$ as an estimate of $RE^2(\hat r^{(\phi^*)}_{opt})$ and $MSE(\log \hat r^{(\phi^*)}_{opt})$.
\end{algorithmic}
\end{algorithm}
Since $\tilde q_1^{(\phi)}$ and $\tilde q_1$ have the same normalizing constant by \eqref{transformeddensity}, $\hat r^{(\phi)}_{opt}$ is an asymptotically optimal Bridge estimator of $r$ for any transformation $T_\phi \in \mathcal{T}$. We show that within the given family of transformations $\mathcal{T}$, Algorithm \ref{algo_idealized} is able to find $T_{\phi^*}$ that minimizes the first order approximation of $RE^2(\hat r^{(\phi)}_{opt})$ with respect to $T_\phi \in \mathcal{T}$ under the i.i.d. assumption.

\begin{proposition}[Minimizing $RE^2(\hat r^{(\phi)}_{opt})$ using Algorithm \ref{algo_idealized}] \label{prop: minimizing_optimal_bridge_RMSE}
If $(\phi^*, \tilde r^*)$ is a solution of $\min_{\phi \in \mathbb{R}^l} \max_{\tilde r\in \mathbb{R}^+} G(\phi, \tilde r; s_2)$ defined in Algorithm \ref{algo_idealized}, then $G(\phi, \tilde r^*; s_2) = H_{s_2}(q_1^{(\phi)},q_2) $ for all $\phi \in \mathbb{R}^l$, $T_{\phi^*}$ minimizes  $H_{s_2}(q_1^{(\phi)},q_2)$ with respect to $T_\phi \in \mathcal{T}$. If the samples $\{\omega_{ij}\}_{j=1}^{n_i} \overset{i.i.d.}{\sim} q_i$ for $i=1,2$, then $T_{\phi^*}$ also minimizes $RE^2(\hat r^{(\phi)}_{opt})$ with respect to $T_\phi \in \mathcal{T}$ up to the first order. 
\end{proposition}
\begin{proof}
 See Appendix \ref{appendix:proofs}.
\end{proof}

From Proposition \ref{prop: minimizing_optimal_bridge_RMSE} we see that under the i.i.d. assumption, $T_{\phi^*}$ and the corresponding $f$-GAN-Bridge estimator $\hat r^{(\phi^*)}_{opt}$ are optimal in the sense that $\hat r^{(\phi^*)}_{opt}$ attains the minimal RMSE (up to the first order) among all possible transformations $T_\phi \in \mathcal{T}$ and their corresponding $\hat r^{(\phi)}_{opt}$. 
Since $G(\phi^*, \tilde r^*; s_2) = H_{s_2}(q_1^{(\phi^*)},q_2)$,  $\widehat{RE}^2(\hat r^{(\phi^*)}_{opt})$ in Algorithm \ref{algo_idealized} is exactly the leading term of $RE^2(\hat r_{opt}^{(\phi^*)})$ in the form of  \eqref{rmseinH}. Note that by Proposition \ref{prop: estimating_harmonic_divergence}, $\tilde r^*$ is equal the true ratio of normalizing constants $r$.
This means if we have $(\phi^*, \tilde r^*)$ in the idealized Algorithm \ref{algo_idealized}, it seems there is no need to carry out the following Bridge sampling step. However, $(\phi^*, \tilde r^*)$ is not computable in practice as $G(\phi, \tilde r; s_2)$ depends on the unknown normalizing constants $Z_1,Z_2$. Therefore $G(\phi, \tilde r; s_2)$ has to be approximated by an empirical estimate, and its corresponding optimizer w.r.t. $\tilde r$ is no longer equal to $r$. In the next section, we will give a practical implementation of Algorithm \ref{algo_idealized} and discuss the role of $\tilde r^*$ when $G(\phi, \tilde r; s_2)$ is replaced by an empirical estimate of it.

In Algorithm \ref{algo_idealized}, we use the $f$-GAN framework to minimize $H_{s_2}(q_1^{(\phi)},q_2)$ with respect to $T_\phi \in \mathcal{T}$. We can also apply this $f$-GAN framework to minimizing other choices of $f$-divergences such as KL divergence, Squared Hellinger distance and weighted Jensen-Shannon divergence. However, these choices of $f$-divergence are less efficient compared to the weighted Harmonic divergence $H_{s_2}(q_1^{(\phi)},q_2)$ if our goal is to improve the efficiency of $\hat r_{opt}^{(\phi)}$, as we can show that minimizing these choices of $f$-divergence between $q_1^{(\phi)}$ and $q_2$ can be viewed as minimizing some \emph{upper bounds} of the first order approximation of $RE^2(\hat r^{(\phi)}_{opt})$ (See Appendix \ref{appendix:other_fdiv}).

\subsection{Implementation and numerical stability} \label{sec: implementation details}

In this section, we give a practical implementation of the idealized Algorithm \ref{algo_idealized} based on an alternative objective function. We first describe the practical version of Algorithm \ref{algo_idealized} in Sec \ref{sec: practical}, then justify the choice of this alternative objective in Sec \ref{sec: choice of objective}.

\subsubsection{A practical implementation of Algorithm \ref{algo_idealized}} \label{sec: practical}
In this paper, we parameterize $q_1^{(\phi)}$ as a Normalizing flow. In particular, we parameterize $q_1^{(\phi)}$ as a Real-NVP \citep{dinh2016density} with base density $q_1$ and a smooth, invertible transformation $T_\phi$, where $T_\phi$ is parameterized by a real vector $\phi \in \mathbb{R}^l$. See Sec \ref{sec: transform} for a brief description of Real-NVP.
Given samples $\{\omega_{ij}\}_{j=1}^{n_i} \sim q_i$ for $i=1,2$, define
\begin{multline} \label{harmoniclowerbound_transformed_empirical}
        \hat G(\phi, \tilde r; \pi, \{\omega_{ij}\}_{j=1}^{n_i}) = 1 - \frac{1}{\pi n_1} \sum_{j=1}^{n_1}\left(\frac{\pi\tilde q_2(T_\phi(\omega_{1j})) \tilde r}{(1-\pi) \tilde q_1^{(\phi)}(T_\phi(\omega_{1j})) + \pi\tilde q_2(T_\phi(\omega_{1j})) \tilde r} \right)^2 \\ - \frac{1}{(1-\pi)n_2} \sum_{j=1}^{n_2} \left( \frac{(1-\pi) \tilde q_1^{(\phi)}(\omega_{2j})}{(1-\pi)\tilde q_1^{(\phi)}(\omega_{2j}) + \pi\tilde q_2(\omega_{2j}) \tilde r }\right)^2
\end{multline} 
to be the empirical estimate of $G(\phi, \tilde r; \pi)$ in \eqref{harmoniclowerbound_transformed}. Unlike Algorithm \ref{algo_idealized}, we do not aim to solve $\min_{\phi \in \mathbb{R}^l} \max_{\tilde r \in \mathbb{R}^+} \hat G(\phi, \tilde r; \pi, \{\omega_{ij}\}_{j=1}^{n_i})$ directly. Instead, we define our objective function as
\begin{multline} \label{hybridobjempirical}
        L_{\lambda_1, \lambda_2}(\phi, \tilde r; \pi, \{\omega_{ij}\}_{j=1}^{n_i}) = -\log(1-\hat G(\phi, \tilde r; \pi, \{\omega_{ij}\}_{j=1}^{n_i})) \\ - \frac{\lambda_1}{n_1} \sum_{j=1}^{n_1} \left(\log \tilde q_2 (T_\phi(\omega_{1j})) - \log \tilde q_1^{(\phi)}(T_\phi(\omega_{1j}))\right) -  \frac{\lambda_2}{n_2} \sum_{j=1}^{n_2} \log \tilde q_1^{(\phi)}(\omega_{2j}),
\end{multline}
where $\lambda_1, \lambda_2\geq0$ are two hyperparameters. We first give Algorithm \ref{algo_practical}, a practical implementation of Algorithm \ref{algo_idealized}, then justify the choice of the objective function \eqref{hybridobjempirical} in the following section. See Appendix \ref{appendix:algo_practical} for implementation details.

\begin{algorithm}[H]
\caption{$f$-GAN-Bridge estimator (Practical version)}
\label{algo_practical}
\begin{algorithmic} 
\Require Training samples $\{\omega_{ij}\}_{j=1}^{n_i}$ and estimating samples $\{\omega'_{ij}\}_{j=1}^{n'_i}$ from $q_i$ for $i=1,2$; Initial parameters $\phi_0 \in \mathbb{R}^l$, $\tilde r_0 >0$; Learning rate $\eta_\phi, \eta_{\tilde r}>0$; Tolerance level $\epsilon_1, \epsilon_2 > 0$; Hyperparameters $\lambda_1, \lambda_2 \geq 0$.
\State Transform and augment $q_1,q_2$ appropriately so that both densities are on a common support.
\State Set $t=0, n'=n'_1+n'_2, s_i = n'_i/n'$ for $i=1,2$
\While{$\left\vert L_{\lambda_1, \lambda_2}(\phi_t, \tilde r_t; s_2, \{\omega_{ij}\}_{j=1}^{n_i}) -  L_{\lambda_1, \lambda_2}(\phi_{t-1},\tilde r_{t-1}; s_2, \{\omega_{ij}\}_{j=1}^{n_i})\right\vert > \epsilon_1$ or $\left\vert\tilde r_t - \tilde r_{t-1}\right\vert>\epsilon_2$ or $t=0$}
\State Update $\phi_{t+1} = \phi_t - \eta_\phi \nabla_\phi L_{\lambda_1, \lambda_2}(\phi_t, \tilde r_t; s_2, \{\omega_{ij}\}_{j=1}^{n_i})$
\State Update $\tilde r_{t+1} = \tilde r_t + \eta_{\tilde r} \nabla_{\tilde r} L_{\lambda_1, \lambda_2}(\phi_t, \tilde r_t; s_2, \{\omega_{ij}\}_{j=1}^{n_i})$
\State Update $t=t+1$
\EndWhile
\State Use $\tilde r_t$ as the initial value of the iterative procedure in \eqref{iterativeoptimal}, compute $\hat r'^{(\phi_t)}_{opt}$ based on $\tilde q_1^{(\phi_t)}, \tilde q_2$ and the estimating samples $\{T_{\phi_t}(\omega'_{1j})\}_{j=1}^{n'_1}$,  $\{\omega'_{2j}\}_{j=1}^{n'_2}$.
\State Compute $\widehat{RE}^2(\hat r'^{(\phi_t)}_{opt}) = \max_{\tilde r \in \mathbb{R}^+} (s_1s_2n')^{-1} \left( (1 -  \hat G(\phi_t, \tilde r; s_2, \{\omega'_{ij}\}_{j=1}^{n'_i}))^{-1} -1\right)$.
\Return $\hat r'^{(\phi_t)}_{opt}$ as the $f$-GAN-Bridge estimate of $r$; $\widehat{RE}^2(\hat r'^{(\phi_t)}_{opt})$ as an estimate of $RE^2(\hat r'^{(\phi_t)}_{opt})$ and $MSE(\log \hat r'^{(\phi_t)}_{opt})$.
\end{algorithmic}
\end{algorithm}

In Algorithm \ref{algo_practical}, most of the computational cost is spent on estimating $q_1^{(\phi)}$. Since we parameterize $q_1^{(\phi)}$ as a Real-NVP in this paper, we leverage the GPU computing framework for neural networks. In particular, we implement Algorithm \ref{algo_practical} using PyTorch \citep{paszke2017automatic} and CUDA \citep{cuda}. As a result, most of the computation of Algorithm \ref{algo_practical} is parallelized and carried out on the GPU. This greatly accelerates the training process in Algorithm \ref{algo_practical}. We will further compare the computational cost of Algorithm \ref{algo_practical} to existing improvement strategies for Bridge sampling  \citep{meng2002warp,jia2020normalizing,wang2020warp} in Section \ref{sec: mixture of rings} and \ref{sec: glmm}.

\subsubsection{Choosing the objective function} \label{sec: choice of objective}
Note that the original empirical estimate $\hat G(\phi, \tilde r; s_2, \{\omega_{ij}\}_{j=1}^{n_i})$ can be extremely close to 1 when $q_1^{(\phi)}$ and $q_2$ share little overlap. In order to improve its numerical stability, we first transform $\hat G(\phi, \tilde r; s_2, \{\omega_{ij}\}_{j=1}^{n_i})$ to log scale using a monotonic function $h(x)=-\log(1-x)$, then apply the log-sum-exp trick on the transformed $-\log(1-\hat G(\phi, \tilde r; s_2, \{\omega_{ij}\}_{j=1}^{n_i}))$. Since $h(x)$ is monotonically increasing on $(-\infty,1)$, applying this transformation does not change the optimizers of $\hat G(\phi, \tilde r; s_2, \{\omega_{ij}\}_{j=1}^{n_i})$. 

In addition, GAN-type models can be difficult to train in practice \citep{arjovsky2017towards}. 
\cite{grover2018flowgan} suggest one can stabilize the adversarial training process of GAN-type models by incorporating a log likelihood term into the original objective function when the generative model $q_1^{(\phi)}$ is a Normalizing flow. Since both $\tilde q_1^{(\phi)}$ and $\tilde q_2$ are computationally tractable in our setup, we are able to extend this idea and stabilize the alternating training process by incorporating two ``likelihood" terms that are asymptotically equivalent to $\lambda_1 KL(q_1^{(\phi)}, q_2), \lambda_2 KL(q_2,q_1^{(\phi)})$ up to additive constants into the transformed $f$-GAN objective $-\log(1-\hat G(\phi, \tilde r; s_2, \{\omega_{ij}\}_{j=1}^{n_i}))$. Our proposed objective function $L_{\lambda_1, \lambda_2}(\phi, \tilde r; s_2, \{\omega_{ij}\}_{j=1}^{n_i})$ is then a weighted combination of $-\log(1-\hat G(\phi, \tilde r; s_2, \{\omega_{ij}\}_{j=1}^{n_i}))$ and the two ``likelihood" terms, where the hyper parameters $\lambda_1, \lambda_2 \geq 0$ control the contribution of the ``likelihood" terms.

Similar to Algorithm \ref{algo_idealized}, let $(\phi^*_L,\tilde r^*_L)$ be a solution of the min-max problem $\min_{\phi \in \mathbb{R}^l} \max_{\tilde r \in \mathbb{R}^+} L_{\lambda_1, \lambda_2}(\phi, \tilde r; s_2, \{\omega_{ij}\}_{j=1}^{n_i})$. Note that regardless of the choice of $\lambda_1, \lambda_2$, the scalar parameter $\tilde r$ only depends on $L_{\lambda_1, \lambda_2}(\phi, \tilde r; s_2, \{\omega_{ij}\}_{j=1}^{n_i})$ through $\hat G(\phi, \tilde r; s_2, \{\omega_{ij}\}_{j=1}^{n_i})$. Therefore by Proposition \ref{prop: bridge_and_fdiv}, if $\tilde r^*_L$ is a stationary point of $L_{\lambda_1, \lambda_2}(\phi^*_L, \tilde r; s_2, \{\omega_{ij}\}_{j=1}^{n_i})$ w.r.t. $\tilde r \in \mathbb{R}^+$, then $\tilde r^*_L$ can be viewed as a Bridge estimator of $r$ based on the transformed $\tilde q_1^{(\phi^*_L)}$ and the original $\tilde q_2$ with a specific choice of the free function $\alpha(\omega)$. However, $\tilde r^*_L$ is sub-optimal since the free function $\alpha(\omega)$ it uses is different from the optimal $\alpha_{opt}(\omega)$ in \eqref{alphaopt}. This means $\tilde r^*_L$ will have greater asymptotic error than the asymptotically optimal Bridge estimator. In addition, $\tilde r^*_L$ suffers from an adaptive bias \citep{wang2020warp}. Such bias arises from the fact that the estimated transformed density $q_1^{(\phi_t)}$ in Algorithm \ref{algo_practical} is chosen based on the training samples $\{\omega_{ij}\}_{j=1}^{n_i}$ for $i=1,2$. This means the density of the distribution of the transformed training samples $\{T_{\phi_t}(\omega_{1j})\}_{j=1}^{n_1}$ is no longer proportional to $\tilde q_1^{(\phi_t)}(T_{\phi_t}(\omega_{1j}))$ for $j=1,...,n_1$, as $\phi_t$ can be viewed as a function of  $\{\omega_{ij}\}_{j=1}^{n_i}$ (See Appendix \ref{appendix:algo_practical} for more discussions). Hence we do not use $\tilde r^*_L$ as our final estimator of $r$. Instead, once we have obtained $\tilde r^*_L$, we use it as a sensible initial value of the iterative procedure in \eqref{iterativeoptimal}, and compute the asymptotically optimal Bridge estimator $\hat r'^{(\phi^*_L)}_{opt}$ using a separate set of estimating samples $\{\omega'_{ij}\}_{j=1}^{n'_i}$, $i=1,2$. The resulting $\hat r'^{(\phi^*_L)}_{opt}$ does not suffer from the adaptive bias as the estimating samples are independent to the transformation $q_1^{(\phi_t)}$. When $n'_i = n_i$ for $i=1,2$, $\hat r'^{(\phi^*_L)}_{opt}$ is also statistically more efficient than $\tilde r^*_L$.

On the other hand, if $\phi_L^*$ is a minimizer of $L_{\lambda_1, \lambda_2}(\phi,\tilde r^*_L; s_2, \{\omega_{ij}\}_{j=1}^{n_i})$ with respect to $\phi$, then it asymptotically minimizes a mixture of $-\log(1-H_{s_2}(q_1^{(\phi)},q_2))$, $KL(q_1^{(\phi)}, q_2)$ and $KL(q_2,q_1^{(\phi)})$. Recall that as $n_1,n_2 \to \infty$, the additional log likelihood terms in \eqref{hybridobjempirical} is asymptotically equivalent to $\lambda_1 KL(q_1^{(\phi)}, q_2), \lambda_2 KL(q_2,q_1^{(\phi)})$ up to additive constants. We have demonstrated that minimizing $-\log(1-H_{s_2}(q_1^{(\phi)},q_2))$ with respect to $\phi$ is equivalent to minimizing the first order approximation of $RE^2(\hat r^{(\phi)}_{opt})$ under the i.i.d. assumption. We can also show that minimizing $KL(q_1^{(\phi)}, q_2), KL(q_2,q_1^{(\phi)})$ correspond to minimizing upper bounds of the first order approximation of $RE^2(\hat r^{(\phi)}_{opt})$ w.r.t. $\phi$ under the same assumption (See Appendix \ref{appendix:other_fdiv}). Note that when $\lambda_1, \lambda_2 \neq 0$, Proposition \ref{prop: minimizing_optimal_bridge_RMSE} no longer holds for this hybrid objective asymptotically, i.e. $T_{\phi^*_L}$ no longer asymptotically minimizes the first order approximation of $RE^2(\hat r^{(\phi)}_{opt})$ w.r.t. $T_\phi$. However, we find Algorithm \ref{algo_practical} with the hybrid objective works well in the numerical examples in Sec \ref{sec: mixture of rings}, \ref{sec: glmm} for any value of $\lambda_1, \lambda_2 \in (10^{-2},10^{-1})$. We want to keep $\lambda_1, \lambda_2$ small since we do not want the log likelihood terms to dominate $\hat G(\phi, \tilde r; s_2, \{\omega_{ij}\}_{j=1}^{n_i})$ in the hybrid objective $L_{\lambda_1, \lambda_2}(\phi, \tilde r; s_2, \{\omega_{ij}\}_{j=1}^{n_i})$. 
In addition, we would like to stress that even though the final $\phi_t$ in Algorithm \ref{algo_practical} does not asymptotically minimize the first order approximation of $RE^2(\hat r^{(\phi)}_{opt})$ w.r.t. $\phi$ when $\lambda_1,\lambda_2>0$, $\widehat{RE}^2(\hat r'^{(\phi_t)}_{opt})$ in Algorithm \ref{algo_practical} is still a consistent estimator of the first order approximation of $RE^2(\hat r'^{(\phi_t)}_{opt})$ by Proposition \ref{prop: estimating_harmonic_divergence} and the fact that $\hat r'^{(\phi_t)}_{opt}$ is the asymptotically optimal Bridge estimator based on the transformed $q_1^{(\phi_t)}$ and the original $q_2$.

\section{Example 1: Mixture of Rings} \label{sec: mixture of rings}

We first demonstrate the effectiveness of the $f$-GAN-Bridge estimator and Algorithm \ref{algo_practical} using a simulated example. 
Since this paper focuses on improving the original Bridge estimator \citep{meng1996simulating} rather than giving a new estimator of the normalizing constant or the ratio of normalizing constants, we will focus on comparing the performance of the proposed $f$-GAN-Bridge estimator to existing improvement strategies for Bridge sampling \citep{meng2002warp, wang2020warp, jia2020normalizing} in this and the following section. 
We do not include other classes of methods such as path sampling \citep{gelman1998simulating, lartillot2006computing}, nested sampling \citep{skilling2006nested}, variational approaches \citep{ranganath2014black}, etc. in the examples. Empirical study \citep{fourment202019} finds evidence that Bridge sampling was competitive with a wide range of methods, including the methods mentioned above, in the context of phylogenetics.

In this example, we set $q_1$, $q_2$ to be mixtures of ring-shaped distributions, and we would like to estimate the ratio of their normalizing constants. We choose this example because such mixture has a multi-modal structure, and its normalizing constant is available in closed form. Let $\boldsymbol{x} \in \mathbb{R}^2$. In order to define the pdf of $q_1,q_2$ for this example, we first define the pdf of a 2-d ring distribution as
\begin{equation}
    R(\boldsymbol{x}; \boldsymbol{\mu},b,\sigma) = \frac{1}{ \sqrt{2\pi^3\sigma^2}\Phi(b/\sigma)}\exp\left(-\frac{(\|\boldsymbol{x}-\boldsymbol{\mu}\|^2_2-b)^2}{2\sigma^2}\right); \quad \boldsymbol{\mu} \in \mathbb{R}^2, \ b,\sigma>0
\end{equation}
where $\Phi(\cdot)$ is the standard Normal CDF and $\boldsymbol{\mu}, b, \sigma$ controls the location, radius and thickness of the ring respectively. Let $\tilde R(\boldsymbol{x}; \boldsymbol{\mu},b,\sigma) = \exp\left(-\frac{(\|\boldsymbol{x}-\boldsymbol{\mu}\|^2_2-b)^2}{2\sigma^2}\right)$ be the corresponding unnormalized density. Let $\boldsymbol{\omega} \in \mathbb{R}^p$ where $p$ is an even integer. For $i=1,2$, let the unnormalized density $\tilde q_i$ be
\begin{multline}
        \tilde q_i(\boldsymbol{\omega}; \boldsymbol{\mu}_{i1},\boldsymbol{\mu}_{i2}, b_{i}, \sigma_{i}) = \prod_{j=1}^{p/2} \left(\frac{1}{2} \tilde R\left(\{\omega_{2j-1}, \omega_{2j}\}; \boldsymbol{\mu}_{i1},b_i,\sigma_i\right)\right. + \\ \left.\frac{1}{2} \tilde R\left(\{\omega_{2j-1}, \omega_{2j}\}; \boldsymbol{\mu}_{i2},b_i,\sigma_i\right) \right)
\end{multline}
where $\omega_j$ is the $j$th entry of $\boldsymbol{\omega}$. This means for $i=1,2$, if $\boldsymbol{\omega} \sim q_i$, then every two entries of $\boldsymbol{\omega}$ are independent and identically distributed, and follow an equally weighted mixture of 2-d ring distributions with different location parameters $\boldsymbol{\mu}_{i1}, \boldsymbol{\mu}_{i2}$ and the same radius and thickness parameter $b_i, \sigma_i$. It is straightforward to verify that $Z_i$, the normalizing constant of $\tilde q_i$ is $\left(\sqrt{2\pi^3\sigma_i^2}\Phi(b_i/\sigma_i)\right)^{p/2}$. In this example, we consider dimension $p=\{12,18,24,30,36,42,48\}$, and set $\boldsymbol{\mu}_{11} = (2,2), \boldsymbol{\mu}_{12} = (-2, -2), \boldsymbol{\mu}_{21}=(3,-3), \boldsymbol{\mu}_{22}=(-3,3)$, $b_1 =3 , b_2=6, \sigma_1=1, \sigma_2=2$.

In this example, we estimate $ \log r = \log Z_1 - \log Z_2$ using the  $f$-GAN-Bridge estimator ($f$-GB, Algorithm \ref{algo_practical}), Warp-III Bridge estimator \citep{meng2002warp}, Warp-U Bridge estimator \citep{wang2020warp} and Gaussianzed Bridge Sampling (GBS) \citep{jia2020normalizing}. We fix $N_i$, the number of samples from $q_i$, to be $2000$ for $i=1,2$, and compare the performance of these methods as we increase the dimension $p$. For each value of $p$, we run each methods 100 times. For Algorithm \ref{algo_practical}, we set $\lambda_1, \lambda_2 = 0.05$, and $\tilde q^{(\phi)}_{1}$ to be a Real-NVP with 4 coupling layers. For Warp-III and GBS, we use the recommended or default settings. For Warp-U, we adopt the cross splitting strategy suggested by the authors: We first estimate the Warp-U transformation using first half of the samples as the training set, and compute the Warp-U Bridge estimator using the second half as the estimating set. We then swap the role of the training and estimating set to compute another Warp-U Bridge estimator. The final output would then be the average of the two Warp-U Bridge estimators. This idea has also been discussed in \cite{wong2020properties}. Let $\hat r$ be a generic estimator of $r$. For each method and each value of $p$, we compute a MC estimate of the MSE of $\log \hat r$ based on the results from the repeated runs. We use it as the benchmark of performance. From Figure \ref{ring_all_methods} we see $f$-GB  outperforms all other methods for all choices of $p$. We also include a scatter plot of the first two dimensions of samples from $q_1,q_2$ and the transformed $q_1^{(\phi_t)}$ when $p=48$, where $q_1^{(\phi_t)}$ is estimated using Algorithm \ref{algo_practical} with $n_i=n'_i=N_i/2$ for $i=1,2$. We see the transformed $q_1^{(\phi_t)}$ captures the structure of $q_2$ accurately, and they share much greater overlap than the original $q_1,q_2$.

\begin{figure}[H]
    \centering
    \includegraphics[width=0.87\textwidth]{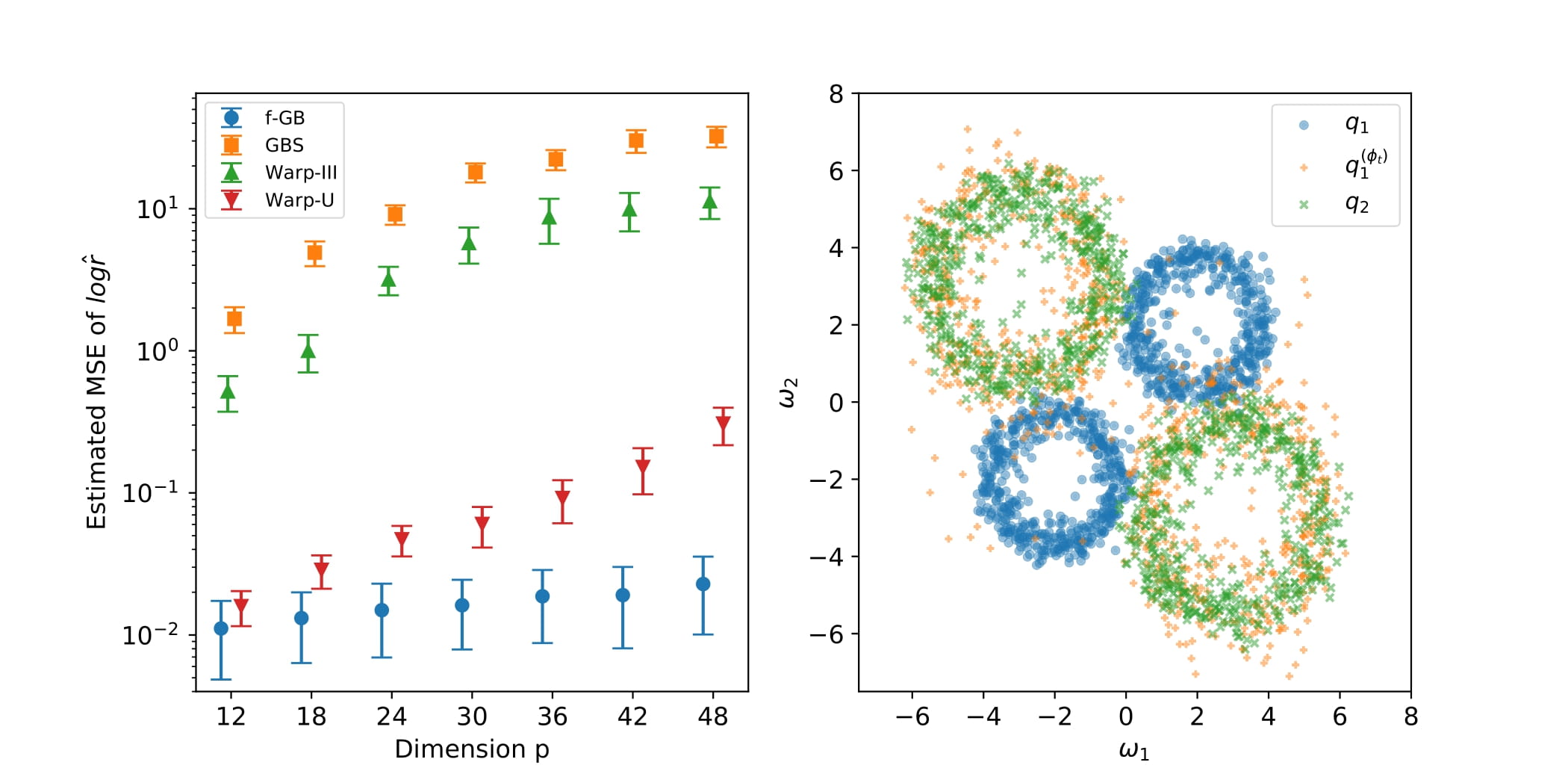}
    \caption{Left: MC estimates of MSE of $\log \hat r$ for each methods. Vertical segments are $2\sigma$ error bars. Note that the y-axis is on log scale. Right: Scatter plot of the first two dimensions of samples from $q_1,q_2$ and $q_1^{(\phi_t)}$ when $p=48$. $q_1^{(\phi_t)}$ is obtained from Algorithm \ref{algo_practical} with $n_i=n'_i=1000$ for $i=1,2$.}
    \label{ring_all_methods}
\end{figure}

We now compare the computational cost of these methods. Recall that our Algorithm \ref{algo_practical} utilizes GPU acceleration. Because of the difference in GPU and CPU computing, it is not straightforward to compare the computational cost of Algorithm \ref{algo_practical} with GBS, Warp-III and Warp-U, which are CPU based, using benchmarks such as CPU seconds or number of function calls. We simply report the averaged running time for each method on our machine in Figure \ref{ring_computational_cost}. Similar to \cite{wang2020warp}, we will also report the average ``precision per second", which is the reciprocal of the product of the running time and the estimated MSE of $\log \hat r$, for each method (higher precision per second means better efficiency). We see that for all methods, the computation time is approximately a linear function of the dimension $p$. Even though $f$-GB takes roughly twice longer to run compared to GBS and $30 \sim 40$ times longer compared to Warp-III, it achieves the highest precision per second for all dimension $p$ we consider. In addition, we also run further simulations with larger sample sizes. We find that when $p=48$, Warp-U needs around $N_1=N_2 =7500$ samples to reach a similar level of precision as $f$-GB based on $N_1=N_2=2000$ samples. In this case, Warp-U takes around $3 \sim 4$ times longer to run compared to $f$-GB. For Warp-III and GBS, we further increase the sample size to $N_1=N_2=5\times10^4$, but find that their performance is still worse than $f$-GB and Warp-U, and both take more than three times longer to run.  For Warp-III and Warp-U, it is not obvious how they would benefit from GPU computation. Although GBS may benefit from GPU acceleration in principle, it would require careful implementation and optimization. Therefore we compare our Algorithm \ref{algo_practical} to these methods based on their publicly available implementations.

\begin{figure}[H]
    \centering
    \includegraphics[width=0.87\textwidth]{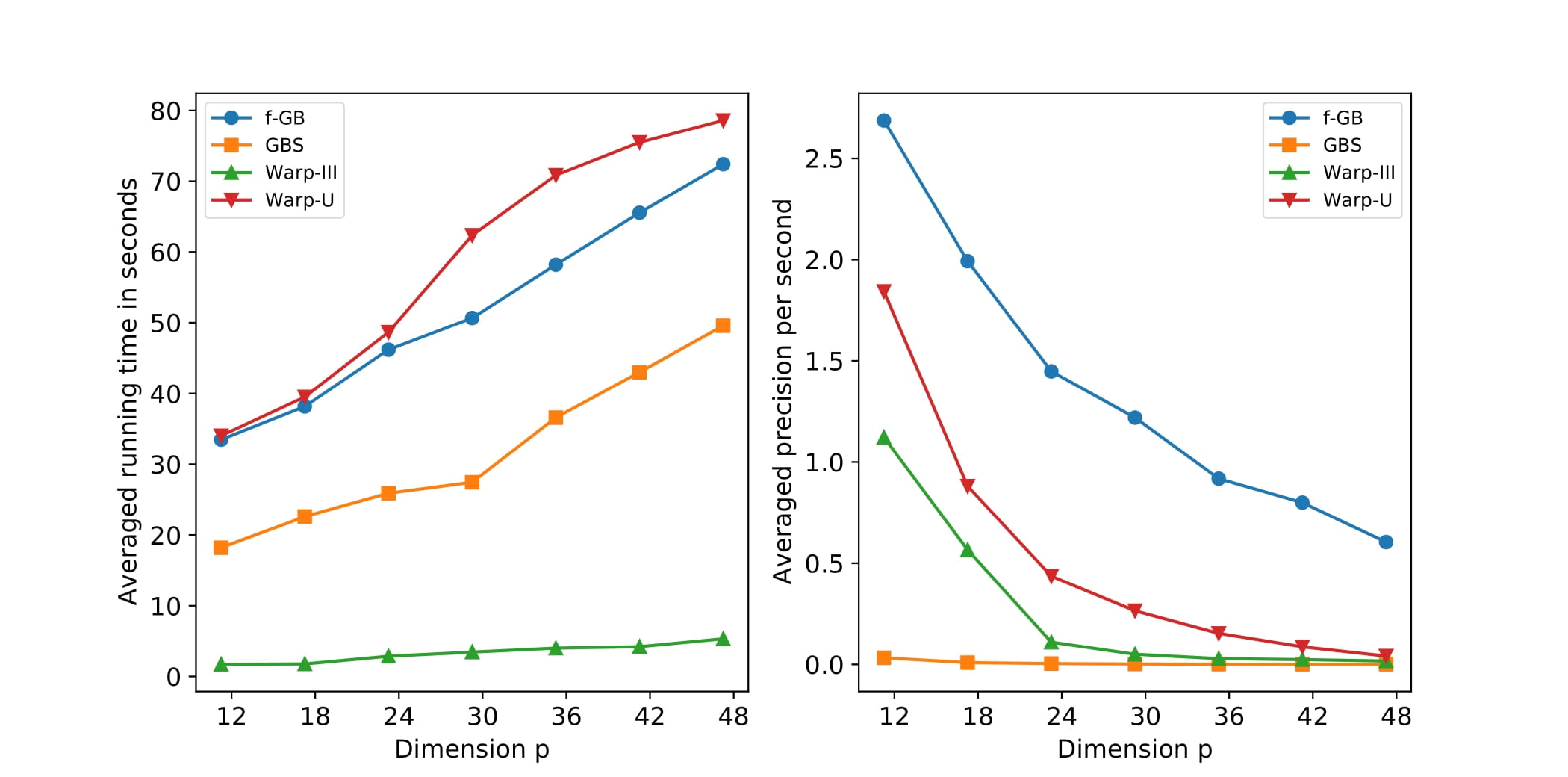}
    \caption{Left: Averaged running time for each method. Right: Averaged precision per second (i.e. reciprocal of the product of running time and the estimated MSE of $\log \hat r$) for each method.}
    \label{ring_computational_cost}
\end{figure}

Recall that $MSE(\log \hat r_{opt})$ is asymptotically equivalent to $RE^2(\hat r_{opt})$ \citep{meng1996simulating}. Therefore $\widehat{RE}^2(\hat r'^{(\phi_t)}_{opt})$ returned from Algorithm \ref{algo_practical} can also be viewed as an estimate of $MSE(\log \hat r'^{(\phi_t)}_{opt})$. In order to assess its accuracy, we compare it with both the error estimator given in \cite{fruhwirth2004estimating} and a direct MC estimator of $MSE(\log \hat r'^{(\phi_t)}_{opt})$: For each value of $p$, we first run Algorithm \ref{algo_practical} with $N_1=N_2=2000$ samples as before (i.e. we set $n_i=n'_i=1000$ for $i=1,2$). We then fix the transformed density $\tilde q^{(\phi_t)}_{1}$ obtained from Algorithm \ref{algo_practical}, repeatedly draw $n'_1 = n'_2 = 1000$ independent samples from $\tilde q^{(\phi_t)}_{1},q_2$ and record $\hat r'^{(\phi_t)}_{opt}$, $\widehat{RE}^2(\hat r'^{(\phi_t)}_{opt})$ and the error estimate given in \cite{fruhwirth2004estimating} (F-S) based on these new samples. We repeat this process 100 times, and report the box plots of $\widehat{RE}^2(\hat r'^{(\phi_t)}_{opt})$ and the error estimates given in \cite{fruhwirth2004estimating} (F-S) based on the repeated runs. We also compare the results with the direct MC estimate of $MSE(\log \hat r'^{(\phi_t)}_{opt})$ based on the repeated estimates $\log \hat r'^{(\phi_t)}_{opt}$ and the ground truth $\log r$. Note that here we fix the transformed $\tilde q^{(\phi_t)}_{1}$ and only repeat the Bridge sampling step in Algorithm \ref{algo_practical}. We summarize the results in Figure \ref{mix_of_ring_RMSE}. We see that $\widehat{RE}^2(\hat r'^{(\phi_t)}_{opt})$ returned from Algorithm \ref{algo_practical} agrees with the error estimator given in \cite{fruhwirth2004estimating} (F-S), and provides a sensible estimate of $MSE(\log \hat r'^{(\phi_t)}_{opt})$ for all choices of $p$.

\begin{figure}[H]
    \centering
    \includegraphics[width=0.87\textwidth]{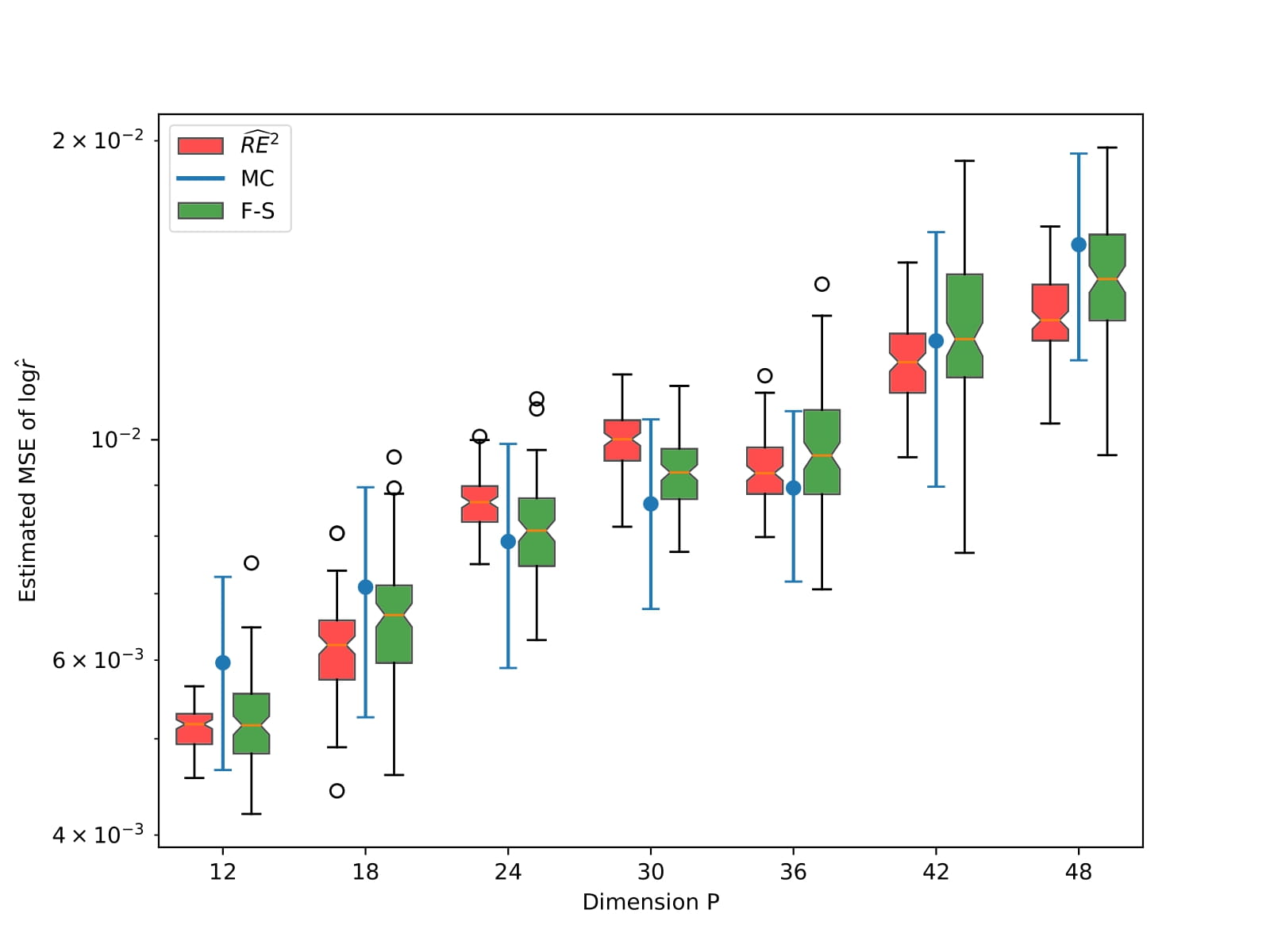}
    \caption{Box plots of 100 repetitions of $\widehat{RE}^2(\hat r'^{(\phi_t)}_{opt})$ based on Algorithm \ref{algo_practical} and the error estimator given in \cite{fruhwirth2004estimating} (F-S) for each dimension $P$. Blue vertical segments are the $2\sigma$ error bars of the corresponding MC estimates of $MSE(\log \hat r'^{(\phi_t)}_{opt})$ based on 100 repetitions.}
    \label{mix_of_ring_RMSE}
\end{figure}

\section{Example 2: Comparing two Bayesian GLMMs} \label{sec: glmm}

In this section we demonstrate the effectiveness of the $f$-GAN-Bridge estimator and Algorithm \ref{algo_practical} by considering a Bayesian model comparison problem based on the six cities dataset \citep{fitzmaurice1993likelihood}, where $q_1,q_2$ are the posterior densities of the parameters of two Bayesian GLMMs $M_1,M_2$. This example is adapted from \cite{overstall2010default}. We choose this example because it is based on real world dataset, and the posteriors $q_1,q_2$ are relatively high dimensional and are defined on disjoint supports with different dimensions.

The six cities dataset consists of the wheezing status $y_{ij}$ (1 = wheezing, 0 otherwise) of child $i$ at time $j$ for $i=1,...,n$, $n=537$ and $j=1,...4$. It also includes $x_{ij}$, the smoking status (1 = smoke, 0 otherwise) of the $i$-th child's mother at time-point $j$ as a covariate. We compare two mixed effects logistic regression models $M_1,M_2$ with different linear predictors. Define
\begin{align}
    M1 : \eta_{ij}^{(1)} &= \beta_0 + u_i; \quad u_i \stackrel{i.i.d.}{\sim} N(0,\sigma^2) \\
    M2: \eta_{ij}^{(2)} &= \beta_0 + \beta_1x_{ij} + u_i; \quad u_i\stackrel{i.i.d.}{\sim} N(0,\sigma^2)
\end{align}
where $\beta_0,\beta_1$ are regression parameters, $u_i$ is the random effect of the $i$-th child and $\sigma^2$ controls the variance of the random effects. We use the default prior given by \cite{overstall2010default} for both models, i.e. we take $\beta_0 \sim N(0, 4)$, $\sigma^{-2} \sim \Gamma(0.5,0.5)$ for $M_1$ and $(\beta_0,\beta_1)\sim N(0, 4n(\boldsymbol{X}^T \boldsymbol{X})^{-1})$, $\sigma^{-2} \sim \Gamma(0.5,0.5)$ for $M_2$ where $\boldsymbol{X} = [\boldsymbol{1}_{4n}^T, (\boldsymbol{x}_{1},...,\boldsymbol{x}_{n})^T]$, $\boldsymbol{x}_{i} = (x_{i1},...,x_{i4})$ for $i=1,...,n$. 

Let $\boldsymbol{y} = (\boldsymbol{y}_{1},... \boldsymbol{y}_{n})$ with $\boldsymbol{y}_{i} = (y_{i1},...,y_{i4})$. Let $\boldsymbol{u} = (u_1,...u_n)$ be the vector of random effects. Let $q_1(\beta_0, \boldsymbol{u}) = p( \beta_0, \boldsymbol{u} \vert \boldsymbol{X},\boldsymbol{y}, M_1)$ be the marginal posterior of $(\beta_0, \boldsymbol{u})$ under $M_1$, and $\tilde q_1(\beta_0, \boldsymbol{u})$ be the corresponding unnormalized density. Let $q_2(\beta_0,\beta_1 ,\boldsymbol{u})$, $\tilde q_2(\beta_0, \beta_1,\boldsymbol{u})$ be defined in a similar fashion under $M_2$. Samples of $q_1,q_2$ are obtained using MCMC package \texttt{R2WinBUGS} \citep{sturtz2005r2winbugs, lunn2000winbugs}. For $k=1,2$, the normalizing constant $Z_k$ of $\tilde q_k$ is the marginal likelihood under $M_k$. We first generate $2 \times 10^5$ MCMC samples from $q_1,q_2$ and estimate $\log Z_1, \log Z_2$ using the method described in \cite{overstall2010default}. The estimated log marginal likelihoods of $M_1, M_2$ based on $2 \times 10^5$ MCMC samples are $-808.139$ and $-809.818$ respectively. The results are consistent with the estimated log marginal likelihoods reported in \cite{overstall2010default} based on $5 \times 10^4$ MCMC samples. We take them as the baseline ``true values'' of $\log Z_1$ and $\log Z_2$. See \cite{overstall2010default} for R codes and technical details.

Similar to the previous example, we use $f$-GB to estimate the log Bayes factor $ \log r= \log Z_1 - \log Z_2$ between $M_1,M_2$. Note that $q_1,q_2$ are defined on disjoint support $\mathbb{R}^{n+1}, \mathbb{R}^{n+2}$ respectively. In order to apply our Algorithm \ref{algo_practical} to this problem, we first augment $q_1$ using a standard Normal to match up the difference in dimension between $q_1$ and $q_2$: Let $q_{1,aug}(\beta_0, \gamma, \boldsymbol{u}) = q_1(\beta_0, \boldsymbol{u})N(\gamma ; 0,1)$ be the augmented density where $N(\cdot ; 0,1)$ is the standard Normal pdf. Let $\tilde q_{1,aug}$ be the corresponding unnormalized augmented density. Note that $\tilde q_{1,aug}$ and $\tilde q_1$ have the same normalizing constant $Z_1$. We can then apply Algorithm \ref{algo_practical} to $q_{1,aug}$ and $q_2$ since $q_{1,aug}$ and $q_2$ are now defined on a common support $\mathbb{R}^{n+2}$. We can sample from $q_{1,aug}$ by simply concatenating a sample $(\beta_0, \boldsymbol{u}) \sim q_1$ and a sample $\gamma \sim N(0,1)$. 

Let $N_k$ be the number of MCMC samples drawn from $q_k$ for $k=1,2$. In this example, we compare the performance of the $f$-GAN-Bridge estimator with the Warp-III Bridge estimator and the Warp-U Bridge estimator as we increase the number of MCMC samples $N_1,N_2$. We consider sample size $N=\{1000,2000,3000,4000,5000\}$. This is a challenging task since the sample size $N$ is limited compared to the dimension of the problem (Recall that $q_1,q_2$ are defined on $\mathbb{R}^{n+1}, \mathbb{R}^{n+2}$ respectively with $n=537$). For each choice of $N$, we repeatedly draw $N_1=N_2=N$ MCMC samples from $q_1,q_2$ respectively and estimate the MSE of $\log \hat r$ for each method in the same way as in the previous example. For our Algorithm \ref{algo_practical}, we augment $q_1$ as described above, set $\lambda_1, \lambda_2 = 0.1$ and $q_{1,aug}^{(\phi)}$ to be a Real-NVP with 10 coupling layers. For the Warp-U and Warp-III Bridge estimator, we still use the recommended or default settings. We do not include GBS in this example since we find that for all values of $N$, it does not converge for most of the repetitions. From Figure \ref{fig: six_cities} we see our Algorithm \ref{algo_practical} outperforms the Warp-III and the Warp-U Bridge estimator for all sample size $N$. We also include a scatter plot of the first two dimensions of samples from $q_{1,aug},q_2$ and the transformed $ q^{(\phi_t)}_{1,aug}$, where $q^{(\phi_t)}_{1,aug}$ is obtained from Algorithm \ref{algo_practical} with $N=3000$. We see $q^{(\phi_t)}_{1,aug}$ an $q_2$ share much greater overlap than the original $q_{1,aug},q_2$. From Figure \ref{six_cities_computational_cost} we see for the same sample size $N$, the running time of $f$-GB is $4 \sim 6$ times as long as Warp-III, and roughly $30\% \sim 40\%$ shorter than Warp-U. On the other hand, $f$-GB achieves the highest precision per second for all sample size $N$ in this example. We further increase the sample size $N$, and find that Warp-U requires around $10^4$ MCMC samples to reach a similar level of precision achieved by $f$-GB with $N=5000$ samples, and takes around $2$ times longer to run. Similarly, Warp-III requires around $8\times10^4$ samples to get a similar level of precision, and takes around three times longer to run.

\begin{figure}[H]
    \centering
    \includegraphics[width=0.87\textwidth]{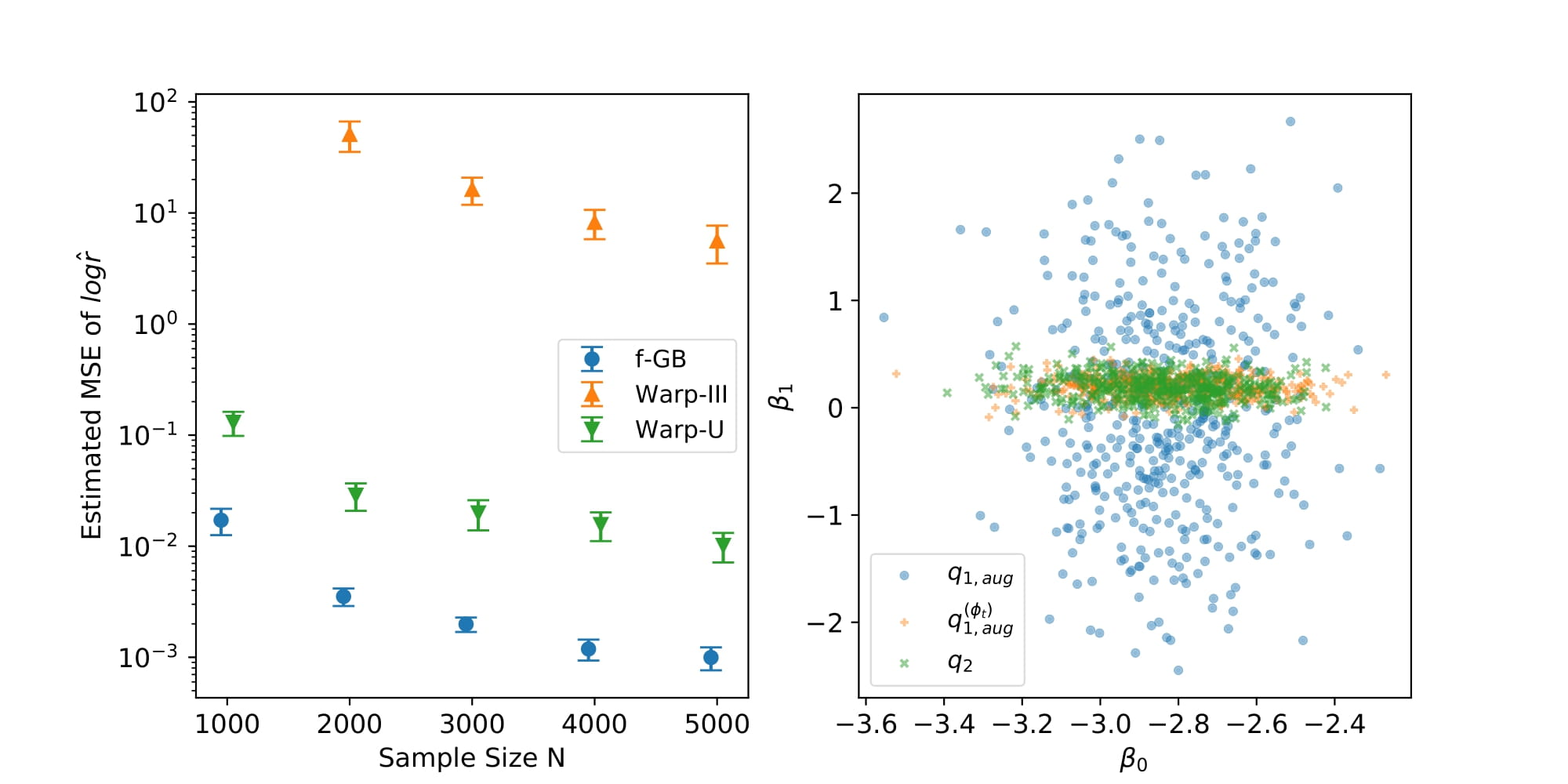}
    \caption{Left: MC estimates of MSE of $\log \hat r$ for each methods. Vertical segments are $2\sigma$ error bars. Note that the y-axis is on log scale. Warp-III does not converge for most of the repetitions when $N=1000$. Right: Scatter plot of the first two dimensions of samples from $q_{1,aug},q_2$ and $q_{1,aug}^{(\phi_t)}$, where $q_{1,aug}^{(\phi_t)}$ is obtained from Algorithm \ref{algo_practical} with $n_1=n'_i=1500$ for $i=1,2$. The first two dimensions of $q_{1,aug}$ and $q_2$ are $(\beta_0,\gamma), (\beta_0,\beta_1)$ respectively.}
    \label{fig: six_cities}
\end{figure}

\begin{figure}[H]
    \centering
    \includegraphics[width=0.87\textwidth]{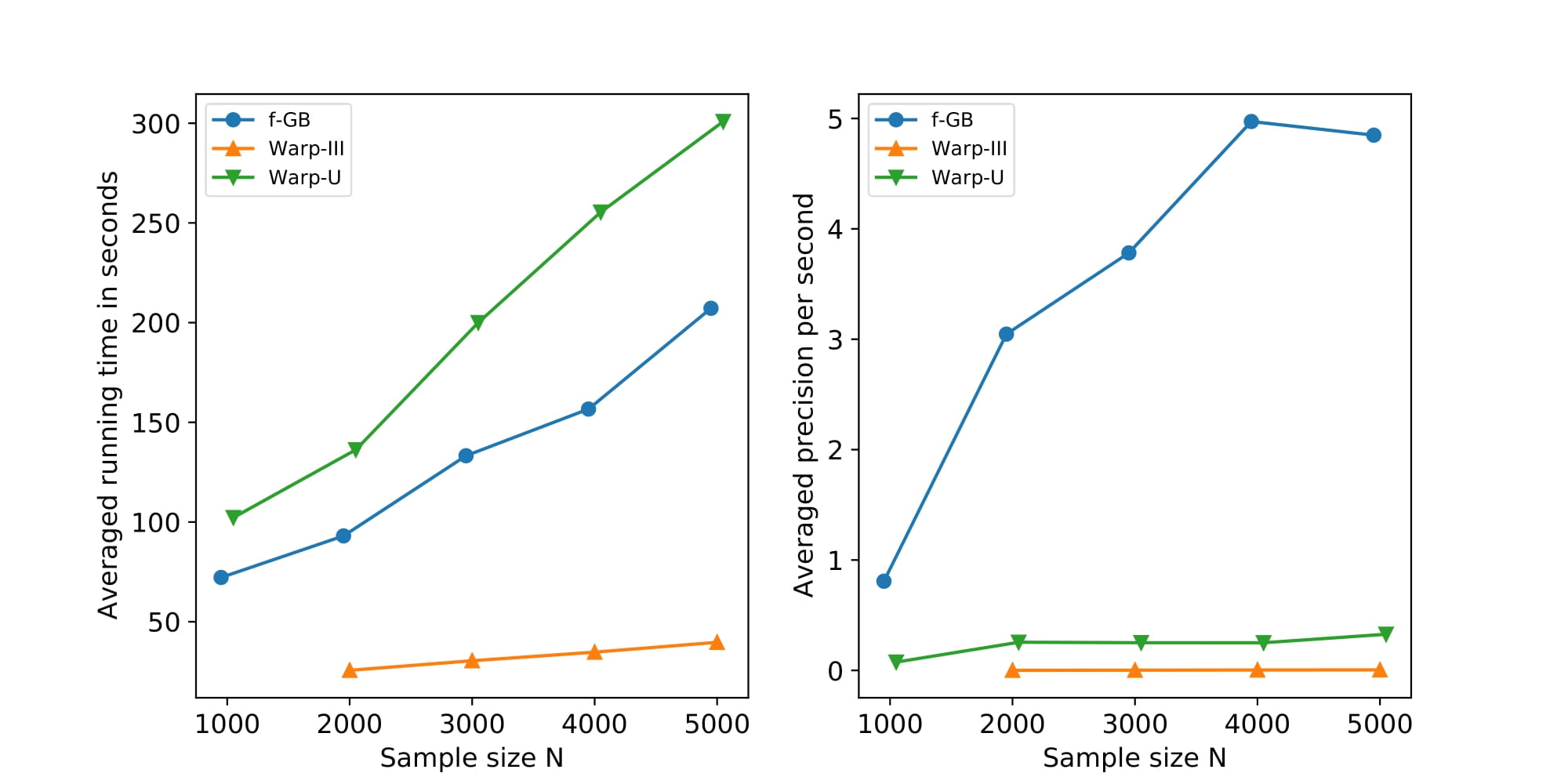}
    \caption{Left: Averaged running time for each method. Warp-III does not converge for most of the repetitions when $N=1000$. Right: Averaged precision per second (i.e. reciprocal of the product of running time and the estimated MSE of $\log \hat r$) for each method.}
    \label{six_cities_computational_cost}
\end{figure}

\begin{figure}[H]
    \centering
    \includegraphics[width=0.87\textwidth]{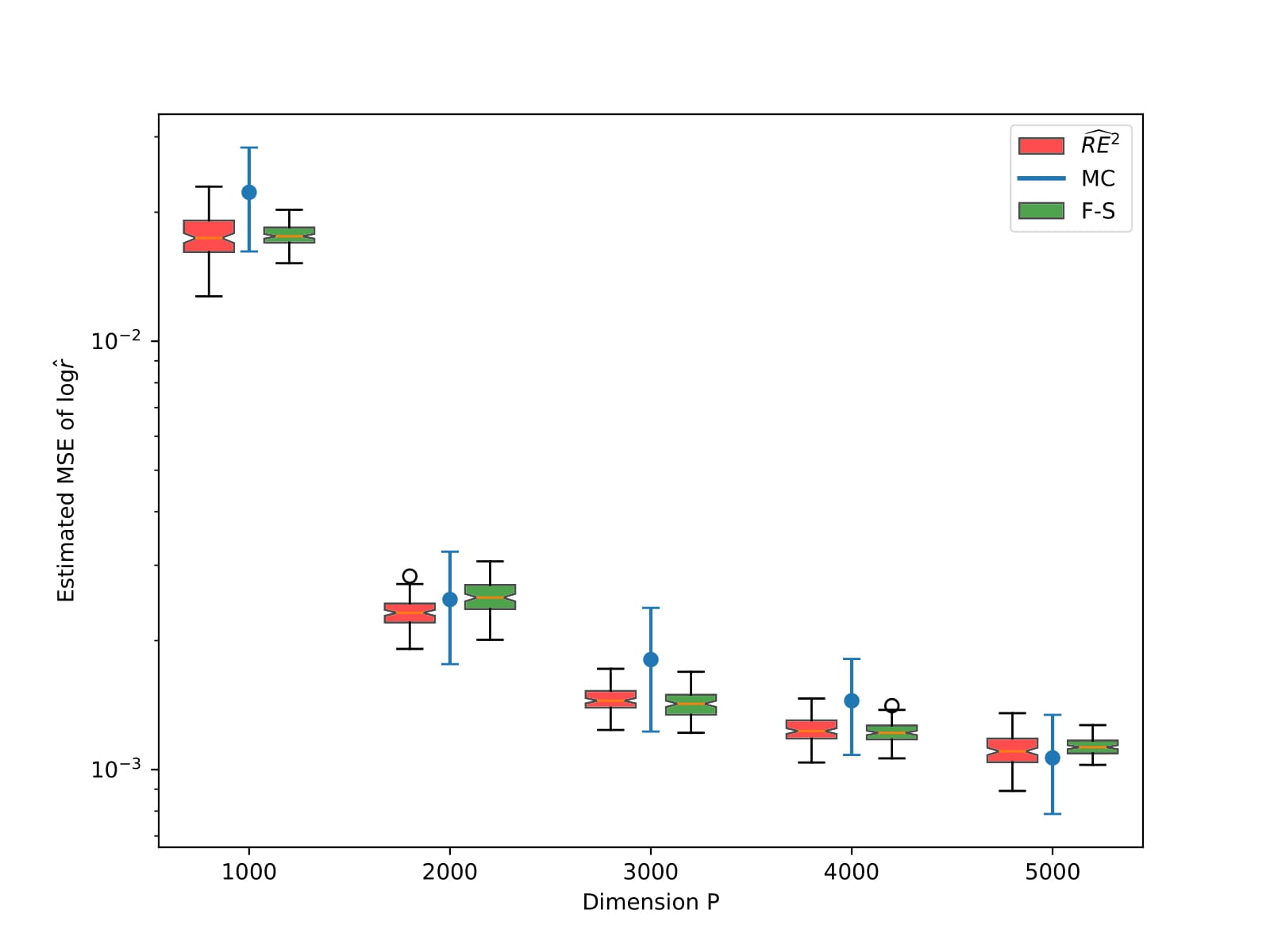}
    \caption{Box plots of 100 repetitions of $\widehat{RE}^2(\hat r'^{(\phi_t)}_{opt})$ based on Algorithm \ref{algo_practical} and the error estimator given in \cite{fruhwirth2004estimating} (F-S) for each sample size $N$. Blue vertical segments are the $2\sigma$ error bars of the corresponding MC estimates of $MSE(\log \hat r'^{(\phi_t)}_{opt})$ based on 100 repetitions.}
    \label{six_cities_RMSE}
\end{figure}

For each choice of $N$, we also  compare $\widehat{RE}^2(\hat r'^{(\phi_t)}_{opt})$ returned from Algorithm \ref{algo_practical} with the error estimator given in \cite{fruhwirth2004estimating} (F-S) and a direct MC estimator of $MSE(\log \hat r'^{(\phi_t)}_{opt})$ in the same way as in the last example. We summarize the results in Figure \ref{six_cities_RMSE}. In principle, it is not appropriate to use $\widehat{RE}^2(\hat r'^{(\phi_t)}_{opt})$ as an estimate of $MSE(\log \hat r'^{(\phi_t)}_{opt})$ in this example as the MCMC samples are correlated. However, from Figure \ref{six_cities_RMSE} we see it agrees with the erorr estimator given in \cite{fruhwirth2004estimating}, which does take autocorrelation into account, and still provides sensible estimate of $MSE(\log \hat r'^{(\phi_t)}_{opt})$ for all choices of $N$. This is likely due to the fact that the autocorrelation in our MCMC samples is weak, as we find that for all $N$, the effective sample sizes for all dimensions of the MCMC samples from $q_1,q_2$ are greater than $0.8N$. When working with weakly correlated MCMC samples, we recommend users to compute both our $\widehat{RE}^2(\hat r'^{(\phi_t)}_{opt})$ and the error estimator given in \cite{fruhwirth2004estimating}, which does take autocorrelation into account, and check if they agree with each other. When the MCMC samples are strongly correlated, we do not recommend using $\widehat{RE}^2(\hat r'^{(\phi_t)}_{opt})$ as the error estimate of $\hat r'^{(\phi_t)}_{opt}$.

\section{Conclusion} \label{sec: conclusion}
In this paper, we give a new estimator of $RE^2(\hat r_{opt})$ based on the variational lower bound of $f$-divergence proposed by \cite{nguyen2010estimating}, discuss the connection between Bridge estimators and the problem of $f$-divergence estimation, and give a computational framework to improve the optimal Bridge estimator using an $f$-GAN \citep{nowozin2016f}. We show that under the i.i.d. assumption, our $f$-GAN-Bridge estimator is optimal in the sense that it asymptotically minimizes the first order approximation of $RE^2(\hat r_{opt}^{(\phi)})$ with respect to the transformed density $q_1^{(\phi)}$. We see that in both simulated and real world examples, our $f$-GB estimator provides accurate estimate of $r$ and outperforms existing methods significantly. In addition, Algorithm \ref{algo_practical} also provides accurate estimates of $RE^2(\hat r_{opt}^{(\phi)})$ and $MSE(\log \hat r_{opt}^{(\phi)})$. In our experience, Algorithm \ref{algo_practical} ($f$-GB) is computationally more demanding than the existing methods. In the numerical examples, the running time of Algorithm \ref{algo_practical} is roughly 1 to 3 times as long as the existing methods such as Warp-U and GBS when the sample size are the same. We have not attempted to formalize the difference in computational cost because of the very different nature of GPU and CPU computing. Although in our examples, it is possible for a competing method to match the performance of the $f$-GB estimator by increasing the number of samples drawn from $q_1,q_2$, it takes longer to run, and can be inefficient or impractical when sampling from $q_1,q_2$ is computationally expensive. This also means the $f$-GB estimator is especially appealing when we only have a limited amount of samples from $q_1,q_2$. In summary, when $q_1,q_2$ are relatively simple-structured and low dimensional, the extra computational cost required by $f$-GB may not be worthwhile. However, when $q_1,q_2$ are high dimensional or have complicated multi-modal structure, we recommend the users to choose the more accurate $f$-GB estimator of $r$, given the key summary role it plays in many applications and publications.

\subsection{Limitations and future works}
One limitation of the $f$-GB estimator is the computational cost. In this paper we parameterize $q_1^{(\phi)}$ as a Normalizing flow. A possible direction of future work is to explore different choices of parameterizations of $q_1^{(\phi)}$. We expect that we can speed up our Algorithm \ref{algo_practical} by replacing a Normalizing flow by simpler transformations such as Warp-I and Warp-II transformation \citep{meng2002warp} at the expense of flexibility. Another limitation is that Algorithm \ref{algo_idealized} is only optimal when samples from $q_1,q_2$ are i.i.d. Recall that $RE^2(\hat r_{opt})$ in \eqref{rmseinH} is derived based on the i.i.d. assumption. Therefore if the samples from $q_1,q_2$ are correlated, then Proposition \ref{prop: minimizing_optimal_bridge_RMSE} no longer holds, and minimizing $H_{s_2}(q_1^{(\phi)},q_2)$ with respect to $q_1^{(\phi)}$ is no longer equivalent to minimizing the first order approximation of $RE^2(\hat r^{(\phi)}_{opt})$. Therefore it is of interest to see if it is possible to give an algorithm that minimizes the first order approximation of $RE^2(\hat r^{(\phi)}_{opt})$ when the samples are correlated. In addition, our approach only focuses on estimating the ratio of normalizing constants between two densities. When we have multiple unnormalized densities and would like to estimate the ratios between their normalizing constants, our approach needs to estimate these quantities separately in a pairwise fashion, which can be inefficient. \cite{meng1996fitting} and \citep{geyer1994estimating} show that one can estimate multiple normalizing constants simultaneously up to a common multiplicative constant. We are also interested in extending our improvement strategy to this multiple densities setup.

\newpage
\setcounter{page}{1}
\begin{center}
{\large\bf SUPPLEMENTARY MATERIAL}
\end{center}

\section{Proofs} \label{appendix:proofs}
Here we give proof of Proposition \ref{prop: estimating_harmonic_divergence}, \ref{prop: bridge_and_fdiv} and \ref{prop: minimizing_optimal_bridge_RMSE}.
\setcounter{theorem}{0}

\begin{proposition}[Estimating $H_{\pi}(q_1,q_2)$]  \label{prop: estimating_harmonic_divergence}
Let $q_1,q_2$ be continuous densities with respect to a base measure $\mu$ on the common support $\Omega$. Let $\{\omega_{ij}\}_{j=1}^{n_i}$ be samples from $q_i$ for $i=1,2$. Let $\pi \in (0,1)$ be the weight parameter. Let $r$ be the true ratio of normalizing constants between $q_1,q_2$, and $C_2 > C_1 > 0$ be constants such that $r \in [C_1, C_2]$.
For $\tilde r \in [C_1,C_2]$, define 
\begin{multline} \label{harmoniclowerbound1}
        G(\tilde r; \pi) = 1 - \frac{1}{\pi} E_{q_1}\left(\frac{\pi\tilde q_2(\omega) \tilde r}{(1-\pi) \tilde q_1(\omega) + \pi\tilde q_2(\omega) \tilde r} \right)^2 - \frac{1}{1-\pi} E_{q_2} \left( \frac{(1-\pi) \tilde q_1(\omega)}{(1-\pi)\tilde q_1(\omega) + \pi\tilde q_2(\omega) \tilde r }\right)^2.
\end{multline} 
Then $H_{\pi}(q_1,q_2)$ satisfies 
\begin{equation} 
    H_{\pi}(q_1,q_2) \geq \sup_{\tilde r \in [C_1,C_2]} G(\tilde r; \pi) ,
\end{equation}
and equality holds if and only if $\tilde r=r$. In addition, let $\hat G(\tilde r ; \pi, \{\omega_{ij}\}_{j=1}^{n_i})$ be an empirical estimate of $G(\tilde r; \pi)$ based on $\{\omega_{ij}\}_{j=1}^{n_i} \sim q_i$ for $i=1,2$. If $\hat r_{\pi} = \argmax_{\tilde r \in [C_1,C_2]} \hat G(\tilde r ; \pi, \{\omega_{ij}\}_{j=1}^{n_i})$, then $\hat r_{\pi}$ is a consistent estimator of $r$, and $\hat G(\hat r_{\pi} ; \pi, \{\omega_{ij}\}_{j=1}^{n_i})$ is a consistent estimator of $H_{\pi}(q_1,q_2)$ as $n_1,n_2 \rightarrow \infty$.
\end{proposition}

\begin{proof}
By definition, we know $0 \leq H_\pi(q_1,q_2) \leq 1$. And by setting $D_f(q_1,q_2) = H_\pi(q_1,q_2)$, $f(u) = 1-\frac{u}{\pi + (1-\pi)u}$ and variational function $V_{\tilde r}(\omega) = f'\left( \frac{\tilde q_1(\omega)}{\tilde q_2(\omega)\tilde r} \right)$ with $\mathcal{V} = \{V_{\tilde r}(\omega) \vert \tilde r \in [C_1,C_2]\}$, we see $G(\tilde r; \pi)$ exists for all $\tilde r \in [C_1,C_2]$ and is the variational lower bound of $H_\pi(q_1,q_2)$ in the form of \eqref{fdivlowerbound}. Then by \cite{nguyen2010estimating}, equality holds if and only if $V_{\tilde r}(\omega) = f'\left( \frac{\tilde q_1(\omega)}{\tilde q_2(\omega)r} \right)$. Since $f(u)$ is strictly convex, $f'(u)$ is monotonically increasing. By assumption, we also know $q_1(\omega),q_2(\omega)>0$ for all $\omega \in \Omega$. Therefore by applying the inverse of $f'$ to both side, we see $V_{\tilde r}(\omega) = f'\left( \frac{\tilde q_1(\omega)}{\tilde q_2(\omega)r} \right)$ if and only if $\tilde r = r$. Therefore $G(r; \pi) = H_\pi(q_1,q_2)$, and $\tilde r = r$ is the unique maximizer of $G(\tilde r; \pi)$.

Now we show the consistency of $\hat r_{\pi}$. It can be shown in a similar fashion to the proof of the consistency of an extremum estimator in e.g. \cite{newey1994large} Theorem 2.1.

We first check $\hat G(\tilde r ; \pi, \{\omega_{ij}\}_{j=1}^{n_i})$ satisfies the uniform law of large number (ULLN). Let $$g_1(\omega,\tilde r) =\frac{1}{\pi} \left( \frac{\pi\tilde q_2(\omega) \tilde r}{(1-\pi) \tilde q_1(\omega) + \pi\tilde q_2(\omega) \tilde r} \right)^2$$ and $$g_2(\omega,\tilde r) =\frac{1}{1-\pi} \left( \frac{(1-\pi) \tilde q_1(\omega)}{(1-\pi)\tilde q_1(\omega) + \pi\tilde q_2(\omega) \tilde r }\right)^2.$$ Since $0 < g_1(\omega,\tilde r), g_2(\omega,\tilde r) < \max(\frac{1}{\pi}, \frac{1}{1-\pi})$ for any $\omega \in \Omega$ and $\tilde r \in [C_1,C_2]$, by \cite{jennrich1969asymptotic} Theorem 2, we have $$\sup_{\tilde r \in [C_1,C_2]} \left\vert \frac{1}{n_1} \sum_{j=1}^{n_1} g_1(\omega_{1j}) - E_{q_1}g_1(\omega,\tilde r)\right\vert \rightarrow_p 0$$ and $$\sup_{\tilde r \in [C_1,C_2]} \left\vert \frac{1}{n_2} \sum_{j=1}^{n_2} g_2(\omega_{2j}) - E_{q_2}g_2(\omega,\tilde r)\right\vert \rightarrow_p 0$$ as $n_1,n_2 \rightarrow \infty$. Since $G(\tilde r; \pi) = 1 -E_{q_1}g_1(\omega,\tilde r)- E_{q_2}g_2(\omega,\tilde r)$, by triangle inequality, we have
 $$\sup_{\tilde r \in [C_1,C_2]} \left\vert \hat G(\tilde r ; \pi, \{\omega_{ij}\}_{j=1}^{n_i}) - G(\tilde r; \pi) \right\vert \rightarrow_p 0$$ as $n_1,n_2 \rightarrow \infty$. Hence $\hat G(\tilde r ; \pi, \{\omega_{ij}\}_{j=1}^{n_i})$ satisfies the uniform law of large number (ULLN).

We also need to check $G(\hat r_{\pi}; \pi) \rightarrow_p G(r; \pi)$:
\begin{flalign}
     G(r; \pi) &\geq G(\hat r_{\pi}; \pi) \quad \mbox{since $r$ is the unique maximizer of $G(\tilde r; \pi)$} \\
     & = \hat G(\hat r_{\pi} ; \pi, \{\omega_{ij}\}_{j=1}^{n_i}) + (G(\hat r_{\pi}; \pi) -  \hat G(\hat r_{\pi} ; \pi, \{\omega_{ij}\}_{j=1}^{n_i})) \\
     &\geq \hat G(r ; \pi, \{\omega_{ij}\}_{j=1}^{n_i}) + (G(\hat r_{\pi}; \pi) -  \hat G(\hat r_{\pi} ; \pi, \{\omega_{ij}\}_{j=1}^{n_i}) ) \\
     &= G(r; \pi) + (\hat G(r ; \pi, \{\omega_{ij}\}_{j=1}^{n_i}) - G(r; \pi)) + (G(\hat r_{\pi}; \pi) -  \hat G(\hat r_{\pi} ; \pi, \{\omega_{ij}\}_{j=1}^{n_i}))
\end{flalign}
Since the last two terms converge in probability to 0 by ULLN, we have $G(r; \pi) \geq G(\hat r_{\pi}; \pi) \geq G(r; \pi) + o_{p}(1)$. This implies $G(\hat r_{\pi}; \pi) \rightarrow_p G(r; \pi)$. 

Since $[C_1,C_2]$ is compact and $G(\tilde r; \pi)$ is continuous, for every open interval $A \subset [C_1,C_2]$ containing $r$, we have $\sup_{\tilde r \not\in A} G(\tilde r; \pi) < G(r; \pi)$. On the other hand, $G(\hat r_{\pi}; \pi) \rightarrow_p G(r; \pi)$ implies that $Pr(G(\hat r_{\pi}; \pi) > \sup_{\tilde r \not\in A} G(\tilde r; \pi))$ converges to 1. Therefore $Pr(\hat r_{\pi} \in A) $ also converges to 1, i.e. $\hat r_{\pi}$ is a consistent estimator of $r$.

Finally we show $\hat G(\hat r_{\pi} ; \pi, \{\omega_{ij}\}_{j=1}^{n_i})$ is a consistent estimator of $H_{\pi}(q_1,q_2)$. Recall that $G(r; \pi) = H_{\pi}(q_1,q_2)$. By triangle inequality, 
\begin{align}
  \left\vert \hat G(\hat r_{\pi} ; \pi, \{\omega_{ij}\}_{j=1}^{n_i}) - H_{\pi}(q_1,q_2) \right\vert &\leq \left\vert \hat G(\hat r_{\pi} ; \pi, \{\omega_{ij}\}_{j=1}^{n_i}) - G(\hat r_{\pi}; \pi) \right\vert + \left\vert G(\hat r_{\pi}; \pi) - G(r; \pi) \right\vert
\end{align}
The first term on the RHS converges to 0 in probability by ULLN. The second term on the RHS converges to 0 in probability by continuous mapping theorem and the fact that $\hat r_{\pi}$ is a consistent estimator of $r$. Hence $\hat G(\hat r_{\pi} ; \pi, \{\omega_{ij}\}_{j=1}^{n_i})$ is a consistent estimator of $H_{\pi}(q_1,q_2)$.
\end{proof}

\begin{proposition}[Connection between $\hat r^{(f)}$ and Bridge sampling] \label{prop: bridge_and_fdiv}
Suppose $f(u):\mathbb{R}^+ \to \mathbb{R}$ is strictly convex, twice differentiable and satisfies $f(1)=0$. Let $\{\omega_{ij}\}_{j=1}^{n_i}$ be samples from $q_i$ for $i=1,2$. If $\hat r^{(f)} = \argmax_{\tilde r \in \mathbb{R}^+} \hat G_f(\tilde r; \{\omega_{ij}\}_{j=1}^{n_i})$ is a stationary point of $\hat G_f(\tilde r; \{\omega_{ij}\}_{j=1}^{n_i})$ in \eqref{scalarobj}, then $\hat r^{(f)}$ satisfies the following equation
\begin{equation}
    \hat r^{(f)} = \frac{\frac{1}{n_2} \sum_{j=1}^{n_2} f''\left(\frac{\tilde q_1(\omega_{2j})}{\tilde q_2(\omega_{2j}) \hat r^{(f)}}\right)\frac{\tilde q_1(\omega_{2j})}{\tilde q_2(\omega_{2j})^2}\tilde q_1(\omega_{2j})}{\frac{1}{n_1} \sum_{j=1}^{n_1}  f''\left(\frac{\tilde q_1(\omega_{1j})}{\tilde q_2(\omega_{1j}) \hat r^{(f)}}\right) \frac{\tilde q_1(\omega_{1j})}{\tilde q_2(\omega_{1j})^2}\tilde q_2(\omega_{1j}) }. 
\end{equation}
where $f''$ is the second order derivative of $f$.
\end{proposition}

\begin{proof}
Note that the objective function can be written as 
\begin{align}
    \hat G_f(\tilde r; \{\omega_{ij}\}_{j=1}^{n_i}) 
    &= \frac{1}{n_1} \sum_{j=1}^{n_1} f'\left(\frac{\tilde q_1(\omega_{1j})}{\tilde q_2(\omega_{1j})\tilde r}\right) - \frac{1}{n_2} \sum_{j=1}^{n_2} f^* \circ f'\left(\frac{\tilde q_1(\omega_{2j})}{\tilde q_2(\omega_{2j})\tilde r}\right)\\ 
    &= \frac{1}{n_1} \sum_{j=1}^{n_1} f'\left(\frac{\tilde q_1(\omega_{1j})}{\tilde q_2(\omega_{1j})\tilde r}\right) - \frac{1}{n_2} \sum_{j=1}^{n_2} \frac{\tilde q_1(\omega_{2j})}{\tilde q_2(\omega_{2j})\tilde r}f'\left(\frac{\tilde q_1(\omega_{2j})}{\tilde q_2(\omega_{2j})\tilde r}\right) - f\left(\frac{\tilde q_1(\omega_{2j})}{\tilde q_2(\omega_{2j})\tilde r}\right)
\end{align}
using the equation $f^*\circ f'(u) = uf'(u)-f(u)$ \citep{uehara2016generative}.
Let $S(\tilde r; \{\omega_{ij}\}_{j=1}^{n_i}) = \frac{d}{d\tilde r} \hat G_f(\tilde r; \{\omega_{ij}\}_{j=1}^{n_i})$. If $\hat r^{(f)}$ is the stationary point, then it satisfies the ``score" equation
\begin{align}
    0&=S(\hat r^{(f)}; \{\omega_{ij}\}_{j=1}^{n_i})\\
    &= -\frac{1}{n_1} \sum_{j=1}^{n_1}  f''\left(\frac{\tilde q_1(\omega_{1j})}{\tilde q_2(\omega_{1j})\hat r^{(f)}}\right)\frac{\tilde q_1(\omega_{1j})}{\tilde q_2(\omega_{1j})(\hat r^{(f)})^2} + \frac{1}{n_2} \sum_{j=1}^{n_2} f''\left(\frac{\tilde q_1(\omega_{2j})}{\tilde q_2(\omega_{2j})\hat r^{(f)}}\right)\frac{\tilde q_1(\omega_{2j})^2}{\tilde q_2(\omega_{2j})^2(\hat r^{(f)})^3}
\end{align}
The above equation can be rearranged as
\begin{equation}
    \hat r^{(f)} = \frac{\frac{1}{n_2} \sum_{j=1}^{n_2} f''\left(\frac{\tilde q_1(\omega_{2j})}{\tilde q_2(\omega_{2j}) \hat r^{(f)}}\right)\frac{\tilde q_1(\omega_{2j})}{\tilde q_2(\omega_{2j})^2}\tilde q_1(\omega_{2j})}{\frac{1}{n_1} \sum_{j=1}^{n_1}  f''\left(\frac{\tilde q_1(\omega_{1j})}{\tilde q_2(\omega_{1j}) \hat r^{(f)}}\right) \frac{\tilde q_1(\omega_{1j})}{\tilde q_2(\omega_{1j})^2}\tilde q_2(\omega_{1j}) }
\end{equation}
\end{proof}

\begin{proposition}[Minimizing $RE^2(\hat r^{(\phi)}_{opt})$ using Algorithm \ref{algo_idealized}] \label{prop: minimizing_optimal_bridge_RMSE}
If $(\phi^*, \tilde r^*)$ is the solution of $\min_{\phi \in \mathbb{R}^l} \max_{\tilde r\in \mathbb{R}^+} G(\phi, \tilde r; s_2)$ defined in Algorithm \ref{algo_idealized}, then $G(\phi, \tilde r^*; s_2) = H_{s_2}(q_1^{(\phi)},q_2) $ for all $\phi \in \mathbb{R}^l$, $T_{\phi^*}$ minimizes  $H_{s_2}(q_1^{(\phi)},q_2)$ with respect to $T_\phi \in \mathcal{T}$. If the samples $\{\omega_{ij}\}_{j=1}^{n_i} \overset{i.i.d.}{\sim} q_i$ for $i=1,2$, then $T_{\phi^*}$ also minimizes $RE^2(\hat r^{(\phi)}_{opt})$ with respect to $T_\phi \in \mathcal{T}$ up to the first order. 
\end{proposition}
\begin{proof}
For every $\phi \in \mathbb{R}^l$, $G(\phi, \tilde r; s_2)$ is the variational lower bound of $H_{s_2}(q_1^{(\phi)},q_2)$ in the form of \eqref{fdivlowerbound}. By Proposition \ref{prop: estimating_harmonic_divergence}, we know $G(\phi, \tilde r; s_2)$ is uniquely maximized at $\tilde r = r$ w.r.t $\tilde r > 0$, and $G(\phi, r; s_2) = H_{s_2}(q_1^{(\phi)},q_2)$. Since $(\phi^*, \tilde r^*)$ is the solution of $\min_{\phi \in \mathbb{R}^l} \max_{\tilde r\in \mathbb{R}^+} G(\phi, \tilde r; s_2)$, it is straightforward to verify that $\tilde r^* = r$, and $H_{s_2}(q_1^{(\phi^*)},q_2) = G(\phi^*, \tilde r^*; s_2) \leq G(\phi, \tilde r^*; s_2)$ for any $\phi \in \mathbb{R}^l$. Hence $T_{\phi^*}$ minimizes $H_{s_2}(q_1^{(\phi)},q_2)$ with respect to $T_\phi \in \mathcal{T}$.

Since the leading term of $RE^2(\hat r^{(\phi)}_{opt})$ in \eqref{rmseinH} is a monotonically increasing function of $H_{s_2}(q_1^{(\phi)},q_2)$, $T_{\phi^*}$ minimizes $H_{s_2}(q_1^{(\phi)},q_2)$ w.r.t. $T_\phi \in \mathcal{T}$ implies $T_{\phi^*}$ minimizes the leading term of $RE^2(\hat r^{(\phi)}_{opt})$ w.r.t. $T_\phi \in \mathcal{T}$ under the assumption that samples $\{\omega_{ij}\}_{j=1}^{n_i} \sim q_i$ are i.i.d. for $i=1,2$.
\end{proof}

\section{Dimension matching} \label{appendix:dimensionmatching}

The standard Bridge estimator \eqref{bridgeestimate} can not be applied directly when $\Omega_1$,$\Omega_2$ have different dimensions. This is a common and important case. For example, if we would like to compare two models $M_1,M_2$ by estimating the Bayes factor between them, the standard Bridge estimator \eqref{bridgeestimate} is not directly applicable when $M_1,M_2$ are controlled by parameters that live in  different dimensions. 

Assume $\Omega_1 = \mathbb{R}^{d_1}$, $\Omega_2 = \mathbb{R}^{d_2}$ and $d_1<d_2$. Discrete cases work similarly. \cite{chen1997estimating} resolve the problem of unequal dimensions by first augmenting the lower dimensional density $q_1(\omega_1)$ by some completely known, normalized density $p(\theta\vert\omega_1)$ where $\theta \in \mathbb{R}^{d_2-d_1}$. This ensures the augmented density 
\begin{align}
    q_1^*(\omega_1,\theta) &= \tilde q_1^*(\omega_1,\theta)/Z_1 \\
    &= \tilde q_1(\omega_1) p(\theta\vert\omega_1)/Z_1
\end{align}
matches the dimension of the $q_2$, where $\tilde q_1^*(\omega_1,\theta)$ is the unnormalized augmented density. Let $\Omega_1^*$ be the augmented support of $q_1^*$. Since the augmented density $q_1^*(\omega_1,\theta)$ and the original $q_1(\omega_1)$ have the same normalizing constant, we can then treat $r=Z_1/Z_2$ as the ratio between the normalizing constants of $q_1^*(\omega_1,\theta)$ and $q_2(\omega_2)$, and form an ``augmented" Bridge estimator $\hat r^*_\alpha$ based on the augmented densities.  \cite{chen1997estimating} also show that when the free function $\alpha(\omega) = \alpha_{opt}(\omega)$, the optimal augmenting density $p_{opt}(\theta\vert\omega_1)$ which attains the minimal $RE^2(\hat r^*_{\alpha_{opt}})$ is
\begin{equation}
    p_{opt}(\theta\vert\omega_1) = q_2(\theta\vert\omega_1) 
\end{equation}
i.e. $p_{opt}(\theta\vert\omega_1)$ is the conditional distribution of the remaining $d_2-d_1$ entries of $\omega_2 \sim q_2$ given that the first $d_1$ entries are $\omega_1$. However, $q_2(\theta\vert\omega_1)$ is difficult to evaluate or sample from in general. One way to approximate the optimal augmenting distribution $q_2(\theta\vert\omega_1)$ is to incorporate the augmented density $q_1^*(\omega_1,\theta)$ with a Normalizing flow (see Sec \ref{sec: transform}). Assume we start with an arbitrary augmenting density $p(\theta\vert\omega_1)$, e.g. standard Normal  $N(0,I_{d_2-d_1})$. Consider a Normalizing flow with base density $q_1^*$ and a smooth and invertible transformation $T^*_1: \Omega_1^* \to \Omega_2$ that aims to map the augmented $q_1^*$ to the target $q_2$. Let $(\omega_1^{(T)}, \theta^{(T)}) = T^*_1(\omega_1, \theta)$. If $q_1^{*(T)}(\omega_1^{(T)},\theta^{(T)}) \approx q_2(\omega_1^{(T)}, \theta^{(T)})$ for all $(\omega_1^{(T)},\theta^{(T)}) \in \Omega_2$, i.e. $q_1^{*(T)}$ is a good approximation to $q_2$, then for the transformed augmenting density, we expect $q_1^{*(T)}(\theta^{(T)}\vert\omega_1^{(T)}) \approx q_2(\theta^{(T)}\vert\omega_1^{(T)})$ as well. This means the transformed $q_1^{*(T)}$ automatically learns the optimal augmenting density.

\section{Bias in the estimator of $H_\pi(q_1,q_2)$ given in Proposition \ref{prop: estimating_harmonic_divergence}} \label{appendix:positive_bias}

In Proposition \ref{prop: estimating_harmonic_divergence}, the estimator of $H_\pi(q_1,q_2)$ is given in the form of the maximum of the function $\hat G(\tilde r; \pi, \{\omega_{ij}\}_{j=1}^{n_i})$ w.r.t. $\tilde r$. Let $r=Z_1/Z_2$ be the true ratio of normalizing constants. Even though $\hat G(r; \pi, \{\omega_{ij}\}_{j=1}^{n_i})$ is an unbiased estimator of $H_\pi(q_1,q_2)$, our proposed estimator $\hat G(\hat r_\pi; \pi, \{\omega_{ij}\}_{j=1}^{n_i})$ sufferes from a positive bias. Intuitively speaking, this bias is analogous to the fact that the training error of a model is an underestimate of the true error. We use a toy example to illustrate this bias. Let $x\in\mathbb{R}^3$, $\sigma_1=1$ and $\sigma_2=3$. Let
\begin{equation}
    \tilde q_1 = \exp\left(-\frac{\left\lVert x \right\rVert^2_2}{2\sigma_1^2}\right) 
\end{equation}
\begin{equation}
    \tilde q_2 = \exp\left(-\frac{\left\lVert x \right\rVert^2_2}{2\sigma_2^2}\right).
\end{equation}

In other words, $\tilde q_1, \tilde q_2$ are the unnormalized pdf of two Gaussian distributions with zero mean and covaraince $\sigma_1I_3, \sigma_2I_3$ respectively, where $I_p$ is the $p\times p$ identity matrix. Let $q_1,q_2$ be the corresponding normalized densities. Let $\pi=0.5$. It is straightforward to form an unbiased MC estimate of $H_\pi(q_1,q_2)$ using \eqref{harmonic_def}. Let $N=\{10,20,30, ...,1000\}$. For each value of $N$, we repeatedly compute the proposed estimator $\hat G(\hat r_\pi; \pi, \{\omega_{ij}\}_{j=1}^{n_i})$ 1000 times based on $n_1=n_2=N$ i.i.d. samples from $q_1,q_2$ respectively. We then report the sample mean of the repeated estimates for each $N$, and compare it with a high precision unbiased MC estimator of $H_\pi(q_1,q_2)$.  From Figure \ref{fdiv_bias} we see $\hat G(\hat r_\pi; \pi, \{\omega_{ij}\}_{j=1}^{n_i})$ does exhibit a positive bias when $N < 500$, and this bias gradually vanishes as sample size increases.

\begin{figure}
    \centering
    \includegraphics[width=\textwidth]{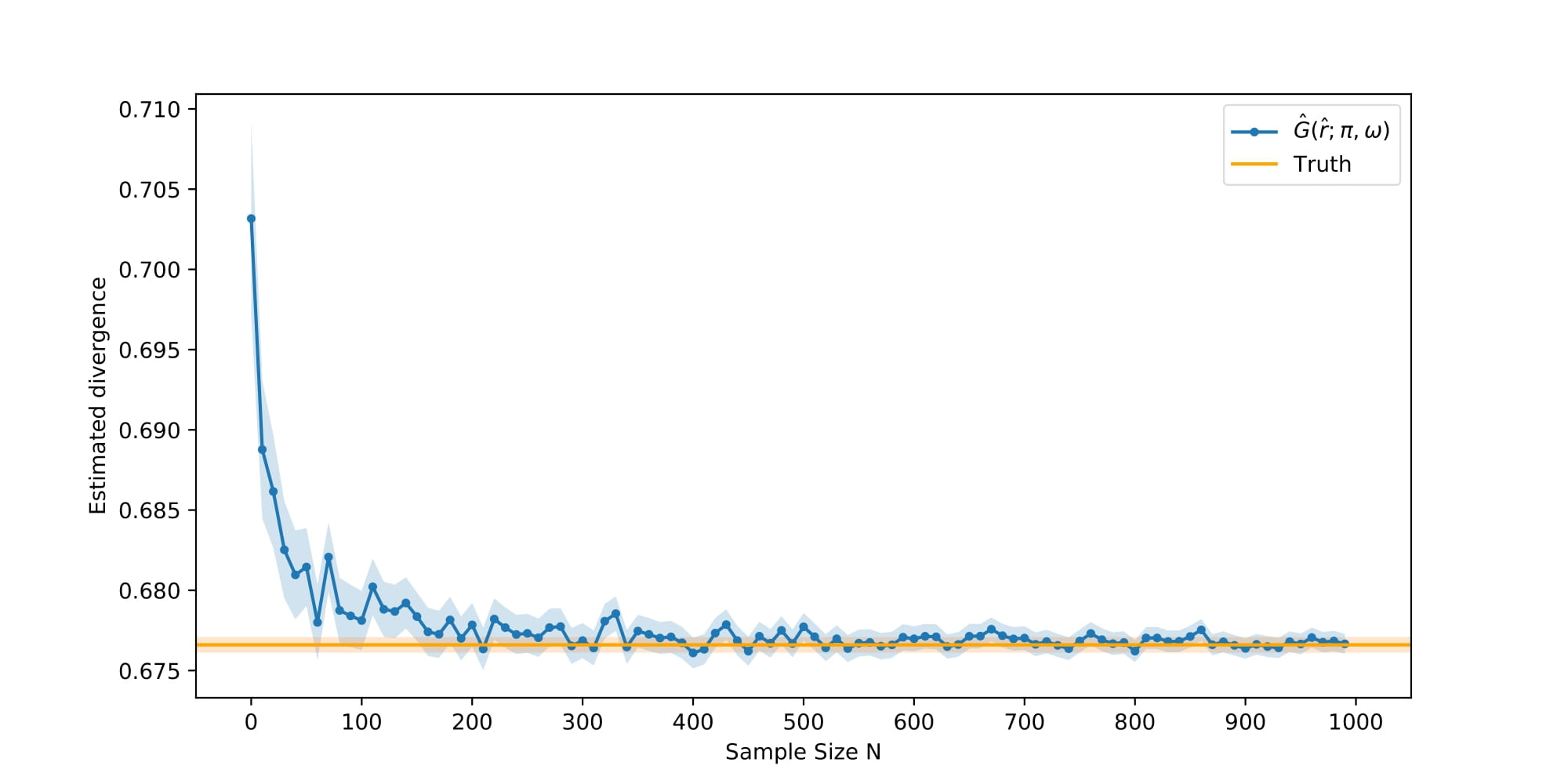}
    \caption{Sample mean of the estimated $H_\pi(q_1,q_2)$ for each sample size $N$. The blue band represents the $2\sigma$ error bars of the sample means. Orange line represents a high precision unbiased MC estimator of $H_\pi(q_1,q_2)$. Orange band represents the $2\sigma$ error bar of the MC estimate.}
    \label{fdiv_bias}
\end{figure}

Even though we have not found a practical strategy to correct this bias, we believe this bias does not prevent our proposed estimator from being useful in practice. Since our estimator of $RE^2(\hat r_{opt})$ in \eqref{minimalrmseestimate} is a monotonically increasing function of $\hat G(\hat r_\pi; \pi, \{\omega_{ij}\}_{j=1}^{n_i})$, the positive bias in $\hat G(\hat r_\pi; \pi, \{\omega_{ij}\}_{j=1}^{n_i})$ leads to a positive bias in $\widehat{RE}^2(\hat r_{opt})$. Therefore $\widehat{RE}^2(\hat r_{opt})$ will systemically overestimate the true error $RE^2(\hat r_{opt})$, which will lead to more conservative conclusions (e.g. wider error bars). This is certainly not ideal, but we believe in practice, it is less harmful than underestimating the variability in $\hat r_{opt}$. In addition, we see that our proposed error estimator provides accurate estimates of the MSE of $\log \hat r'^{(\phi^t)}_{opt}$ in both examples in Sec \ref{sec: mixture of rings} and \ref{sec: glmm}, indicating the effectiveness of it.

\section{$f$-divergence and Bridge estimators} \label{appendix:fdiv_and_Bridge}
Here we give some examples of Proposition \ref{prop: bridge_and_fdiv}. We demonstrate how the Bridge estimators with different choices of free function $\alpha(\omega)$ arise from estimating different $f$-divergences. 

\begin{example}[KL divergence and the Importance sampling estimator]
KL divergence 
\begin{equation}
    KL(q_1,q_2) = \int_\Omega \log\left(\frac{q_1(\omega)}{q_2(\omega)}\right)q_1(\omega)d\mu(\omega)
\end{equation}
is an $f$-divergence with $f(u) = u\log u$, $f'(u) = 1+\log u$ and $f^*(t) = \exp(t-1)$. This specification corresponds to $V_{\tilde r}(\omega) = 1+\log \frac{\tilde q_1(x)}{\tilde q_2(x) \tilde r}$. Suppose we have $\{\omega_{1j}\}_{j=1}^{n_1} \sim q_1$ and $\{\omega_{2j}\}_{j=1}^{n_2} \sim q_2$. The maximizer $\hat r_{KL}$ of equation \eqref{fdiv_empirical_general} under this specification is 

\begin{align}
    \hat r_{KL} &= \argmax_{\tilde r \in \mathbb{R}^+} \frac{1}{n_1} \sum_{j=1}^{n_1} \left( 1+\log \frac{\tilde q_1(\omega_{1j})}{\tilde q_2(\omega_{1j}) \tilde r} \right) - \frac{1}{n_2} \sum_{j=1}^{n_2} \frac{\tilde q_1(\omega_{2j})}{\tilde q_2(\omega_{2j}) \tilde r} \\
     &= \frac{1}{n_2} \sum_{j=1}^{n_2} \frac{\tilde q_1(\omega_{2j})}{\tilde q_2(\omega_{2j})}
\end{align}
Note that this is the Importance sampling estimator of $r$ using $q_2$ as the proposal, which is a special case of a Bridge estimator with free function $\alpha(\omega) = \tilde q_2(\omega)^{-1}$. Therefore we recover the Importance sampling estimator from the problem of estimating $KL(q_1,q_2)$. It is also straightforward to verify that estimating $KL(q_2,q_1)$ leads to $\hat r_{KL} = \left( \frac{1}{n_1} \sum_{j=1}^{n_1} \frac{\tilde q_2(\omega_{1j})}{\tilde q_1(\omega_{1j})} \right)^{-1}$, the Reciprocal Importance sampling estimator of $r$ based on a similar argument.
\end{example}

\begin{example}[Weighted Jensen-Shannon divergence and the optimal Bridge estimator]
 Weighted Jensen-Shannon divergence is defined as 
 \begin{equation}
     JS_{\pi} (q_1,q_2) = \pi KL(q_1,q_{\pi}) + (1-\pi) KL(q_2,q_{\pi})
 \end{equation}
where $\pi \in (0,1)$ is the weight parameter and $q_{\pi} = \pi q_1 + (1-\pi)q_2$ is a mixture of $q_1$ and $q_2$. Weighted Jensen-Shannon divergence is an $f$-divergence with $f(u) =  \pi u\log u-(1-\pi+\pi u)\log (1-\pi+\pi u)$, $f'(u) = \pi \log \frac{u}{1-\pi+\pi u}$ and $f^*(t) = (1-\pi) \log \frac{1-\pi}{1-\pi \exp(t/\pi)}$. This corresponds to $V_{\tilde r}(\omega) = \pi \log \frac{\tilde q_1(\omega)}{\pi \tilde q_1(\omega) + (1-\pi)\tilde q_2(\omega) \tilde r}$. Suppose we have $\{\omega_{1j}\}_{j=1}^{n_1}\sim q_1$ and $\{\omega_{2j}\}_{j=1}^{n_2} \sim q_2$. Let the weight $\pi = \frac{n_1}{n_1+n_2} = s_1$, then under this specification, the maximizer $\hat r_{JS}$ of Equation \eqref{fdiv_empirical_general} is defined as
 \begin{multline}
     \hat r_{JS} = \argmax_{\tilde r \in \mathbb{R}^+} \frac{\pi}{n_1} \sum_{j=1}^{n_1} \log \frac{\tilde q_1(\omega_{1j})}{\pi \tilde q_1(\omega_{1j}) + (1-\pi)\tilde q_2(\omega_{1j}) \tilde r} \\+ \frac{1-\pi}{n_2} \sum_{j=1}^{n_2} \log \frac{\tilde q_2(\omega_{2j})\tilde r}{\pi \tilde q_1(\omega_{2j}) + (1-\pi)\tilde q_2(\omega_{2j}) \tilde r} \label{JSobj}
 \end{multline}
 
It is straightforward to verify that $\hat r_{JS}$ satisfies
\begin{equation}
    \hat r_{JS} = \frac{\frac{1}{n_2}\sum_{j=1}^{n_2}\frac{\pi\tilde q_1(\omega_{2j})}{\pi \tilde q_1(\omega_{2j}) + (1-\pi)\tilde q_2(\omega_{2j})\hat r_{JS}}}{\frac{1}{n_1}\sum_{j=1}^{n_1}\frac{(1-\pi)\tilde q_2(\omega_{1j})}{\pi \tilde q_1(\omega_{1j}) + (1-\pi)\tilde q_2(\omega_{1j}) \hat r_{JS}}} \label{JS_r_estimate}
\end{equation}

On the other hand, recall that the asymptotically optimal Bridge estimator $\hat r_{opt}$ must be a fixed point of the iterative procedure  \eqref{iterativeoptimal}. Therefore $\hat r_{opt}$ satisfies the following ``score equation"  \citep{meng1996simulating}
\begin{align}
S(\hat r_{opt}) 
    &= -\sum_{j=1}^{n_1}\frac{s_2\tilde q_2(\omega_{1j})\hat r_{opt}}{s_1 \tilde q_1(\omega_{1j}) + s_2\tilde q_2(\omega_{1j}) \hat r_{opt}} + \sum_{j=1}^{n_2}\frac{s_1\tilde q_1(\omega_{2j})}{s_1 \tilde q_1(\omega_{2j}) + s_2\tilde q_2(\omega_{2j})\hat r_{opt}} \label{bridgescore} \\
    &= 0
\end{align}

When $\pi = s_1$, Equation \eqref{JS_r_estimate} is precisely the score equation \eqref{bridgescore} of $\hat r_{opt}$. This implies $\hat r_{JS}=\hat r_{opt}$ because the root of the score function $S(r)$ in \eqref{bridgescore} is unique \citep{meng1996simulating}. Therefore $\hat r_{JS}$ is equivalent to the asymptotically optimal Bridge estimator $\hat r_{opt}$, and we recover $\hat r_{opt}$ from the problem of estimating the weighted Jensen-Shannon divergence between $q_1,q_2$.
\end{example}

\begin{example}[Squared Hellinger distance and the geometric Bridge estimator]
 Squared Hellinger distance
 \begin{equation}
     H^2(q_1,q_2) = \int_\Omega \left(\sqrt{q_1(\omega)} - \sqrt{q_2(\omega)}\right)^2 d\mu(\omega)
 \end{equation}
is an $f$-divergence with $f(u) = (\sqrt{u}-1)^2$, $f'(u)=1-u^{-\frac{1}{2}}$ and $f^*(t) = \frac{t}{1-t}$. This specification corresponds to $V_{\tilde r}(\omega) = 1 - \sqrt{\frac{\tilde q_2(\omega)\tilde r}{\tilde q_1(\omega)}}$. Again suppose we have $\{\omega_{1j}\}_{j=1}^{n_1} \sim q_1$ and $\{\omega_{2j}\}_{j=1}^{n_2} \sim q_2$. The maximizer $\hat r_{H^2}$ of equation \eqref{fdiv_empirical_general} under this specification is 

\begin{align}
    \hat r_{H^2} &= \argmax_{\tilde r \in \mathbb{R}^+} \frac{1}{n_1} \sum_{j=1}^{n_1} \left( 1 - \sqrt{\frac{\tilde q_2(\omega_{1j})\tilde r}{\tilde q_1(\omega_{1j})}} \right) - \frac{1}{n_2} \sum_{j=1}^{n_2} \left( \sqrt{\frac{\tilde q_1(\omega_{2j})}{\tilde q_2(\omega_{2j}) \tilde r}} - 1\right) \\
    &= \frac{\frac{1}{n_2} \sum_{j=1}^{n_2}\sqrt{\tilde q_1(\omega_{2j})/\tilde q_2(\omega_{2j})}}{\frac{1}{n_1} \sum_{j=1}^{n_1}\sqrt{\tilde q_2(\omega_{1j})/\tilde q_1(\omega_{1j})}}
\end{align}
This is precisely the geometric Bridge estimator $\hat r_{geo}$ in \cite{meng1996simulating} with free function $\alpha(\omega) = (\tilde q_1(\omega) \tilde q_2(\omega))^{-\frac{1}{2}}$. 
\end{example}

In addition to the fact that Bridge estimators with different choices of free function $\alpha(\omega)$ can  arise from estimating different $f$-divergences, the asymptotic RMSE of $\hat r_{KL}, \hat r_{opt}$ and $\hat r_{geo}$ can also be written as functions of some $f$-divergences between the two distributions. For example, \cite{meng1996simulating} show that $RE^2(\hat r_{geo})$ is a function of the Hellinger distance between $q_1,q_2$, \cite{wang2020warp} show that $RE^2(\hat r_{opt})$ is a function of $H_\pi(q_1,q_2)$ in \eqref{rmseinH}. It is also straightforward to show $RE^2(\hat r_{KL})$ is a function of the R\'enyi's 2-divergence between $q_1,q_2$ using the formula of $RE^2(\hat r_{\alpha})$ given by (3.2) in \cite{meng1996simulating}. However, the general connection between $RE^2(\hat r_{\alpha})$ and the $f$-divergence between the two distributions is not obvious. For example, suppose we choose the constant free function $\alpha(\omega)= 1$ discussed in \cite{meng1996simulating}. Then we can work out the asymptotic RMSE of the corresponding Bridge estimator $\hat r_1$ using the formula of $RE^2(\hat r_{\alpha})$ in \cite{meng1996simulating}. Suppose $q_1,q_2$ are defined on a common support $\Omega$, the resulting $RE^2(\hat r_1)$ takes the form 
\begin{equation}
    RE^2(\hat r_1) = (s_1 s_2 n)^{-1} \frac{\int_\Omega q_1(\omega)q_2(\omega)(s_1 q_1(\omega) + s_2 q_2(\omega)) d\omega}{\left(\int_\Omega q_1(\omega)q_2(\omega) d\omega \right)^2} + o\left(\frac{1}{n}\right)
\end{equation}
It is not obvious how this expression can be rearranged into a function of some $f$-divergence between $q_1,q_2$, as the leading term of $RE^2(\hat r_1)$ is in the form of ratio of integrals, which is different from the general functional form of an $f$-divergence. This example suggests that there may not be a general connection between the $f$-divergence between two distributions and the RMSE of a Bridge estimator apart from common Bridge estimators such as the optimal Bridge estimator and the geometric Bridge estimator.  We have also tried the other direction. We started from an $f$-divergence. By Proposition \ref{prop: bridge_and_fdiv}, estimating the $f$-divergence leads to a Bridge estimator with a specific free function in the form of $\alpha_f(\omega) = f''\left(\frac{\tilde q_1(\omega)}{\tilde q_2(\omega) \hat r^{(f)}}\right)\frac{\tilde q_1(\omega)}{\tilde q_2(\omega)^2}$. We then substitute this $\alpha_f(\omega)$ into the formula of $RE^2(\hat r_{\alpha})$ in \cite{meng1996simulating}. The functional form of the resulting expression is still also very different from the functional form of an $f$-divergence in the general case, and it is not obvious to see how it can be rearranged into a function of some $f$-divergence between the two distributions. This also suggests that there may not be a general connection between $RE^2(\hat r_{\alpha})$ and the $f$-divergence between two distributions. 

\section{Other choices of $f$-divergence} \label{appendix:other_fdiv}

The weighted Harmonic divergence $H_{\pi}(q_1^{(\phi)},q_2)$ is not the only choice of $f$ divergence to minimize if our goal is to increase the overlap between $q_1^{(\phi)}$ and $q_2$. Recall that in Algorithm \ref{algo_practical} we parameterize $q_1^{(\phi)}$ as a Normalizing flow. Since both $\tilde q_1,\tilde q_2$ are available, it is also possible to estimate $q_1^{(\phi)}$ by maximizing the log likelihood $\log \tilde q_2(T_{\phi}(\omega_{1j}))$ or $\log \tilde q_1^{(\phi)}(\omega_{2j})$ without using the $f$-GAN framework. This is asymptotically equivalent to approximating $q_2$ using $q_1^{(\phi)}$ by minimizing $KL(q_1^{(\phi)},q_2)$ or $KL(q_2,q_1^{(\phi)})$. In addition to the KL divergence, other common $f$-divergences such as the Squared Hellinger distance and the weighted Jensen-Shannon divergence are also sensible measures of overlap between densities, and we can minimize these divergences using the $f$-GAN framework in a similar fashion to Algorithm \ref{algo_idealized}. However, $f$-divergences such as KL divergence, Squared Hellinger distance and the weighted Jensen-Shannon divergence are inefficient compared to the weighted Harmonic divergence $H_{s_2}(q_1^{(\phi)},q_2)$ if our goal is to minimize $RE^2(\hat r_{opt}^{(\phi)})$. In Proposition \ref{prop: minimizing_optimal_bridge_RMSE} we have shown that under the i.i.d. assumption, minimizing  $H_{s_2}(q_1^{(\phi)},q_2)$ with respect to $q_1^{(\phi)}$ is equivalent to minimizing the first order approximation of $RE^2(\hat r_{opt}^{(\phi)})$ directly. On the other hand, \cite{meng1996simulating} show that asymptotically, 
\begin{equation}
    RE^2(\hat r_{opt}) \leq (ns_1s_2)^{-1} \left(\left( 1 - \frac{1}{2} H^2(q_1,q_2) \right)^{-2} - 1\right) \label{rmse_Hellingerbound}
\end{equation} 
up to the first order, where $n = n_1+n_2$ and $s_i = n_i/n$ for $i=1,2$ under the same i.i.d. assumption. Note that $H^2(q_1^{(\phi)},q_2) \rightarrow 0$ also implies $RE^2(\hat r^{(\phi)}_{opt}) \rightarrow 0$, but minimizing $H^2(q_1^{(\phi)},q_2)$ with respect to the density $q_1^{(\phi)}$ can be viewed as minimizing an \emph{upper bound} of the first order approximation of $RE^2(\hat r_{opt}^{(\phi)})$, which is less efficient.
Here we show minimizing $JS_{\pi}(q_1^{(\phi)},q_2)$, $KL(q_1^{(\phi)},q_2)$ or $KL(q_2,q_1^{(\phi)})$ with respect to $q_1^{(\phi)}$ can also be viewed as  minimizing some upper bounds of the first order approximation of $RE^2(\hat r_{opt}^{(\phi)})$.

\begin{proposition}[Upper bounds of $RE^2(\hat r_{opt}^{(\phi)})$] \label{prop: other fdiv}
Let $q_1,q_2$ be continuous densities with respect to a base measure $\mu$ on the common support $\Omega$. If $\pi \in (0,1)$ is the weight parameter, then $JS_\pi(q_1,q_2) \to 0$, $KL(q_1,q_2) \to 0$ or $KL(q_2,q_1) \to 0$ implies $RE^2(\hat r_{opt}) \rightarrow 0$, and asymptotically, 
\begin{align}
        RE^2(\hat r_{opt}) &\leq \frac{1}{s_1s_2n} \left( \left(1-\min(1, \sqrt{JS_\pi(q_1,q_2)/\min(\pi,1-\pi)})\right)^{-2} - 1\right), \label{rmse_JSbound} \\
        RE^2(\hat r_{opt}) &\leq \frac{1}{s_1s_2n} \left( \left(1-\min(1, \sqrt{2KL(q_1,q_2)})\right)^{-2} - 1\right), \label{rmse_KLbound_forward} \\
        RE^2(\hat r_{opt}) &\leq \frac{1}{s_1s_2n} \left( \left(1-\min(1, \sqrt{2KL(q_2,q_1)})\right)^{-2} - 1\right). \label{rmse_KLbound_backward}     
\end{align}
up to the first order, where $n = n_1+n_2$ and $s_i = n_i/n$ for $i=1,2$.
\end{proposition}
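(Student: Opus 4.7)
My plan is to reduce each of the three inequalities \eqref{rmse_JSbound}--\eqref{rmse_KLbound_backward} to the already-available Hellinger bound \eqref{rmse_Hellingerbound} by exhibiting, in each case, an upper bound on $\tfrac{1}{2}H^2(q_1,q_2)$ in terms of the target divergence. Since the map $x \mapsto (1-x)^{-2}-1$ is monotonically increasing on $[0,1)$, once $\tfrac{1}{2}H^2(q_1,q_2) \leq \min(1, X)$ has been shown for the relevant quantity $X$, substitution into \eqref{rmse_Hellingerbound} yields the claimed first-order inequality; the $\min$ with $1$ is harmless because $\tfrac{1}{2}H^2 \leq 1$ always. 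The convergence statements then follow because each constructed upper bound is a continuous function of the corresponding divergence and vanishes with it.

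For the two KL bounds I would chain two classical estimates. The pointwise inequality $(\sqrt{q_1}-\sqrt{q_2})^2 \leq |\sqrt{q_1}-\sqrt{q_2}|(\sqrt{q_1}+\sqrt{q_2}) = |q_1 - q_2|$, integrated over $\Omega$, gives $\tfrac{1}{2}H^2(q_1,q_2) \leq d_{TV}(q_1,q_2)$. Pinsker's inequality then yields $d_{TV}(q_1,q_2) \leq \sqrt{KL(q_1,q_2)/2}$, and combining with $\sqrt{KL/2} \leq \sqrt{2\,KL}$ delivers $\tfrac{1}{2}H^2(q_1,q_2) \leq \sqrt{2\,KL(q_1,q_2)}$; the symmetric argument handles the $KL(q_2,q_1)$ version. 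Substitution into \eqref{rmse_Hellingerbound} then gives \eqref{rmse_KLbound_forward} and \eqref{rmse_KLbound_backward}.

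The weighted Jensen--Shannon bound needs one additional manipulation. Writing $q_\pi = \pi q_1 + (1-\pi) q_2$, the affine identities $q_\pi - q_1 = (1-\pi)(q_2 - q_1)$ and $q_\pi - q_2 = \pi(q_1 - q_2)$ immediately give
\begin{equation*}
d_{TV}(q_1, q_\pi) = (1-\pi)\, d_{TV}(q_1,q_2), \qquad d_{TV}(q_2, q_\pi) = \pi\, d_{TV}(q_1,q_2).
\end{equation*}
Applying Pinsker to each summand of $JS_\pi(q_1,q_2) = \pi\, KL(q_1,q_\pi) + (1-\pi)\, KL(q_2,q_\pi)$ yields
\begin{equation*}
JS_\pi(q_1,q_2) \geq 2\pi(1-\pi)^2\, d_{TV}(q_1,q_2)^2 + 2(1-\pi)\pi^2\, d_{TV}(q_1,q_2)^2 = 2\pi(1-\pi)\, d_{TV}(q_1,q_2)^2.
\end{equation*}
Combining with $\tfrac{1}{2}H^2 \leq d_{TV}$ then gives $\tfrac{1}{4}H^4(q_1,q_2) \leq JS_\pi(q_1,q_2)/(2\pi(1-\pi))$, and since $2\pi(1-\pi) \geq \min(\pi,1-\pi)$ for $\pi \in (0,1)$, we deduce $\tfrac{1}{2}H^2(q_1,q_2) \leq \sqrt{JS_\pi(q_1,q_2)/\min(\pi,1-\pi)}$, yielding \eqref{rmse_JSbound} after substitution into \eqref{rmse_Hellingerbound}.

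The main obstacle is the JS bound, which does not reduce to Pinsker in one step but requires the affine identities for $q_\pi - q_i$ to transfer the Pinsker estimate on each KL summand into a clean bound on $H^2$ with the correct dependence on $\pi$. Once that identity is exploited, the remaining steps are routine applications of monotonicity and well-known inequalities.
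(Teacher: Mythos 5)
Your proposal is correct, and its overall architecture matches the paper's: in all three cases you bound $\tfrac{1}{2}H^2(q_1,q_2)$ by the target divergence and substitute into the Hellinger bound \eqref{rmse_Hellingerbound}, using that $x \mapsto (1-x)^{-2}-1$ is increasing and that $\tfrac12 H^2 \le 1$ makes the $\min$ with $1$ safe. The two KL bounds are handled exactly as in the paper (Pinsker plus $\tfrac12 H^2 \le d_{TV}$). Where you genuinely diverge is the Jensen--Shannon step. The paper lower-bounds $JS_\pi$ by $2\min(\pi,1-\pi)\bigl(d_{TV}^2(q_1,q_\pi)+d_{TV}^2(q_2,q_\pi)\bigr)$ and then invokes the quadratic-mean inequality together with the triangle inequality $d_{TV}(q_1,q_\pi)+d_{TV}(q_2,q_\pi)\ge d_{TV}(q_1,q_2)$ to reach $JS_\pi \ge \min(\pi,1-\pi)\,d_{TV}^2(q_1,q_2)$. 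You instead exploit the exact affine identities $d_{TV}(q_1,q_\pi)=(1-\pi)\,d_{TV}(q_1,q_2)$ and $d_{TV}(q_2,q_\pi)=\pi\,d_{TV}(q_1,q_2)$, which turn the Pinsker estimates on the two summands into the sharper inequality $JS_\pi \ge 2\pi(1-\pi)\,d_{TV}^2(q_1,q_2)$; since $2\pi(1-\pi)\ge \min(\pi,1-\pi)$ on $(0,1)$, you then relax to the stated form. Your route is cleaner (no triangle or AM--QM step) and yields a strictly better constant, which you deliberately discard only to match the proposition as written; the paper's route is marginally more generic in that it never uses the specific mixture structure of $q_\pi$. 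Your handling of the limit statements (reading them off the final upper bounds, which vanish continuously with the divergence) is also sound and equivalent to the paper's argument via $H^2 \to 0$.
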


\begin{proof}
 Recall that $JS_{\pi} (q_1,q_2) = \pi KL(q_1,q_{\pi}) + (1-\pi) KL(q_2,q_{\pi})$ where $q_\pi = \pi q_1 + (1-\pi) q_2$ is a mixture of $q_1,q_2$. Let $d_{TV}(q_1,q_2)$ be the total variation distance between $q_1$ and $q_2$. By Pinsker's inequality we have $KL(q_i,q_\pi) \geq 2d_{TV}^2(q_i,q_\pi)$ for $i=1,2$ \citep{pinsker1964information}. Then
\begin{align}
        JS_{\pi}(q_1,q_2) &= \pi KL(q_1,q_{\pi}) + (1-\pi) KL(q_2,q_{\pi}) \\
     &\geq 2\pi d_{TV}^2(q_1,q_\pi) + 2(1-\pi)d_{TV}^2(q_2,q_\pi)\\
     &\geq 2 \min(\pi,1-\pi) (d_{TV}^2(q_1,q_\pi) + d_{TV}^2(q_2,q_\pi))\\
     &\geq 2\min(\pi,1-\pi) \left( \frac{1}{2}(d_{TV}(q_1,q_\pi) + d_{TV}(q_2,q_\pi))^2 \right)\\
     &\geq \min(\pi,1-\pi) d_{TV}^2(q_1,q_2)
\end{align}
by the algebraic-geometric mean inequality and the triangle inequality. Since  $d_{TV}(q_1,q_2)\geq \frac{1}{2}H^2(q_1,q_2)$ \citep{le1969theorie}, we have $JS_{\pi}(q_1,q_2) \geq \min(\pi,1-\pi) \left(\frac{1}{2}H^2(q_1,q_2)\right)^2$. Since both $JS_\pi(q_1,q_2)$ and $H^2(q_1,q_2)$ are non-negative, $JS_\pi(q_1,q_2) \to 0$ implies $H^2(q_1,q_2) \to 0$ and $RE^2(\hat r_{opt}) \rightarrow 0$ by \eqref{rmse_Hellingerbound}.
On the other hand, since $H^2(q_1,q_2) \leq 2$, we have
\begin{equation}
    \frac{1}{2}H^2(q_1,q_2) \leq \min(1,\sqrt{JS_{\pi}(q_1,q_2)/\min(\pi,1-\pi)}). \footnote{Since $JS_{\pi}(q_1,q_2) \leq \log2$ for all $\pi \in (0,1)$ \citep{lin1991divergence}, $\sqrt{JS_{\pi}(q_1,q_2)/\min(\pi,1-\pi)}$ does not exceed 1 by large amount when $\pi$ is close to $1/2$. For example, when $\pi=1/2$, $\sqrt{JS_{\pi}(q_1,q_2)/\min(\pi,1-\pi)} < 1.18$.}
\end{equation}
Substituting it into the right hand side of \eqref{rmse_Hellingerbound} yields \eqref{rmse_JSbound}. 

 From the last paragraph, we have $KL(q_1,q_2) \geq 2d_{TV}^2(q_1,q_2) \geq \frac{1}{2}\left(H^2(q_1,q_2)\right)^2$. Therefore $KL(q_1,q_2) \to 0$ also implies $H^2(q_1,q_2) \to 0$ and $RE^2(\hat r_{opt}) \rightarrow 0$.
 We also have $\frac{1}{2}H^2(q_1,q_2) \leq \min(1,\sqrt{2KL(q_1,q_2)})$. Substituting it into the right hand side of \eqref{rmse_Hellingerbound} yields \eqref{rmse_KLbound_forward}. We can show \eqref{rmse_KLbound_backward} using the same argument.
\end{proof}
From Proposition \ref{prop: other fdiv} we see minimizing these choices of $f$-divergences are also effective for reducing the $RE^2(\hat r^{\phi)}_{opt})$. However, these choices of $f$-divergence are inefficient compared to $H_{s_2}(q_1^{(\phi)},q_2)$ since minimizing these $f$-divergences only correspond to minimizing some upper bounds of the first order approximation of $RE^2(\hat r^{\phi)}_{opt})$, while minimizing $H_{s_2}(q_1^{(\phi)}, q_2)$ is equivalent to minimizing the first order approximation of $RE^2(\hat r^{(\phi)}_{opt})$ directly.

\section{Implementation details of Algorithm \ref{algo_practical}}\label{appendix:algo_practical}

\subsubsection*{Choosing the transformation $T_\phi$}
We parameterize $\tilde q_1^{(\phi)}$ as a Real-NVP \citep{dinh2016density} with base density $\tilde q_1$ (See Sec \ref{sec: transform} for a brief description of Normalizing flow models and Real-NVP). As we have discussed before, this ensures that $\tilde q_1^{(\phi)}$ is both flexible and computationally tractable, and its normalizing constant is unchanged. It is possible to specify $\tilde q_1^{(\phi)}$ using a simpler parameterization, e.g. Warp-III transformation \citep{meng2002warp}. However, such parameterization is not as flexible comparing to a Normalizing flow.  It is also possible to replace a Real-NVP by more sophisticated Normalizing flow architectures e.g. Autoregressive flows \citep{papamakarios2017masked} or Neural Spline flows \citep{durkan2019neural}. But we find a Real-NVP is sufficient for us to illustrate our ideas and achieve satisfactory results in both simulated and real world examples. In addition, both the froward and inverse transformation of a Real-NVP can be computed efficiently. This is an appealing feature since we need both $T_\phi$ and $T_\phi^{-1}$ for evaluating $L_{\lambda_1, \lambda_2}(\phi, \tilde r; s_2, \{\omega_{ij}\}_{j=1}^{n_i})$. Therefore we choose to use a Real-NVP in Algorithm \ref{algo_practical}, as it has a good balance of flexibility and computational efficiency.

The number of coupling layers in a Real-NVP controls its flexibility. Choosing too few coupling layers restricts the flexibility of the Real-NVP, while choosing too many of them increases the computational cost and the risk of overfitting. Choosing the optimal number of coupling layers that balances computational cost and flexibility is problem-dependent. We demonstrate it using the mixture of rings example in Sec \ref{sec: mixture of rings} with $p=12$. Similar to Sec \ref{sec: mixture of rings}, we set $\beta_1=\beta_2=0.05$ and $N_1=N_2=2000$. We consider different number of coupling layers $K=\{2,4,6,8,10,12,14,16,18\}$. For each choice of $K$, we parameterize $q_1^{(\phi)}$ in Algorithm \ref{algo_practical} as a Real-NVP with $K$ coupling layers, then run Algorithm \ref{algo_practical} 50 times. We record the average running time and an MC estimate of $MSE(\log \hat r'^{(\phi_t)}_{opt})$ based on the repeated runs for each $K$. From Figure \ref{fig:coupling} we see the running time is roughly a linear function of the number of coupling layers $K$. As $K$ increases, the estimated $MSE(\log \hat r'^{(\phi_t)}_{opt})$ first decreases then starts increasing. This is likely due to overfitting. Similar to \cite{wang2020warp}, we also use precision per second, which is the reciprocal of the product of the average running time and the estimated mean square error, as a benchmark of efficiency. We see that the estimated precision per second is high when $K$ is between 2 and 6, and it starts decreasing rapidly when $K \geq 8$. Therefore for this example, we see the most efficient choice of $K$ is between 2 and 6. In practice, we recommend setting $q_1^{(\phi)}$ as a Real-NVP with 2 to 10 coupling layers in Algorithm \ref{algo_practical}. We also recommend running Algorithm \ref{algo_practical} multiple times with different number of coupling layers in $q_1^{(\phi)}$, and choose the one that achieves the lowest estimated RMSE $\widehat{RE}^2(\hat r'^{(\phi_t)}_{opt})$.

\begin{figure}
    \centering
    \includegraphics[width=1.1\textwidth]{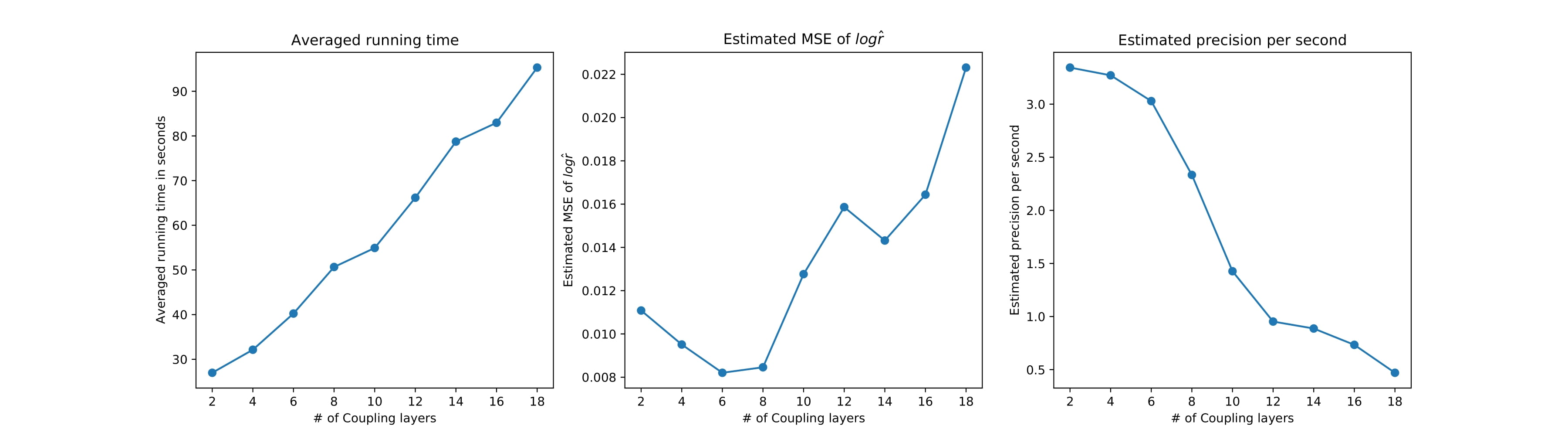}
    \caption{Left: Average running time for each $K$ based on 50 repetitions. Mid: MC estimate of $MSE(\log \hat r'^{(\phi_t)}_{opt})$ for each $K$. Right: Estimated precision per second for each $K$.}
    \label{fig:coupling}
\end{figure}

\subsubsection*{Splitting the samples from $q_1,q_2$}
In Algorithm \ref{algo_practical}, we first estimate $\{\phi_t, \tilde r_t\}$ using the training samples $\{\omega_{ij}\}_{j=1}^{n_i}$, then compute the optimal Bridge estimator based on the separate estimating samples $\{\omega'_{ij}\}_{j=1}^{n'_i}$.
We use separate samples for the Bridge sampling step because the estimated transformed density $q_1^{(\phi_t)}$ in Algorithm \ref{algo_practical} is chosen based on the training samples $\{\omega_{ij}\}_{j=1}^{n_i}$ for $i=1,2$. This means the density of the distribution of the transformed training samples $\{T_{\phi_t}(\omega_{1j})\}_{j=1}^{n_1}$ is no longer proportional to $\tilde q_1^{(\phi_t)}(T_{\phi_t}(\omega_{1j}))$ for $j=1,...,n_1$ as $\phi_t$ can be viewed as a function of  $\{\omega_{ij}\}_{j=1}^{n_i}$.  If we apply the iterative procedure \eqref{iterativeoptimal} to densities $q_1^{(\phi_t)}, q_2$ and the transformed training samples $\{T_{\phi_t}(\omega_{1j})\}_{j=1}^{n_1}, \{\omega_{2j}\}_{j=1}^{n_2}$, then the resulting $\hat r_{opt}^{(\phi_t)}$ will be a biased estimate of $r$. See also \cite{wong2020properties} for a detailed discussion under a similar setting. One way to correct this bias is to split the samples from $q_1,q_2$ into training samples $\{\omega_{ij}\}_{j=1}^{n_i}$ and estimating samples $\{\omega'_{ij}\}_{j=1}^{n'_i}$ for $i=1,2$. We first estimate the transformation $T_{\phi_t}$ using the training samples $\{\omega_{ij}\}_{j=1}^{n_i}$, $i=1,2$. Once we have obtained the estimated $\phi_t$, we apply the iterative procedure \eqref{iterativeoptimal} to $\tilde q_1^{(\phi_t)}, \tilde q_2$ and the transformed estimating samples $\{T_{\phi_t}(\omega'_{1j})\}_{j=1}^{n'_1}, \{\omega'_{2j}\}_{j=1}^{n'_2}$, $i=1,2$. Then the resulting estimate $\hat r'^{(\phi_t)}_{opt}$ will not suffer from this bias. The same approach is used in \cite{wang2020warp} and \cite{jia2020normalizing}. The idea of eliminating this bias by splitting the samples from $q_1,q_2$ is further discussed in  \cite{wong2020properties}. 
The above argument also applies to the estimation of $RE^2(\hat r'^{(\phi_t)}_{opt})$. We compute $\widehat{RE}^2(\hat r'^{(\phi_t)}_{opt})$ based on the independent estimating samples using \eqref{minimalrmseestimate}. Since finding $\widehat{RE}^2(\hat r'^{(\phi_t)}_{opt})$ is a 1-d optimization problem, the additional computational cost is negligible compared to the rest of Algorithm \ref{algo_practical}. In practice, we recommend setting $n_i=n'_i$ for $i=1,2$, i.e. splitting the samples from $q_1,q_2$ into equally sized training samples and estimating samples.

\subsubsection*{Finding the saddle point using alternating gradient method}

In Algorithm \ref{algo_practical}, we aim to find a saddle point of $L_{\lambda_1, \lambda_2}(\phi, \tilde r; s_2, \{\omega_{ij}\}_{j=1}^{n_i})$ using the alternating gradient method. This approach is adapted from the Algorithm 1 of \cite{nowozin2016f}. The authors show that their Algorithm 1 converges geometrically to a saddle point under suitable conditions.
In the alternating training process of Algorithm \ref{algo_practical}, updating $\tilde r_{t+1}$ is a 1-d optimization problem when $\phi_t$ is treated as fixed for any step $t$. Hence we can also directly find $\hat r_{\phi_t} = \argmax_{\tilde r \in \mathbb{R}^+}  L_{\lambda_1, \lambda_2}(\phi_t, \tilde r; s_2, \{\omega_{ij}\}_{j=1}^{n_i})$ instead of performing a single step gradient ascent on $\tilde r_t$. By Proposition \ref{prop: estimating_harmonic_divergence} and \ref{prop: bridge_and_fdiv}, $\hat r_{\phi_t}$ can be viewed as a (biased) Bridge estimator of $r$ given $\phi_t$. However, such estimator $\hat r_{\phi_t}$ is not reliable when $q_1^{(\phi_t)}$ and $q_2$ share little overlap. Therefore directly optimizing $\tilde r$ at each iteration $t$ is not always necessary in practice, especially at the early stage of training when $q_1^{(\phi_t)}$ is not yet a sensible approximation of $q_2$. In addition, the gradient ascent update of $\tilde r_t$ is computationally cheaper than finding the optimizer $\hat r_{\phi_t}$ directly. Therefore we follow \cite{nowozin2016f} and use the alternating gradient method to find the saddle point of $L_{\lambda_1, \lambda_2}(\phi, \tilde r; s_2, \{\omega_{ij}\}_{j=1}^{n_i})$. We only recommend optimizing $\tilde r_{t}$ directly in Algorithm \ref{algo_practical} when we know $q_1^{(\phi_t)}$ and $q_2$ have at least some degree of overlap.

Note that $\{\phi_t, \tilde r_t\}$ being approximately a saddle point of the objective function does not necessarily imply that it solves $\min_{\phi \in \mathbb{R}^l} \max_{\tilde r \in \mathbb{R}^+}  L_{\lambda_1, \lambda_2}(\phi, \tilde r; s_2, \{\omega_{ij}\}_{j=1}^{n_i})$. For $\tilde r_t$, it is easy to verify if $\tilde r_t$ is indeed the maximizer of $L_{\lambda_1, \lambda_2}(\phi_t, \tilde r; s_2, \{\omega_{ij}\}_{j=1}^{n_i})$ w.r.t. $\tilde r \in \mathbb{R}^+$ since it is a 1-d optimization problem. However, for $\phi_t$ there is no guarantee that it is the global minimizer of $L_{\lambda_1, \lambda_2}(\tilde r_t, \phi; s_2, \{\omega_{ij}\}_{j=1}^{n_i})$ w.r.t. $\phi \in \mathbb{R}^l$. One way to address this problem is to run Algorithm \ref{algo_practical} multiple times and choose the $q_1^{(\phi_t)}$ that attains the smallest objective function value. In the numerical examples, we find $q_1^{(\phi_t)}$ returned from Algorithm \ref{algo_practical} is almost always a good approximation of $q_2$. Therefore we do not worry about this problem in practice.

In the alternating training process, seeing the absolute difference between $L_{\lambda_1, \lambda_2}(\phi_t, \tilde r_t; s_2, \{\omega_{ij}\}_{j=1}^{n_i})$ and  $L_{\lambda_1, \lambda_2}(\phi_{t-1},\tilde r_{t-1}; s_2, \{\omega_{ij}\}_{j=1}^{n_i})$ being less than the tolerance level $\epsilon_1$ at an iteration $t$ does not necessarily imply that it has reached a saddle point. Therefore we also need to monitor the sequence $\{\tilde r_t\}$, $t=0,1,2,...$ in the training process. If $\vert\tilde r_t - \tilde r_{t-1}\vert > \epsilon_2$, then $\tilde r_t$ has not converged to a stationary point regardless of the value of the objective function. In other words, we know $\{\phi_t, \tilde r_t\}$ has approximately converged to a saddle point only if both the objective function $L_{\lambda_1, \lambda_2}(\phi_t, \tilde r_t; s_2, \{\omega_{ij}\}_{j=1}^{n_i})$ and $\tilde r_t$ have stopped changing. In practice, we recommend setting $\epsilon_1 \in (10^{-3},10^{-1})$ depending on the scale of the objective function, and $\epsilon_2 \in (10^{-3}, 10^{-2})$.

\subsubsection*{Effectiveness of the hybrid objective}
As we have discussed previously, we introduce the hybrid objective to stabilize the alternating training process and accelerate the convergence of Algorithm \ref{algo_practical}. Here we demonstrate the effectiveness of the hybrid objective in Algorithm \ref{algo_practical} using the mixture of rings example in Sec \ref{sec: mixture of rings} with $p=12$, $\boldsymbol{\mu}_{11} = (2,2), \boldsymbol{\mu}_{12} = (-2, -2), \boldsymbol{\mu}_{21}=(2,-2), \boldsymbol{\mu}_{21}=(-2,2)$, $b_1 =3 , b_2=6, \sigma_1=1, \sigma_2=2$. We set $q_1^{(\phi)}$ to be a Real-NVP with 5 coupling layers. We first run Algorithm \ref{algo_practical} 50 times with $n_i=n'_i=1000$ for $i=1,2$ and $\lambda_1=\lambda_2=0.05$. We record the values of the objective function and $\tilde r_t$ of the first 25 iterations. Then we run Algorithm \ref{algo_practical} 50 times with $n_i=n'_i=1000$ for $i=1,2$ and $\lambda_1=\lambda_2=0$, and record the same values. Recall that setting  $\lambda_1=\lambda_2=0$ is equivalent to using the original $f$-GAN objective \eqref{harmoniclowerbound_transformed_empirical}. From Figure \ref{fig:fgan_vs_hybrid} we see most of the hybrid objectives and the corresponding $\tilde r_t$ values have stabilized after 20 iterations. The stand alone $f$-GAN objective with $\lambda_1=\lambda_2=0$ also demonstrate a decreasing trend, but the objective values are much more wiggly compared to the hybrid objective due to the adversarial training process, and there is no sign of convergence in 25 iterations. Note that for both the hybrid objective and the original $f$-GAN objective, the corresponding $\tilde r_t$ tend to converge to a value slightly different from the true $r$ as the number of iteration increases. This is likely due to the bias we discussed previously.
\begin{figure}
    \centering
    \includegraphics[width=\textwidth]{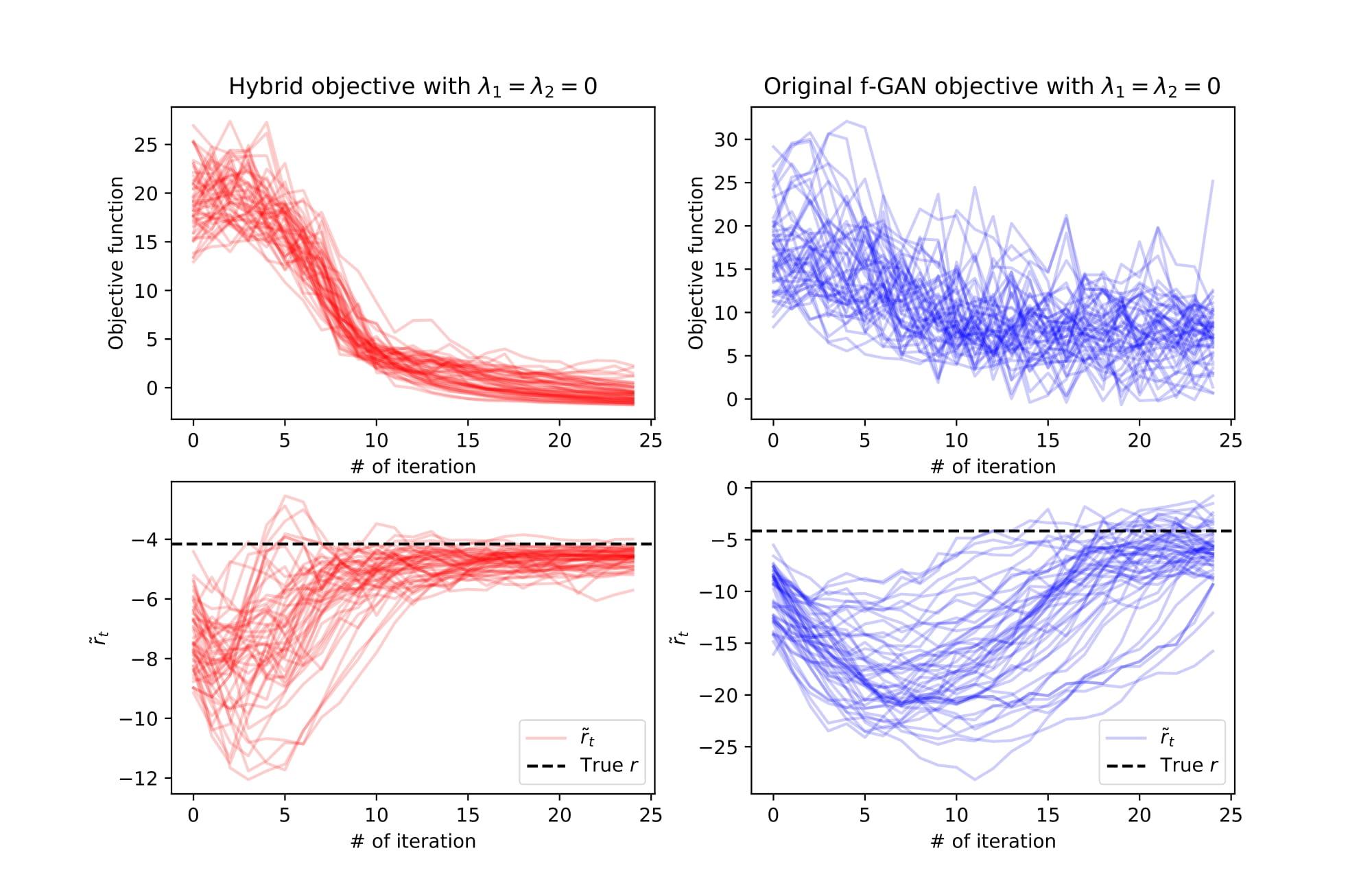}
    \caption{Left: The objective function and $\tilde r_t$ of the first 25 iterations of Algorithm \ref{algo_practical} with $\lambda_1=\lambda_2=0.05$. Right: The objective function and $\tilde r_t$ of the first 25 iterations of Algorithm \ref{algo_practical} with $\lambda_1=\lambda_2=0$.}
    \label{fig:fgan_vs_hybrid}
\end{figure}

\section{Additional simulations}\label{appendix:more_simulation}

\subsection*{Simulated example: Quantized Mixture of Gaussians} \label{appendix: discrete}

Here we illustrate how Normalizing flows can be used to increase the overlap between discrete random variables in the context of estimating a single normalizing constant (i.e. one of $q_1,q_2$ is completely known). We take the quantized Mixture of Gaussian in \cite{tran2019discrete} and \cite{MetzPPS17} as a toy example.

Following \cite{tran2019discrete}, we first define the completely known ``base" distribution $q_1$. Let $\omega^{(1)},\omega^{(2)}$ be two categorical variables each with 90 states. Let $\omega=(\omega^{(1)},\omega^{(2)})$. Let $q_1$ be a uniform distribution over all possible states of $\omega$. The probability mass function of $q_1$ is then
\begin{equation}
    q_1(\omega^{(1)}=u, \omega^{(2)}=v) = \frac{1}{90 \times 90}  \quad \quad u,v \in \{1,2,...,90\} \label{discrete_unif}
\end{equation}

We then define the quantized Mixture of Gaussian distribution as our ``target" distribution $q_2$.  In order to define the quantized Mixture of Gaussian, we first define $\tilde g(x)$, the unnormalized density of a mixture of 2D Gaussian distributions, to be
\begin{equation}
    \tilde g(x) = \sum_{k=1}^K \pi_{k} \tilde p(x; \mu_{k}, \sigma^2I_2)
\end{equation}
where $x\in\mathbb{R}^2$, $I_2$ is the $2\times 2$ identity matrix, $\tilde p(\cdot; \mu, \Sigma) = \exp(-\frac{1}{2}(x-\mu)^T\Sigma^{-1}(x-\mu))$ is the unnormalized 2D Gaussian density with mean $\mu$ and covariance $\Sigma$, $K=4$, $\sigma=0.1$, $\mu_1=(2,0)$, $\mu_2=(-2,0)$, $\mu_3=(0,2)$, $\mu_4=(0,-2)$ and $\pi_{k}=\frac{1}{K}$ for $k=1,...,K$. 
We then truncate the support of $\tilde g(x)$ to be $[ -2.25, 2.25]^2$. We now define $q_2(\omega)$, the quantized 2D Mixture of Gaussian distribution, by discretizing this square at the 0.05 level (i.e. forming a $90\times90$ equally spaced grid). This discretization step leads to two categorical variables $\omega^{(1)},\omega^{(2)}$ each with 90 states. For $u, v \in \{1,2,...,90\}$, let $B_{uv} \subseteq [ -2.25, 2.25 ]^2$ be the cell of the grid that corresponds to the state $\{\omega^{(1)}=u, \omega^{(2)}=v\}$. Then the unnormalized probability mass function of $q_2$ can be written as
\begin{equation}
    \tilde q_2(\omega^{(1)}=u, \omega^{(2)}=v) = \int_{x\in B_{uv}}  \tilde g(x) dx  \quad \quad u,v \in \{1,2,...,90\}.
\end{equation}
See Figure \ref{fig: qMoG_transformed} for a 2D histogram of samples from $q_2$.
Let 
\begin{equation}
    Z_2 = \sum_{u=1}^{90}\sum_{v=1}^{90} \int_{x\in B_{uv}}  \tilde g(x) dx 
\end{equation}
be the normalizing constant of $\tilde q_2(\omega)$, which can be computed easily. Let $q_2(\omega) = \tilde q_2(\omega)/Z_2$ be the corresponding normalized pmf. Since $q_1$ is completely known, its normalizing constant $Z_1$ is equal to $1$ and therefore $\tilde q_1(\omega)=q_1(\omega)$. 

Our goal is to estimate $\log r= \log Z_1 - \log Z_2=- \log Z_2$ by first increasing the overlap between $q_1$ and $q_2$ using a Normalizing flow, then compute the asymptotically optimal Bridge estimator of $r$ based of the transformed distributions. Let $N=\{500,1000,1500,2000,2500\}$. To demonstrate the effectiveness of this approach, for each value of $N$, we first draw $n_1=n_2=N$ training samples $\{\omega_{1j}\}_{j=1}^{n_1}$ and $\{\omega_{2j}\}_{j=1}^{n_2}$ from $q_1,q_2$ respectively, and use an autoregressive discrete flow \cite{tran2019discrete} to estimate a bijective transformation $T$ that maps $q_1$ to $q_2$ based on the training samples and the training procedure given by \cite{tran2019discrete}. One key distinction between discrete flows \cite{tran2019discrete} and their continuous counterparts (e.g. \cite{dinh2016density, kingma2016improved}) is that for discrete flows, the base distribution $q_1$ is treated as a model parameter and is estimated jointly with the transformation $T$. In our example, this means the parameterization (i.e. the $90\times 90$ probability table) we chose for $q_1$ in \eqref{discrete_unif} is only treated as the ``initial values" of the model parameters, and is updated alongside with the transformation $T$. (Note that when $q_1,q_2$ have a large number of discrete states, storing or updating the probability table of the base $q_1$ is computationally infeasible. To alleviate this problem, \cite{tran2019discrete} also considered more sophisticated parameterizations of the ``trainable base" $q_1$ such as the autoregressive Categorical distribution.) Let $T_1$ be the estimated transformation, $\bar{q}_1$ be the updated base distribution (which is also completely known and easy to sample from). Let $\bar{q}_1^{(T)}$ be the transformed distribution obtained by applying $T_1$ to the samples from the updated $\bar{q}_1$. We then draw $n'_1=n'_2=N$ estimating samples $\{\bar{\omega}'_{1j}\}_{j=1}^{n'_1}$ and $\{\omega'_{2j}\}_{j=1}^{n'_2}$ from $\bar{q}_1, q_2$ respectively, and compute $\hat r^{(T)}_{opt}$ in \eqref{transformedbridge} based on the transformed $\{T_1(\bar{\omega}'_{1j})\}_{j=1}^{n'_1}$ and the original $\{\omega'_{2j}\}_{j=1}^{n'_2}$. For each value of $N$, we repeat this process 100 times, and report the MC estimate of $MSE(\hat r^{(T)}_{opt})$ based on the repeated $\hat r^{(T)}_{opt}$s and the ground truth $r$. Let $\hat r_{opt}$ be the optimal Bridge estimator based on the original $\tilde q_1, \tilde q_2$. For each value of $N$, we compare $MSE(\log \hat r^{(T)}_{opt})$ with $MSE(\log \hat r_{opt})$, which is also estimated based on 100 repetitions in a similar fashion. From Figure \ref{fig:discrete error} we see $\hat r^{(T)}_{opt}$ is a reliable estimator of $r$ for all choice of $N$ and is much more accurate than the optimal Bridge estimator based on the original $\tilde q_1, \tilde q_2$. From Figure \ref{fig: qMoG_transformed} we also see that the transformed $\bar q_1^{(T)}$ accurately captures the multimodal structure of $q_2$.

\begin{figure}
    \centering
    \includegraphics[width=\textwidth]{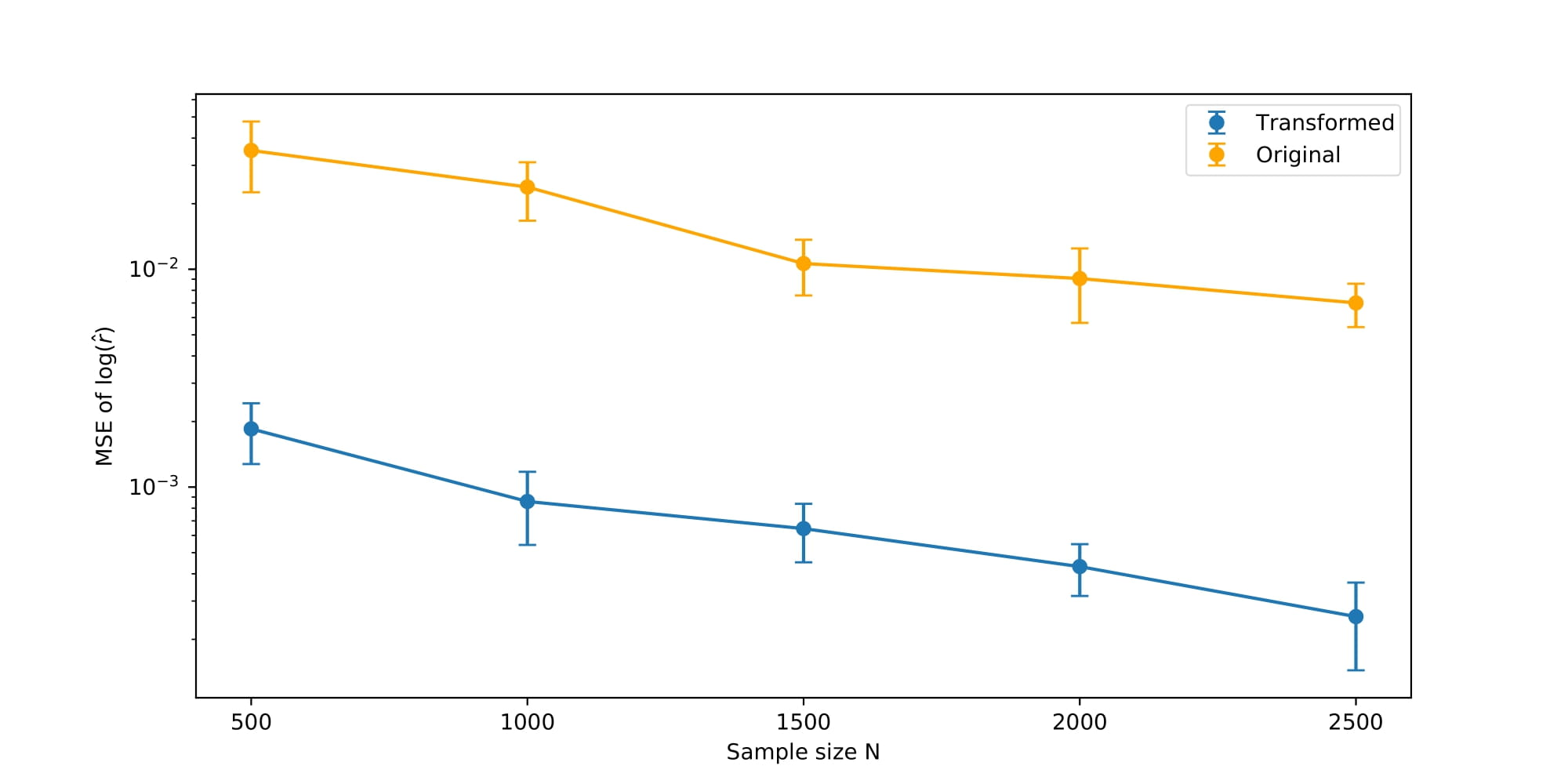}
    \caption{MC estimate of MSE of $\log \hat r^{(T)}_{opt}$ and $\log \hat r_{opt}$ for each value of $N$. Vertical segments represent the $2\sigma$ error bars.}
    \label{fig:discrete error}
\end{figure}

\begin{figure}
    \centering
    \includegraphics[width=\textwidth]{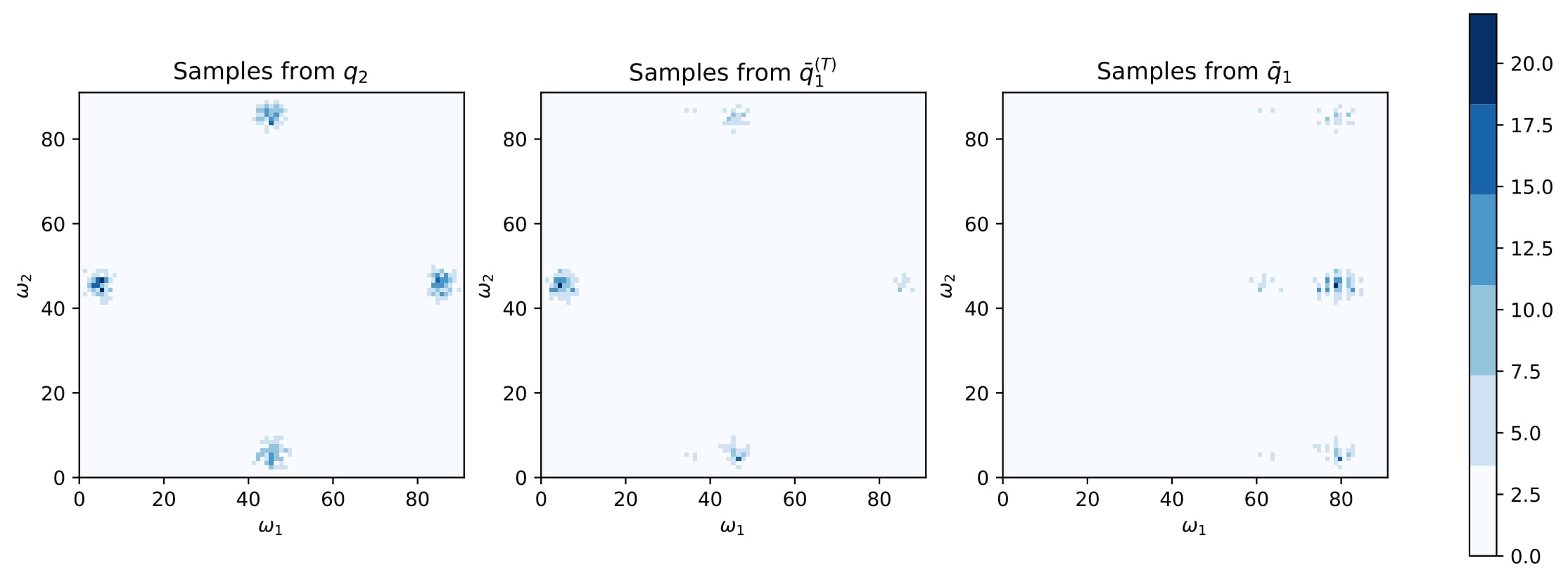}
    \caption{Left: 2D histogram of $10^3$ samples from $q_2$. Mid: 2D histogram of $10^3$ samples from the transformed $\bar{q}_1^{(T)}$, which is estimated using $N=10^3$ training samples. Right: 2D histogram of $10^3$ samples from the corresponding updated base distribution $\bar{q}_1$ of the transformed $\bar{q}_1^{(T)}$.}
    \label{fig: qMoG_transformed}
\end{figure}

In addition to the quantized mixture of Gaussian example, more substantial applications of discrete flows can also be found in \cite{tran2019discrete}. However, the discrete flows in \cite{tran2019discrete} are in general not directly applicable to our proposed Algorithm \ref{algo_practical}. This is because in our Algorithm \ref{algo_practical}, the unnormalized densities $\tilde q_1$ and $\tilde q_2$ are specified by the users and therefore can be arbitrary. However, for discrete flows, the ``base" distribution has to be completely known, and is treated as a trainable model parameter (as in this example).  This means we are not able to use it to directly estimate the ratio of normalizing constants between two arbitrary unnormalized probability mass functions in the same way as in Algorithm \ref{algo_practical}. Nevertheless, one may obtain an estimator of the ratio of normalizing constants between two discrete distributions by estimating their normalizing constants separately using discrete flows and the procedure used in this example. For future work, we are interested in extending Algorithm \ref{algo_practical} so that it is able to handle arbitrary unnormalized pmfs using e.g. more sophisticated Normalizing flow architectures.

\subsection*{Simulated example: Mixture of $t$-distributions}
 In this example, we let $q_1$ and $q_2$ be two mixtures of $p$ dimensional $t$-distributions. We are interested in this example because both $q_1,q_2$ are multimodal and have heavy tails. For $i=1,2$, let 
\begin{equation}
    q_i(\omega) = \sum_{k=1}^K \pi_{ik} p_t(\omega; \mu_{ik}, \Sigma_i, \nu_i),
\end{equation}
where $K$ is the number of components, $\pi_i = \{\pi_{ik}\}_{k=1}^K$ are the mixing weights and  $p_t(\cdot; \mu_{ik}, \Sigma_i, \nu_i)$, the $k$th component of $q_i$ is the pdf of a multivariate $t$-distribution with mean $\mu_{ik} \in \mathbb{R}^p$, positive-definite scale matrix $\Sigma_i \in \mathbb{R}^{p\times p}$ and degree of freedom $\nu_i \in \mathbb{R}^+$. Note that all $K$ components of $q_i$ have the same covariance structure and degree of freedom. Let 
\begin{equation}
    Z_i = \frac{\Gamma((\nu_i+p)/2)}{\Gamma(\nu_i/2) \nu_i^{p/2} \pi^{p/2} \vert\Sigma_i\vert^{1/2}}
\end{equation}
be the normalizing constant of $p_t(\cdot; \mu_{ik}, \Sigma_i, \nu_i)$. Note that this quantity does not depend on $\mu_{ik}$. Let $\tilde p_t(\cdot; \mu_{ik}, \Sigma_i, \nu_i) = Z_i p_t(\cdot; \mu_{ik}, \Sigma_i, \nu_i)$ be the unnormalized density of each component $p_t(\cdot; \mu_{ik}, \Sigma_i, \nu_i)$. Let $\tilde q_i(\omega) = \sum_{k=1}^K \pi_{ik} \tilde p_t(\omega; \mu_{ik}, \Sigma_i, \nu_i)$ be the unnormalized density of $q_i(\omega)$. It is easy to verify that $\tilde q_i(\omega) = Z_i q_i(\omega)$, i.e. the normalizing constant of $\tilde q_i(\omega)$ is $Z_i$.

For this example, we consider $p=\{5,10,20,40,60,80,100\}$. For each choice of $p$, the parameters of $q_1,q_2$ are chosen in the following way: We fix the degree of freedom $\nu_1 = 1, \nu_2 = 4$, and number of component $K=7$. The mixing weights $\pi_i$ for $i=1,2$ are sampled independently from a $Dir(\alpha_1,...,\alpha_K)$ where $\alpha_k=1$ for $k=1,...,K$. The location parameters $\mu_{ik}$ for $i=1,2$, $k=1,...,K$ are sampled from a standard Normal $N(0,I_p)$ independently. For the scale matrices $\Sigma_i$, we first sample $\Sigma_1, \Sigma_2$ independently from a inverse Wishart distribution $\mathcal{W}^{-1}(I_p, p)$, then rescale $\Sigma_1 ,\Sigma_2$ so that $\left\vert\Sigma_1\right\vert = 1$ and $\left\vert\Sigma_2\right\vert\ = 1000$. This ensures the components of $q_1$ are more concentrated than the components in $q_2$.

We estimate $ \log r= \log Z_1 - \log Z_2$ in a similar fashion to the previous examples. For each choice of $p$, we run each method 30 times. Let $\hat r$ be a generic estimate of $r$. We use the MC estimate of RMSE of $\log \hat r$, i.e. $E((\log \hat r - r)^2)/(\log r)^2$, based on the repeated runs as the benchmark of performance for this example. For each repetition, we run each method with $N_1=N_2=6000$ independent samples from $q_1,q_2$ respectively. For our Algorithm \ref{algo_practical}, we parameterize $q_1^{(\phi)}$ as a Real-NVP with 20 coupling layers, and set $\lambda_1=\lambda_2=0.01$. For the rest of the methods, we use the deafult or recommended settings. The results are summarized in Table \ref{table: mixture_of_t}. We see our Algorithm \ref{algo_practical} outperforms all methods when $p \geq 40$.

\begin{table}[ht]
\centering
\begin{tabular}{ccccccc}
  \hline
$p$ & $f$-GAN &  GBS & Warp-III & Warp-U \\ 
  \hline
$5$ &  3.69\text{e-}{5} &  8.22\text{e-}{4} & 1.14\text{e-}{3} &  \bf{3.54\text{e-}{5}} \\ 
$10$ &   \bf{6.21\text{e-}{5}} &  1.74\text{e-}{3} & 5.15\text{e-}{3} & 6.42\text{e-}{5} \\ 
$20$ & 4.12\text{e-}{3} & 4.96\text{e-}{3} & 8.87\text{e-}{3} & \bf{1.69\text{e-}{3}}\\
$40$ & \bf{1.23\text{e-}{2}} &  4.05\text{e-}{2} & 9.01\text{e-}{2} & 5.75\text{e-}{2}\\
$60$ & \bf{1.21\text{e-}{2}} &  3.88\text{e-}{2} & 9.26\text{e-}{2} & 7.64\text{e-}{2}\\ 
$80$ & \bf{1.81\text{e-}{2}} & 5.20\text{e-}{2} & 1.59\text{e-}{1} & 6.05\text{e-}{2}\\
$100$& \bf{2.46\text{e-}{2}} &  8.14\text{e-}{2} & - & 4.78\text{e-}{1}\\
   \hline
\end{tabular}
\caption{The estimated RMSE of the $\log \hat r$ of each methods based on 30 repeated runs. The lowest estimated RMSE for each $p$ is in boldface. Warp-III does not converge for most of the repeated runs when $p = 100$ so we are not able to estimate its RMSE.}
\label{table: mixture_of_t}
\end{table}

\section{Computational cost of Algorithm \ref{algo_practical}}\label{appendix:computational_cost}

Comparing the computational cost of our Algorithm \ref{algo_practical} with existing methods and their existing implementations is not straightforward because of the very different
nature of GPU and CPU computing. In both examples, we compare the existing CPU implementations of Warp-III, Warp-U, GBS and a GPU implementation of our Algorithm 2 in term of wall clock time. We think this comparison is not unfair because there is no simple way to accelerate existing algorithms with a GPU, while training neural nets on GPU was a design element in implementing Algorithm \ref{algo_practical} using deep learning frameworks such as Torch \citep{paszke2017automatic}. If a user have access to both CPU and GPU, then we believe the wall clock time to some extent can be viewed as a natural metric of the time cost a user has to pay for the estimator. And the Precision per Second benchmark can be viewed as the cost-performance ratio of these methods. This measure is not a rigorous metric for comparing computation costs, but we believe it is at least an intuitive one for the users to get a rough idea of the time cost and efficiency of these algorithms. 

From the numerical examples we see $f$-GB scaled better with dimension than its competitors. 
For example, from Example 1 we see that even though Warp-III can be $30\sim 40$ faster to compute than our proposed method given the same amount of samples from $q_1,q_2$, its accuracy (measured in $MSE(\log \hat r)$) is orders of magnitude greater (worse) than our approach. When $p=48$, we find that Warp-III is not able to return a sensible estimate of $r$ even with 25 times more samples from $q_1,q_2$ than $f$-GB. In Example 2 we also find that Warp-III requires around 18 times more samples to achieve a similar level of accuracy as $f$-GB, and takes around 3 times longer to run. Therefore we believe the extra computational cost of our $f$-GB estimator is ``well-spent" as the numerical examples show that our Algorithm \ref{algo_practical} is able to return an estimate of $r$ with much higher precision than GBS, Warp-III and Warp-U and scales better with the dimension of the distributions.

As we acknowledge in Sec \ref{sec: conclusion}, if $q_1,q_2$ are simple structured and low dimensional, then users can get adequate precision more quickly using Warp-III or Warp-U. On the other hand, in speaking to users who report Bayes factors in applied Bayesian work, the overwhelming requirement was that the estimate be reliable, as the Bayes Factor value is sometimes the crux. In all the numerical examples we considered, $f$-GB never broke and produced accurate estimates. Warp III and GBS did break on larger problems. Warp-U took a similar amount of time to run compared with $f$-GB, but was less accurate. Therefore users may still prefer a method which is ``over-powered'' but more reliable.

\bibliographystyle{apalike}

\end{document}